\newcommand{\newauthor}[3]{
    \newcounter{#1comment}
    \setcounter{#1comment}{1}
    \expandafter\newcommand\csname #1\endcsname[1]{%
            \par\noindent
            \todo[inline, size = \small, backgroundcolor = {#3}, caption = {}]{
                \arabic{#1comment}:
                {##1} --~\textbf{#2}
            }
            \addtocounter{#1comment}{1}
    }
    \expandafter\newcommand\csname #1changed\endcsname[1]{%
        \ifdraft
            \colorbox{#3}{
                ##1
            }%
        \else
            ##1%
        \fi
    }
}
\newcommand{\pa}[1]{}
\newcommand{\kr}[1]{}
\DeclareMathOperator{\Char}{char}
\newcommand{\eps}{\varepsilon}
\newcommand{\eqcomma}{\enspace ,}
\newcommand{\eqperiod}{\enspace .}
\newcommand{\disjointunion}{\overset{.}{\cup}}
\newcommand{\prob}[2][]{\Pr_{#1}[#2]}
\newcommand{\Prob}[2][]{\Pr_{#1} \bigl[ #2 \bigr]}
\providecommand{\abs}[1]{\lvert#1\rvert}
\providecommand{\Abs}[1]{\bigl\lvert#1\bigr\rvert}
\newcommand{\ABS}[1]{\left\lvert#1\right\rvert}
\newcommand{\set}[1]{\{#1\}}
\newcommand{\setdescr}[2]{\{\,#1 \mid #2\,\}}
\newcommand{\floor}[1]{\lfloor #1 \rfloor}
\newcommand{\ceil}[1]{\lceil #1 \rceil}
\newcommand\restrict[2]{{
  \left.\kern-\nulldelimiterspace 
  #1 
  \vphantom{\big|} 
  \right|_{#2} 
  }}
\newcommand{\ind}[1]{\mathbbm{1}_{\set{#1}}}
\newcommand{\calB}{\mathcal{B}}
\newcommand{\calF}{\mathcal{F}}
\newcommand{\calG}{\mathcal{G}}
\newcommand{\calH}{\mathcal{H}}
\newcommand{\calP}{\mathcal{P}}
\newcommand{\calQ}{\mathcal{Q}}
\renewcommand{\E}{\mathbb{E}}
\renewcommand{\R}{\mathbb{R}}
\newcommand{\N}{\mathbb{N}}
\newcommand{\Z}{\mathbb{Z}}
\newcommand{\F}{\mathbb{F}}
\DeclareMathOperator{\Deg}{\sf{Deg}}
\DeclareMathOperator{\Size}{\sf{Size}}
\newcommand{\RefuteDeg}[2]{\Deg(#2 \vdash_{\text{\tiny #1}} \, \perp)}
\newcommand{\RefuteSize}[2]{\Size(#2 \vdash_{\text{\tiny #1}} \, \perp)}
\DeclareMathOperator{\Card}{Card}
\newcommand{\subformula}{$ is an affine restriction of $}
\newcommand{\frege}[1][]{$\mathscr{F}_{{#1}}$\xspace}
\newcommand{\psys}{$P$\xspace}
\newcommand{\pmf}[1]{\mathsf{PM}(#1)}
\newcommand{\pmax}[1]{q^{\mathsf{PM}}_{#1}}
\newcommand{\tsf}[1]{\tau(#1)}
\newcommand{\tsax}[1]{q^{\tau}_{#1}}
\newcommand{\pedef}{\tilde{\mathbb{E}}}
\newcommand{\pe}[2][]{\pedef_{#1}[#2]}
\newcommand{\vtex}[2][v]{(#1, #2)}
\newcommand{\lift}[1]{\mathsf{lift}(#1)}
\newcommand{\gts}{G}
\newcommand{\vts}{V(G)}
\newcommand{\ets}{E(G)}
\newcommand{\gpm}{H}
\newcommand{\epm}{E(H)}
\newcommand{\nonbip}[2]{$(#1, #2)$-odd-cycle-robust}
\newcommand{\degrich}[2]{$(#1, #2)$-max-degree-robust}
\newcommand{\DEGRICH}[2]{$\left(#1, #2\right)$-max-degree-robust}
\newcommand{\cdiam}[1][\alpha]{D^{\text{{\o}}}_{{#1}}}
\newcommand{\diam}[1]{\text{diam}(#1)}
\newcommand{\cross}[1]{$#1$-cross}
\newcommand{\embed}[1]{B_{#1}}
\newcommand{\sgood}{S^{\star}}
\newcommand{\totallength}[3][]{L^{{#1}}_{\text{disj}}(#2, #3)}
\newcommand{\branch}{U}
\newcommand{\branches}{\mathcal{U}}
\newcommand{\bFree}{\branches_{\text{free}}}
\DeclareMathOperator{\CSP}{CSP}
\title{Perfect Matching in Random Graphs
           is
           as Hard as Tseitin\TCSthanks{
           Supported by the
           Approximability and
           Proof Complexity
           project funded by the
           Knut and Alice Wallenberg
           Foundation.}}
\begin{document}
\maketitle

\begin{abstract}
  We study the complexity of proving that a sparse random regular
  graph on an odd number of vertices does not have a perfect matching,
  and related problems involving each vertex being matched some
  pre-specified number of times.
  We show that this requires proofs of degree $\Omega(n / \log n)$ in
  the Polynomial Calculus (over fields of characteristic $\ne 2$) and
  Sum-of-Squares proof systems, and exponential size in the
  bounded-depth Frege proof system.
  This resolves a question by Razborov asking whether
  the Lovász-Schrijver proof system requires~$n^\delta$ rounds to
  refute these formulas for some $\delta > 0$.
  The results are obtained by a
  worst-case to average-case reduction of these formulas relying on a
  topological embedding theorem which may be of independent interest.
\end{abstract}




\section{Introduction}

Proof complexity is the study of certificates of unsatisfiability,
initiated by Cook and Reckhow~\cite{cook79efficiency}
as a program to separate $\NP$
from $\coNP$. The main goal of this program is to prove size lower
bounds on proofs of unsatisfiability of logical formulas. This
is a daunting job -- indeed we are far
from proving general size lower bounds on certificates of
unsatisfiability.
As an intermediate step 
we study proof
systems with restricted deductive power
and prove size lower bounds for such restricted certificates of
unsatisfiability.
The most studied such proof system is resolution~\cite{Blake37Thesis} which is fairly well understood by now, see
e.g., the proof complexity book by Krajíček
\cite{Krajicek19ProofComplexity}.

But resolution is by far not the only proof
system. A closely related and quite general proof system is the
bounded depth Frege proof system \cite{cook79efficiency} which
manipulates propositional formulas of bounded depth. While we
have some results for the bounded depth Frege proof system, in this
introduction we instead focus on two other systems as these
were the primary motivation behind our work.
These are the two proof systems
Polynomial Calculus (PC) \cite{CEI96Groebner, ABRW04PRG} and
Sum-of-Squares (SoS)
\cite{Shor87, parrilo2000, lasserre2001}.
These proof systems do not rely on propositional logic, like
resolution or Frege, but rather on
algebraic reasoning and are examples of
so-called (semi-)algebraic proof systems (see
e.g. \cite{GHA02proofs}).

Both PC and SoS provide refutations of (satisfiability of) a set of
polynomial equations $\calQ = \setdescr{q_i(x) = 0}{i \in [m]}$ over
$n$ variables $x_1, \ldots, x_n$.  In the case of PC, these
polynomials can be over any field $\F$ (finite or infinite), and in
the case of SoS, these polynomials are over $\R$.
A key complexity
measure of a PC$_\F$ or SoS refutation of $\calQ$ is its degree, defined as
the maximum degree of any polynomial appearing in the refutation.
The degree of refuting
$\calQ$ in PC$_\F$ or
SoS, which we denote by $\RefuteDeg{PC$_\F$}{\calQ}$ and
$\RefuteDeg{SoS}{\calQ}$ respectively, is the minimum degree of
any PC$_\F$ or SoS refutation of $\calQ$. For Boolean systems of
equations, meaning that $\calQ$ contains the equations
$x_i^2 - x_i = 0$ for all $i \in [n]$,
strong enough degree lower bounds imply size
lower bounds in both PC$_\F$ \cite{CEI96Groebner,IPS99LowerBounds}
and SoS \cite{AH18SosTradeoff}, where the size of a
refutation is the total number of monomials
appearing in it. For finite $\F$ the proof system
PC$_\F$ is incomparable 
to SoS \cite{Razborov98, gri01xor, GHA02proofs} whereas SoS
can simulate PC$_\R$ by the recent result of
Berkholz~\cite{Berkholz18Relation}.

There is by now a large number of lower bound results for both
PC
\cite{Razborov98, IPS99LowerBounds, BGIP01LinearGaps, AR01nonbin,
  GL10Optimality, MN15},
and SoS
\cite{gri01xor, Schoenebeck08LinearLevel,
  MPW15SumOfSquaresPlantedClique, BHKKMP16clique,
  kmow17anycsp, AH18SosTradeoff,
  potechin20ordering, AGK20RefuteSemi},
with SoS in particular having received considerable
attention in recent years due to its close connection to the
Sum-of-Squares hierarchy of semidefinite programming,
a powerful ``meta-algorithm'' for
combinatorial optimization problems \cite{BS14SumsOfSquaresICM}.

In this paper we study the power (or lack thereof) of these proof
systems when it comes to refuting the perfect matching formula
$\pmf{G}$
defined over
sparse random graphs $G = (V, E)$ on an odd number of vertices.
This formula can be viewed as a system of linear equations over $\R$
on a set of Boolean variables: for each edge $e \in E$ there is
a variable $x_e \in \{0,1\}$ (indicating whether the edge is used in
the matching) and for each vertex $v \in V$ there is an equation
$\sum_{e \ni v} x_e = 1$.
Apart from being a natural well-studied problem on its own, the
perfect matching formula is interesting because of its close relation to two
other widely studied families of formulas, namely the pigeonhole principle (PHP), and
Tseitin formulas.

PHP asserts that $m$ pigeons cannot fit in $n < m$ holes (where each
hole can fit at most one pigeon). This can be viewed as a bipartite
matching problem on the complete bipartite graph with $m+n$ vertices,
where each vertex on the large side (with $m$ vertices) must be
matched at least once, and each vertex on the small side (with $n$
vertices) can be matched at most once.  There are many variants of
PHP (see e.g. the survey \cite{Razborov02ProofComplexityPHP}),
and the one closest to the perfect matching formula is the
so-called ``onto functional PHP'', in which each vertex on
both sides must be matched exactly once (rather than at least/at most
once).  Equivalently, this formula is simply the perfect matching formula
on a complete bipartite graph with $n+m$ vertices.
While most variants of PHP are hard for PC \cite{Razborov98, MN15}, the
onto functional PHP variant is in fact easy to refute 
in PC over any field \cite{Riis93Thesis}. In SoS, all variants
of PHP are easy to refute \cite{GHA02proofs}.

The Tseitin formula over a graph $G$ claims that
there is a
subgraph of $G$ such that each vertex has odd degree. As the
sum of the degrees of a graph is even, this formula is not
satisfiable if $G$ has an odd number of vertices. In contrast
to the PHP, the Tseitin formula is (almost) always hard: for PC$_\F$
over fields $\F$ of characteristic distinct from 2
\cite{BGIP01LinearGaps, AR01nonbin}
and SoS \cite{gri01xor}
these formulas require linear
degree if $G$ is a good vertex expander.
We cannot hope to prove degree lower bounds over
fields of characteristic 2 as the constraints become linear
and we can thus refute
the Tseitin formula using Gaussian elimination.
As the perfect matching formula $\pmf{G}$ implies the Tseitin formula,
PC over fields of characteristic $2$ can also easily refute $\pmf{G}$
for $G$ with an odd number of vertices.

In summary,
the perfect matching formula lies somewhere in between PHP and
Tseitin, of which the former is easy to refute in SoS (and easy to
refute in PC in the onto functional variant), and the latter is hard
to refute in SoS (as well as in PC with characteristic $\ne 2$).
Hence it is natural to wonder whether SoS or PC requires large degree to
refute the perfect matching formula over non-bipartite graphs.

The case of
perfect matching in the \emph{complete graph} on an odd number of vertices
(sometimes called the ``MOD 2 principle'') is well-understood in both
PC \cite{BGIP01LinearGaps} and SoS \cite{gri01xor,potechin17GoodStory},
requiring degree $\Omega(n)$ in both proof systems unless the
underlying field of PC is of characteristic 2.
For sparse graphs, less is known.  Buss
et al.~\cite{BGIP01LinearGaps} obtained worst-case lower bounds in PC
showing that there exist bounded degree graphs on $n$ vertices
requiring $\Omega(n)$ degree refutations. This is obtained by a
reduction from Tseitin formulas and while the work of Buss et
al.~predates the current interest in the SoS system, it is not hard to
see that the same reduction yields a similar $\Omega(n)$ degree lower
bound for SoS (details provided in \cref{sec:worst-case}).

However, for random graphs $G$ little is known
about the hardness of the perfect matching formula and, e.g.,
Razborov \cite{razborov2017width}
asked whether it is true that the Lovász-Schrijver hierarchy~\cite{LS91Cones} (which is weaker than SoS) requires
$n^{\eps}$ rounds to refute the perfect matching principle on a random
sparse regular graph with high probability.

\subsection{Our results}

We show that indeed the perfect matching principle requires large size
on random $d$-regular graphs (for some constant $d$) in the
Sum-of-Squares, Polynomial Calculus, and bounded-depth Frege proof
systems.  Our results apply more generally to Tseitin-like formulas defined by
linear equations over the reals induced by some graph, so let us now define these.

For a graph $G = (V,E)$ and integer vector $b \in \Z^V$, consider the
system of linear equations over the reals having a variable $x_e$ for
each $e \in E$, and the equation $\sum_{e \ni v} x_e = b_v$ for each
$v \in V$.  Let $\Card(G, b)$ denote this system of linear equations
along with the Boolean constraints $x_e \in \{0,1\}$ (viewed as a
quadratic equation $x_e^2-x_e = 0$) for each edge -- in
\cref{sec:formulas} the encoding is discussed in more detail.  Note
that $\Card(G, \vec{1})$ corresponds to the perfect matching problem
in $G$ and in general $\Card(G, b)$ can be viewed as asserting that
$G$ has a ``matching'' where each vertex is matched exactly $b_v$
times.  Note that whenever $\sum_{v \in V} b_v$ is odd, $\Card(G, b)$
is unsatisfiable (since the equations imply
$\sum_{v} b_v = 2\sum_{e} x_e$ which is even)\footnote{As pointed out
  to us by Aleksa Stanković, decidability of $\Card(G, b)$ is in
  polynomial time: starting with the all $0$ assignment, iteratively
  build up an assignment that may match some vertices fewer times than
  required. If there is a satisfying assignment, then there is always
  an augmenting path along which the current assignment can be
  improved, i.e., more edges set to $1$, by a similar argument as for
  matchings \cite{Berge57}. Such a path can be found in polynomial
  time by an adaptation of the blossom algorithm \cite{edmonds65}.}.

We focus on the special case of $\Card(G, b)$ where $G$ is $d$-regular
and $b = \vec{t} = (t, t, \ldots, t)$ is the all-$t$ vector for some
$t \in [d]$.  If in this scenario both $n$ and $t$ are odd (implying
$d$ is even) then as observed above $\Card(G, \vec{t})$ is
unsatisfiable.  On the other hand if $n$ is odd and $t$ is even then
$\Card(G, \vec{t})$ is always satisfiable (because such $G$ admits a
$2$-factorization).  The remaining case when $n$ is even may be either
satisfiable or unsatisfiable, but for a random $d$-regular $G$ with
$d \ge 3$, $\Card(G, t)$ will be satisfiable with high probability
(because such $G$ can be partitioned into perfect matchings with high
probability).

If we let \frege[D] denote a Frege system restricted to depth-$D$ formulas
(see \cref{sec:pcplx}),
then our main theorem is as follows.

\begin{theorem}
  \label{thm:pm avg lower bound}
  There is a constant $d_0$ such that for all constants $d \ge d_0$ and
  $t \in [d]$, the following holds asymptotically almost surely
  over a random $d$-regular graph $G$ on $n$ vertices.
  \begin{enumerate}
  \item $\RefuteDeg{PC$_\F$}{\Card(G, \vec{t})} = \Omega(n/\log n)$
    for any fixed field $\F$ with $\Char(\F) \ne 2$.
  \item $\RefuteDeg{SoS}{\Card(G, \vec{t})} = \Omega(n/\log n)$.
  \item There is a $\delta > 0$ such that
    $\RefuteSize{\frege[D]}{\Card(G, \vec{t})} =
    \exp\big(\Omega(n^{\delta/D})\big)$,
    for all
    $D \le \frac{\delta \log n}{\log\log n}$.
  \end{enumerate}
\end{theorem}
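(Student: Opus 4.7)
The plan is a worst-case to average-case reduction: show that a random $d$-regular graph $G$ topologically contains a much smaller graph $H$ which is a good vertex expander, and then argue that any efficient refutation of $\Card(G,\vec{t})$ can be converted, via an affine restriction, into an efficient refutation of the Tseitin formula on $H$. Since the Tseitin formula on a bounded-degree vertex expander on $N$ vertices is already known to require degree $\Omega(N)$ in $\PC_\F$ for $\Char(\F)\ne 2$ \cite{BGIP01LinearGaps,AR01nonbin} and in SoS \cite{gri01xor}, and size $\exp(N^{\Omega(1/D)})$ in depth-$D$ Frege, choosing $N=\Omega(n/\log n)$ will give the stated bounds.

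The combinatorial heart is the topological embedding step. I would prove that asymptotically almost surely a random $d$-regular graph $G$ contains a subdivision of an expander graph $H$ on $N=\Omega(n/\log n)$ vertices, where every edge $\{u,v\}$ of $H$ is realised as an internally vertex-disjoint path $P_{uv}$ in $G$, all these paths together avoiding a set $F$ of vertices that is itself matchable (by some fixed partial $\vec{t}$-assignment using only edges not touching the embedding). The $\log n$ loss is natural: each edge of $H$ is realised by a path of typical length $O(\log n)$, so the total length of the embedding is $O(N\log n)$ and we need this to be at most $n$. The expander $H$ itself can simply be taken to be another random bounded-degree graph on $N$ vertices, so it inherits good expansion from standard random graph arguments.

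The reduction step uses this embedding to impose an affine restriction on the variables of $\Card(G,\vec{t})$. On the free vertices in $F$, fix the edge variables according to the pre-chosen partial $\vec{t}$-matching so that each vertex equation $\sum_{e\ni v}x_e=t$ is satisfied. Along each embedded path $P_{uv}$ of length $\ell$, parametrise the $\ell$ edge variables by a single new variable $y_{uv}\in\{0,1\}$ (and the fixed values of other incident edges at the internal vertices) so that each internal vertex equation is automatically satisfied and the parity of $\sum_{e\in P_{uv}}x_e$ at the endpoints is exactly $y_{uv}$. Under this affine substitution, the equation at an embedded vertex $v\in V(H)$ reduces, modulo the Boolean axioms, to $\sum_{u\sim_H v}y_{uv}\equiv c_v \pmod 2$ for an appropriate charge $c_v$ whose total parity is odd; this is precisely the Tseitin formula on $H$. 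Since affine restrictions only decrease degree in $\PC$ and SoS, and only decrease depth-$D$ Frege size (for the same $D$), any refutation of $\Card(G,\vec{t})$ yields a refutation of Tseitin on $H$ of no greater complexity, which gives the claimed lower bounds after plugging in $N=\Omega(n/\log n)$.

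The main obstacle is the embedding theorem: we need to find the topological copy of $H$ in a random $d$-regular graph together with a matchable complement $F$, and the paths must be internally vertex-disjoint \emph{and} avoid $F$. A naive greedy argument (pick the endpoints, then route the paths one by one through the remainder) loses too many vertices; controlling this loss is what forces the $\log n$ factor and what will require a careful probabilistic analysis, probably by exposing the graph edge-by-edge while routing, or by coupling with the configuration model. A secondary technical point is ensuring the parity bookkeeping in the affine substitution is consistent: the partial $\vec{t}$-matching on $F$, together with the fixed values along the paths for $y=0$, must produce the correct parities at the embedded vertices so that the resulting Tseitin instance is genuinely unsatisfiable (i.e.\ has odd total charge), which will dictate small adjustments to the construction of $H$ and the partial assignment.
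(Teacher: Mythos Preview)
Your high-level plan---topological embedding plus affine restriction, with the $\log n$ loss coming from path lengths and the complement of the embedding made matchable---matches the paper's architecture. But the reduction step has a genuine gap. The constraints of $\Card(G,\vec{t})$ are exact linear equalities $\sum_{e\ni v}x_e=t$ over $\R$ (or over $\F$), not parities. When you parametrise the edges along a path $P_{uv}$ by a single Boolean variable $y_{uv}$ via the alternating substitution $x_e\mapsto y_{uv}$ or $1-y_{uv}$, the equation at an embedded vertex $v$ becomes $\sum_{u\sim_H v}y_{uv}=c_v$ as a real equality, which is a perfect-matching-type constraint of $\Card(H,\vec{c})$, \emph{not} the Tseitin constraint $\sum_{u}y_{uv}\equiv c_v\pmod 2$. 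So after the restriction you are left with another instance of $\Card$ on a smaller graph, and invoking Tseitin lower bounds directly on $H$ is unjustified: you still need worst-case hardness of $\Card$ (equivalently $\pmf{H}$), which is exactly the question.

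The paper closes this gap with an intermediate step you omit: it first establishes \emph{worst-case} lower bounds for $\pmf{H}$ on specific bounded-degree graphs $H$ via the Buss et al.\ gadget reduction (blow up each vertex of a Tseitin-hard expander or grid to a $(d{+}1)$-clique, so that a low-degree pseudoexpectation or short Frege proof for $\pmf{H}$ would translate back to one for Tseitin). It then embeds \emph{that} $H$ into $G$, and the affine restriction yields $\pmf{H}$ whose hardness is already secured. Two further points where your sketch is thin: the alternating substitution is only consistent at both endpoints when every embedded path has \emph{odd} length, so controlling the parities of the path lengths in the topological minor is essential (this is the ``furthermore'' clause of the paper's embedding theorem and requires the host graph to be odd-cycle-robust, not merely an expander); and the case $t>1$ is not handled by the same restriction but by a separate contiguity argument that peels off $\lfloor t/2\rfloor$ edge-disjoint $2$-factors to reduce to $t=1$ on a sparser random regular graph.
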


The interesting case of the above theorem is when both $n$ and $t$ are
odd so that $\Card(G, t)$ is unsatisfiable; in the other cases
$\Card(G, \vec{t})$ is satisfiable with high probability and the
lower bounds are vacuous.

By known size-degree tradeoffs for Polynomial Calculus
\cite{IPS99LowerBounds, CEI96Groebner} and Sum-of-Squares
\cite{AH18SosTradeoff} the degree
lower bounds in \cref{thm:pm avg lower bound} imply near-optimal size
lower bounds of $\exp\big(\Omega(n/\log^2 n)\big)$.

Apart from the perfect matching formula, another special case of
$\Card(G, \vec{t})$ is the so-called even coloring formula,
introduced by Markström \cite{Markstrom06Locality},
which is the case when $t = \deg(v)/2$.
An open problem of Buss and Nordström
\cite[Open Problem 7.7]{BN20ProofCplx}
asks whether these formulas
are hard on spectral expanders for Polynomial Calculus over fields of
characteristic $\ne 2$. \cref{thm:pm avg lower bound} partially resolves
this open problem, establishing that it is hard on random graphs
(rather than on all spectral expanders). See \cref{sec:conclusion} for
some further remarks on what parts of our proof use the randomness
assumption.

We will give a more detailed overview of how the results are obtained
in \cref{sec:proof overview} below, but for now let us mention that we obtain them
using embedding techniques, as introduced to proof complexity by
Pitassi et al.~\cite{pitassi16frege} (see discussion of related work
in \cref{sec:related work}).
In particular for, say, the SoS lower
bound, our starting point is the $\Omega(n)$ \emph{worst-case} degree
lower bound in sparse graphs, and we then prove that these hard
instances can be embedded in a random $d$-regular graph in such a way
that the hardness of refuting the formula is preserved.

To achieve this, one of the components we need is a new graph
embedding theorem which may be of independent interest.
Very loosely speaking,
we show that any bounded-degree graph with $O(n/\log n)$
edges can be embedded as a \emph{topological minor} in any
bounded-degree $\alpha$-expander on $n$ vertices and sufficiently many
edges.
In addition, for our application to perfect matching (and more
generally the $\Card(G, \vec{t})$ formulas), we need to be able to
control the parities of the path lengths used in the topological
embedding, and we show that as long as every large linear-sized subgraph
contains an odd cycle of length $\Omega(1/\alpha)$, this is indeed
possible.

Somewhat informally, we prove the following.

\begin{theorem}[Informal statement of \cref{thm:odd-minor}]
  \label{thm:odd-minor-informal}
  Let $G$ be a
  constant degree
  $\alpha$-expander on
  $n$ vertices.
  If $H$ is a graph with at most $\frac{\eps n}{\log n}$
  edges and
  $\Delta(H) \ll \alpha^2 \cdot d(G)$,
  then $G$ contains
  $H$ as a topological minor.
  Furthermore, if all large vertex induced subgraphs of $G$
  contain an odd cycle of length $\Omega(1/\alpha)$,
  then one can choose the parities
  of the length of all the edge embeddings in the minor.
\end{theorem}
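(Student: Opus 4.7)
The plan is a sequential embedding: first fix images of the vertices of $H$ in $G$ and then route the edges of $H$ one at a time as internally vertex-disjoint paths. Concretely, I would pick an arbitrary injection $\varphi\colon V(H)\hookrightarrow V(G)$ (trivial since $|V(H)|\le 2|E(H)|=O(n/\log n)$), enumerate $E(H)=\{e_1,\dots,e_m\}$, and construct paths $P_1,\dots,P_m$ inductively, maintaining the invariant that the set $U_i$ of internal vertices used by $P_1,\dots,P_i$ satisfies $|U_i|\le \eps n/2$ and that $\varphi(V(H))\setminus\{\text{endpoints of }e_{i+1}\}$ is disjoint from the interior of every path.

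To embed $e_{i+1}=(u,v)$, I would run simultaneous breadth-first searches from $\varphi(u)$ and $\varphi(v)$ inside the graph $G':=G\setminus\bigl(U_i\cup\varphi(V(H))\setminus\{\varphi(u),\varphi(v)\}\bigr)$. Because $G$ is a constant-degree $\alpha$-expander and we have removed only $O(\eps n)$ vertices, $G'$ retains expansion $\Omega(\alpha)$ away from any fixed small set; the degree hypothesis $\Delta(H)\ll\alpha^{2}d(G)$ guarantees that even the immediate neighborhoods of $\varphi(u),\varphi(v)$ in $G'$ still contain $\Omega(d(G))$ free edges, so the BFS trees can actually leave their roots despite the many previously routed paths meeting those images. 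A standard ball-growing argument then yields balls of radius $r=O(\log n/\alpha)$ around each endpoint whose sizes exceed $n/2$, forcing them to intersect and producing a path of length $O(\log n)$. Since $m\cdot O(\log n) = O(\eps n)$, choosing $\eps$ small enough preserves the invariant.

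Parity is enforced as a corrective postprocessing step. Having produced a path $P$ of length $\ell$ from the step above, if $\ell$ does not have the desired parity, I would use the odd-cycle hypothesis: the subgraph $G''$ obtained from $G'$ by additionally removing all but one internal vertex $w$ of $P$ is still induced on $n-O(\eps n)$ vertices, and hence by hypothesis contains an odd cycle $C$ of length $O(1/\alpha)$. I then connect $w$ to $C$ by a shortest path $Q$ inside $G''$ (of length $O(\log n/\alpha)$, again by expansion), and replace the trivial traversal of $w$ by: walk $Q$, go once around $C$, walk $Q$ backward. This adds $|Q|+|C|+|Q|=|C|\pmod 2$ to the length, flipping parity, at the cost of $O(\log n/\alpha)$ extra internal vertices; summed over all edges, this is still absorbed by the $\eps n/2$ budget.

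The main obstacle is the joint analysis of the two removals that define $G'$ at each step: the cumulative interior $U_i$ and the images $\varphi(V(H))$ with their heavy concentration of previously routed paths emanating from them. Expansion of $G$ alone does not suffice because $\varphi(u)$ and $\varphi(v)$ may already be nearly saturated by paths of earlier edges incident to $u,v$ in $H$; this is precisely what the hypothesis $\Delta(H)\ll\alpha^{2}d(G)$ is designed to defeat, by leaving $(1-o(1))d(G)$ free neighbors at every image at all times. Making these two effects interact cleanly—and verifying that the inductive invariant together with the odd-cycle property is preserved under the vertex deletions dictated by both the path routing and the parity detour—is where the bulk of the technical work lies; once this robustness is in place, the construction itself is routine.
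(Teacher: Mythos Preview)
Your plan has two genuine gaps, both of which the paper's proof is specifically engineered to avoid.

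\textbf{The fixed injection does not survive the routing.} You embed each $x\in V(H)$ as a single vertex $\varphi(x)$ chosen \emph{in advance}, and then rely on the hypothesis $\Delta(H)\ll\alpha^{2}d(G)$ to argue that $\varphi(u)$ always retains $\Omega(d(G))$ free neighbours. But that hypothesis only bounds the number of paths \emph{incident} to $u$; it says nothing about paths for edges $(x,y)$ with $x,y\neq u$ whose interiors happen to pass through $N(\varphi(u))$. Since $U_i$ has size $\Theta(\eps n)$ and $\varphi(u)$ has only $O(1)$ neighbours, nothing in your invariant prevents all of them from lying in $U_i$. The paper deals with exactly this failure mode by embedding each vertex not as a point but as a \emph{cross}: a centre together with $r\approx \deg_H(x)/\alpha$ disjoint connected branches, each of size $s=\Theta(\log n)$. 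A branch is large enough that, if it still expands into the free set $C$, one can launch a path from it; and if every branch of a cross has lost expansion, the entire cross is torn out and the vertex is re-embedded later. This backtracking, together with a careful potential argument on the discarded sets $A,A'$, is what makes the analysis go through. Your scheme has neither the large launch pads nor the re-embedding mechanism, and the sentence ``this is precisely what the hypothesis $\Delta(H)\ll\alpha^{2}d(G)$ is designed to defeat'' misidentifies the obstacle.

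\textbf{The parity detour does not produce a simple path.} Your fix is to pick an internal vertex $w$ of $P$, find an odd cycle $C$ and a connecting path $Q$ in the leftover graph, and replace the passage through $w$ by ``walk $Q$, go once around $C$, walk $Q$ backward''. That is a closed walk based at $w$: every vertex of $Q$ is visited twice and the endpoint of $Q$ on $C$ is visited twice. The result is not a path, so it cannot serve as an edge embedding in a topological minor. The paper's parity control (its \cref{lem:odd-even-path}) instead routes the path so that it meets the odd cycle in two distinct vertices, and then chooses which of the two arcs of the cycle to traverse; since the cycle has odd length the two arcs have opposite parities, and either choice yields a genuine simple path. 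This is also why the paper needs the odd cycle to have length at least $\Omega(1/\alpha)$ rather than merely to exist: the cycle must be long enough to be hit by two vertex-disjoint connections. Your lower bound on the cycle length plays no role in your argument, which is a symptom of the same issue.
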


This generalizes various classical results of a similar flavor
(e.g.~\cite{kr96ShortPaths,kn2018MinorsNoCut,CR2019large,krivelevich2018expanders}).
See the next subsection for a discussion comparing these (and other) existing
embedding results to ours.

As a further illustration of the applicability of this theorem we
partially resolve a question of
Filmus et al.~\cite{FLMNV13TowardsUnderstandingPC}.
They prove
that with high probability for random $d$-regular graphs $G$, where $d \ge 4$,
PC requires \emph{space}
$\Omega(\sqrt{n})$ to refute the Tseitin
formula, and conjecture that PC
in fact requires space $\Omega(n)$. On the other hand, Galesi et
al.~\cite{GKT19SpaceWidth} considered it plausible that the
$\Omega(\sqrt{n})$ bound is optimal. We (almost) resolve this question by
proving $\Omega(n /\log n)$ space lower bounds for the Tseitin formula
defined on vertex expanders, but only of large enough (constant)
average degree.

\begin{restatable}{theorem}{SpaceTheorem}\label{thm:space}
  For all $\alpha > 0$ there is a $d_0$ such that the following
  holds. Let $G$ be a bounded degree $\alpha$-expander on $n$
  vertices of average degree at least $d_0$. Then
  over any field $\F$ it holds that PC$_\F$ requires space
  $\Omega(n/\log n)$ to refute the Tseitin formula defined on $G$.
\end{restatable}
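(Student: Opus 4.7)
The plan is to extend the worst-case-to-expander reduction template used earlier in the paper for the degree measure to the space measure, using the topological embedding of \cref{thm:odd-minor-informal} as the main tool. The starting point is a worst-case PC space lower bound for Tseitin formulas: fix a bounded-degree vertex expander $H$ on $m = \Theta(n/\log n)$ vertices together with an odd charge vector $c_H$ such that $\tau(H, c_H)$ requires PC space $\Omega(m)$ over the field in question. Linear-in-$|V(H)|$ worst-case space bounds for Tseitin on bounded-degree expanders are the natural algebraic ingredient here.

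I would then apply \cref{thm:odd-minor-informal} to embed $H$ into $G$ as a topological minor $\varphi \colon V(H) \to V(G)$ together with internally disjoint paths $\{P_e\}_{e \in E(H)}$; since $|E(H)| = O(n/\log n)$ and $H$ has bounded degree, the hypotheses are met once $d_0$ is taken sufficiently large in terms of $\alpha$ and the degree of $H$. Parity control is not required for this particular reduction, since the Tseitin constraint at a zero-charge degree-$2$ vertex reduces to $x_{e_1} = x_{e_2}$ regardless of the path length.

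The reduction itself proceeds by restriction. Define a charge vector $c_G$ on $V(G)$ by $c_G(\varphi(v)) := c_H(v)$ for $v \in V(H)$ and $c_G := 0$ elsewhere, and let $\rho$ be the restriction that sets every edge variable of $\tau(G, c_G)$ outside $\bigcup_e E(P_e)$ to $0$. The zero-charge internal vertices of each path $P_e$ then propagate a single common value $y_e$ along the whole path, and substituting this identification transforms $\tau(G, c_G)$ under $\rho$ into an isomorphic copy of $\tau(H, c_H)$, with the Tseitin constraint at $\varphi(v)$ becoming exactly the Tseitin constraint at $v$. PC space is monotone under restriction and injective variable renaming, so a space-$s$ refutation of $\tau(G, c_G)$ yields a space-at-most-$s$ PC refutation of $\tau(H, c_H)$; combining with the base bound gives the claimed $\Omega(m) = \Omega(n/\log n)$.

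The main obstacle is the base lower bound. The random-graph bound of Filmus et al.\ only yields $\Omega(\sqrt{|V(H)|})$, which would propagate to $\Omega(\sqrt{n/\log n})$ and fall short of the stated theorem; the nontrivial algebraic content is therefore a worst-case near-linear PC space lower bound for Tseitin on bounded-degree expanders over arbitrary fields, while the embedding-and-restriction argument above is a routine adaptation of the degree version of this paper's reduction to the space measure.
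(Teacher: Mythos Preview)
Your embedding-and-restriction strategy is essentially what the paper does, and your observation that parity control is unnecessary for Tseitin is correct. Two points deserve comment.

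First, on the base lower bound you flag as the main obstacle: the linear worst-case PC space bound you need is \emph{already} in Filmus et al., the same paper whose $\Omega(\sqrt{n})$ random-graph bound you cite. Their hard instance is a specific bounded-degree \emph{multigraph} (obtained by doubling every edge of a suitable constant-degree graph), and for that instance they prove $\Omega(|V(H)|)$ PC space. The paper simply takes this $H$ as the seed. This creates a small wrinkle you did not anticipate: since $H$ is a multigraph, one must observe that the embedding theorem still applies (it does, by inspection of the proof), which the paper notes explicitly.

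Second, there is a genuine gap in your reduction as written. The theorem concerns $\tau(G)$, which in this paper means the all-odd-charge formula; you instead establish hardness of $\tau(G,c_G)$ for a charge vector that is zero off the embedded vertices. These are not the same statement. The paper avoids this by working with the all-odd charge on $G$ from the outset: rather than zeroing all non-embedding edges, it assigns them values so that no vertex axiom is falsified (the standard device from Pitassi et al.), and then substitutes variables along each path---with negations where the residual parity at an internal vertex demands it---to exhibit $\tau(H)$ as an affine restriction of $\tau(G)$. Your route can be salvaged by inserting a variable-flipping step that reduces $\tau(G,\vec{1})$ to $\tau(G,c_G)$ (using that $G$ is connected and the two charge vectors have equal total parity, so a suitable $T$-join exists), but as stated this step is missing.
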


Let us mention that the constant hidden in the lower bound
$\Omega(n/\log n)$ depends on the maximum degree of $G$. Unlike
\cref{thm:pm avg lower bound}, vertex expansion is sufficient and we
require no randomness.  This lower bound is obtained by embedding a
worst-case instance, due to Filmus et al., into a vertex expander. We
provide more details in \cref{sec:space}.

\subsection{Related work}
\label{sec:related work}

\paragraph{Proof Complexity Lower Bounds Using Embedding Techniques}

There are a few other papers that employ embedding techniques in proof
complexity \cite{pitassi16frege, GI19, GIRS19FregeTseitinGeneral,
  IRSS19}, though none of these use the embedding techniques in
connection with algebraic systems like PC or SoS.  As far as we are aware
the first such work is that of Pitassi et
al.~\cite{pitassi16frege}, who apply embedding techniques to obtain
Tseitin lower bounds for Frege Systems, and their use is most similar to ours.
They rely on a
result of Kleinberg and Rubinfeld \cite{kr96ShortPaths} that
guarantees that any small enough graph is a minor of an expander (note
that we require \emph{topological} minors).  This is in contrast to
the other results that rely on the fundamental result that a graph of
large enough treewidth contains the grid graph as a minor \cite{RS86}.

\paragraph{Connection to Constraint Satisfaction Problems}

For a $k$-ary predicate $P: \{0,1\}^k \rightarrow \{0,1\}$, an
instance of the $\CSP(P)$ problem consists of a set of
constraints over $n$ Boolean variables $x_1, \ldots, x_n$, each
constraint being an application of $P$ on a list of $k$ variables.
The $\Card(G, \vec{t})$ formulas we study can be viewed as
instances of $\CSP(P)$ where each variable appears in exactly two
constraints and $P: \{0,1\}^d \rightarrow \{0,1\}$ is the constraint
that exactly $t$ of the $d$ inputs are $1$.

CSP problems have been extensively studied throughout the years, and
fairly general conditions under which $\CSP(P)$ is hard for PC and SoS
are known \cite{AR01nonbin,kmow17anycsp}. To be more accurate, these
results are for the more general $\CSP(P^{\pm})$ problem in which
each constraint is an application of $P$ on $k$ \emph{literals}
rather than variables.
In particular, Alekhnovich and Razborov \cite{AR01nonbin} showed that if $P$
is, say,
$8$-immune\footnote{$P$ is $r$-immune over $\F$ if there is no
degree-$r$ polynomial $q: \set{0,1}^k \to \F$
such that for all satisfying assignments
$\alpha \in \set{0,1}^k$ of $P$ it
holds that $q(\alpha) = 0$.} over the
underlying field $\F$, then
any PC$_\F$ refutation of
a random $\CSP(P^{\pm})$ instance with a
linear number of constraints
requires
degree $\tilde{\Omega}(n)$.
For SoS, Kothari et al.~\cite{kmow17anycsp}
showed that, if there
exists a pairwise
uniform distribution\footnote{A distribution $\mu$ over
$\{0,1\}^k$ is said to be pairwise uniform if for all $1 \le i < j \le
k$, the marginal distribution of $\mu$ restricted to coordinates $i$
and $j$ is uniform.}
$\mu$ over $\{0,1\}^k$ supported
on satisfying assignments of $P$, then with high probability a random
$\CSP(P^{\pm})$ instance on $m = \Delta n$ constraints needs degree
$\tilde{\Omega}(n/\Delta^2)$ to be refuted by the SoS proof system.

The predicates we study are linear equations over $\R$ and are neither
immune nor do they support a pairwise uniform distribution. As such,
our results provide CSP lower bounds that fall outside the
immunity and pairwise independence frameworks, which are the
source of a majority of existing CSP lower bounds in PC and SoS.  To
the authors' best knowledge the only other attempt to overcome this
framework in the average-case setting is the paper by Deshpande et
al.~\cite{DMOSS19}, showing lower bounds for the basic SDP of random
regular instances of $\CSP(\operatorname{NAE}_3^{\pm})$, where
$\operatorname{NAE}_3$ is the not-all-equal predicate on three bits.
In contrast to their work we show (almost) linear degree lower bounds
for the stronger Sum-of-Squares hierarchy, but only for a very wide
predicate of some large (but constant) arity.

\paragraph{Embedding Theorems}

There is a rich literature on embeddings of graphs as minors or
topological minors into expander graphs. We focus here on the ones
most closely related to \cref{thm:odd-minor-informal}.

The classical result of Kleinberg and Rubinfeld \cite{kr96ShortPaths} shows
that a regular expander $G$ on $n$ vertices contains every
graph $H$ with $O(n/\polylog(n))$ vertices and edges as a minor.
Krivelevich and Nenadov~\cite{kn2018MinorsNoCut} simplified and strengthened this by
improving the bound on the
size of $H$ to $O(n/\log n)$.  These results differ from ours in two
key ways: (i) we want topological minors, and (ii) we want to be able
to control the parities of the path lengths in the embedding.  We now
discuss these two aspects separately.

Results on topological minors, while somewhat less common, also
exist.  A result similar to ours is the result of Broder et
al.~\cite{BFSU96} that with high probability the random graph $\calG(n, m)$ on $n$
vertices and $m = \Omega(n \log n)$ edges contains any graph $H$ with
$\Delta(H) = O(m/n)$ and at most $O(n/\log n)$ edges (and at most
$n/2$ vertices) as a topological minor.

For our second property, the possibility to choose the
parities of the paths used in the topological embedding, we are not
aware of any previous work studying this question.
A related notion are so called \emph{odd minors} which are more general
than topological minors with odd length
paths. This notion has been considered in connection with a
strengthening of Hadwiger's Conjecture, see e.g., the survey by
Seymour~\cite{Seymour2016Hadwiger}.
This line of research mostly considers complete odd minors, e.g.,
\cite{GGRSV09OddMinor}, and thus is not directly applicable to our situation.

Recently Draganić et al.~\cite{DKN20Embed} independently obtained
a new embedding theorem similar to ours.
They assume the somewhat stronger property that the
host graph $G$ is a spectral expander but also obtain a stronger
conclusion:
each path of the topological embedding is of equal (odd) length and the
embedding even works in an adversarial setting.
Namely, the adversary
is allowed to fix the embedding of the vertices, as long as no
neighborhood in $G$ contains too many vertex embeddings.

The
embedding theorem of Draganić et al. can be used to implement our
proof strategy.
The results are unaffected by this change except
in the setting of \cref{thm:space}. There, instead of considering
vertex expanders, we need to consider regular spectral expanders
with the benefit that the required average degree $d_0$ is
considerably decreased.

\paragraph{Extended Formulations}

There has been a
fair amount of work studying the
\emph{extension complexity} of the
perfect matching polytope
\cite{yannakakis88, Rothvoss17},
but these lower bounds do not
have any direct implications for the
PC and SoS degree of the perfect
matching formula. Let us elaborate.

Suppose we have a convex polytope $\calP$ consisting of
many facets. A natural question is whether there is
simpler polytope $\calQ$ in a higher dimensional space
so that $\calP$ is the ``shadow'' of~$\calQ$, or a bit more formally that
there is a
linear projection $\pi$ such that $\pi(\calQ) = \calP$. Such a~$\calQ$
is then called a linear extension of $\calP$ and the
extension complexity of a polytope $\calP$ is the minimum number
of facets of any linear extension of $\calP$.

Rothvoss \cite{Rothvoss17} proved that the perfect matching polytope
of a complete $n$-node graph has extension compexity $\exp(\Omega(n))$
for $n$ even. This result is incomparable to our lower bounds: as the
graphs we consider do not contain a perfect matching, their perfect
matching polytope is empty and thus has extension complexity 0.
Rather than linear programs, i.e., polytopes, we consider semidefinite
programs which are more expressive. The extension complexity in the
semidefinite setting has also been studied before
\cite{LRS15,BBHPRRWZ17} but these results are incomparable for the
same reason just mentioned. While these results are incomparable,
it is worth mentioning that there is a connection between
Sherali-Adams (a proof system weaker than SoS) and extended
formulations \cite{CLRS16, KMR17}.

\subsection{Overview of Proof Techniques}
\label{sec:proof overview}

As previously mentioned, our high level approach is to first obtain
worst-case perfect matching lower bounds and to then embed these into
the $\Card(G, \vec{t})$ formula for $G$ a random regular graph.
The worst-case lower bounds are
obtained by a gadget reduction from Tseitin to perfect
matching, due to Buss et al.~\cite{BGIP01LinearGaps}. Using known
lower bounds for the Tseitin formula in the corresponding proof
systems
\cite{BGIP01LinearGaps, gri01xor, Has20} we
then obtain the desired worst-case lower bounds
for the perfect matching formula.

A naïve attempt to obtain average-case lower bounds from a sparse
worst-case instance $H$ on $n$ vertices is to topologically embed the
worst-case instance into a random regular graph $G$ on $O(n \log n)$
vertices using \cref{thm:odd-minor-informal}. One would then like to
argue that $\pmf{G}$ is hard.

Suppose each path $p_{uv}$ in the embedding of $H$ in $G$
corresponding to some edge $\set{u,v} \in E(H)$ is of odd length. Then
it is straightforward to verify that the perfect matching formula
defined over the embedding is at least as hard to refute as the
worst-case instance $\pmf{H}$: map each variable $y_e$, for
$e \in p_{uv}$, alternatingly to $x_{uv}$ or $\bar x_{uv}$ such that
the first and last edges of $p_{uv}$ are mapped to $x_{uv}$ (using
that $p_{uv}$ is of odd length). This simple projection maps the
perfect matching formula defined over the embedding of $H$ to
$\pmf{H}$ and thus shows that the hardness of $\pmf{H}$ should be
inherited.

But having such a
worst-case instance as a topological minor is \emph{not} sufficient to
conclude that $\pmf{G}$ is hard.
For instance $G$ may contain an isolated vertex
and it is then trivial to refute $\pmf{G}$. On the other hand
if we could
guarantee that there is a perfect matching $m$
in the subgraph of $G$ induced by the vertices \emph{not}
used in the embedding of $H$, we can conclude that $\pmf{G}$ is hard:
hit the formula with the restriction corresponding to the matching $m$
and by the argument from the previous paragraph we are
basically left with the worst-case formula.

Thus if we can ensure that $H$ is a topological minor of $G$
with the two additional properties that (i) every path used in
the embedding of $H$ has odd length, and (ii) there exists a perfect
matching in the subgraph of $G$ induced by the vertices \emph{not}
used in the embedding of $H$, then we obtain
average-case lower bounds for
the perfect matching formula $\pmf{G} \equiv \Card(G, \vec{1})$.
The lower bounds for
$\Card(G, \vec{t})$ for $t > 1$ can then be obtained by a reduction to
the $t=1$ case: after fixing the value of the edges in $\floor{t/2}$
cycle covers of $G$ to $1$, a restriction of $\Card(G, \vec{t})$ is
obtained which behaves like $\Card(G', \vec{1})$ for a somewhat
sparser random regular graph $G'$.

Let us elaborate a bit further on the properties required from the
topological minor of $H$ in $G$. As mentioned previously, our
embedding theorem can ensure that all paths are of odd length.
To ensure the second property, we in fact do not
embed $H$ directly into $G$ but rather into a suitably chosen vertex
induced subgraph $G[T]$ with the crucial property that for any set of
vertices $U \subseteq T$ of odd cardinality the induced subgraph $G[V \setminus U]$
has a perfect matching. As the embedding of $H$ will
consist of an odd number of vertices we then obtain property (ii) above.
Since we now want to apply \cref{thm:odd-minor-informal} not to $G$ but to
$G[T]$, we have to ensure that $G[T]$ satisfies all the conditions of that
theorem. We prove what we refer to as the \hyperref[lem:partition]{Partition Lemma},
which asserts that an induced subgraph $G[T]$ exists that satisfies both the perfect matching
property described above, as well as all conditions of
\cref{thm:odd-minor-informal}.  The proof of the Partition Lemma
relies primarily on the Lovász Local Lemma and spectral bounds to
obtain the desired properties.

For the proof of our embedding theorem (\cref{thm:odd-minor-informal}), we
extend an argument
due to Krivelevich and Nenadov~\cite{kn2018MinorsNoCut}
(see also \cite{krivelevich2018expanders})
for ordinary minors (rather than topological minors).
In order to obtain a minor embedding of $H$ in $G$, the idea there
is to embed the vertices one by one from $H$ in $G$  while maintaining an ``unused''
subgraph $G'$ of $G$ which is a slightly worse expander than $G$ is.
During this process it may happen that some vertex embedding cannot be
connected to a neighbor. If this happens, the embedding of that vertex is removed and
it needs to be embedded again.

In order to obtain topological embeddings, we need to adapt this
procedure.  Since we now want vertex-disjoint paths connecting the
embedded vertices, we would ideally like to embed each vertex of $H$
as a large star, and then embed the edges of $H$ as paths connecting
different leaves of these stars.  In order to make this work out,
rather than embedding the vertices as actual stars, we embed them as
``star-like'' subgraphs of $G$ (more precisely defined in
\cref{def:cross}) that consist of a central vertex connected to many
large vertex-disjoint connected subgraphs of $G$ and show
(\cref{lem:cross}) that we can always embed the vertices of $H$ as such
``star-like'' subraphs of $G$.

With this in place, obtaining control of the parities of the path
lengths used in the embedding (under the assumption on odd cycles in
\cref{thm:odd-minor-informal}) is relatively straightforward: almost
by definition, when embedding an edge of $H$ into a path of $G$, we
can route it via an odd cycle and can then choose which of the two
halves of the odd cycles to use, obtaining two possible embeddings
with different path length parity, and can choose the one with the
appropriate parity.

\subsection{Organization}

We give some preliminaries in \cref{sec:preliminaries}, formally
defining the used proof systems and encodings used, and recalling some
general background results.  In \cref{sec:avg-case} we provide
most of the proof of \cref{thm:pm avg lower bound} while deferring
the proofs of two key results, the aforementioned Partition Lemma and
our embedding theorem.  The proof of the Partition Lemma is given in
\cref{sec:partition}, and the proof of the embedding theorem can be
found in \cref{sec:odd-minor}.

In \cref{sec:worst-case} we recall the reduction of Buss et
al.~\cite{BGIP01LinearGaps} from Tseitin to perfect matching and show that it
yields lower bounds not only for Polynomial Calculus but also for
Sum-of-Squares and bounded depth Frege.

\section{Preliminaries}
\label{sec:preliminaries}

Natural logarithms (base $\mathrm{e}$) are denoted by $\ln$, whereas base
$2$ logarithms are denoted by $\log$.
For integers $n \ge 1$ we introduce the shorthand
$[n] = \set{1, 2, \ldots, n}$ and sometimes identify singletons
$\set{u}$ with the element $u$. For a set $U$ we denote the power set of
$U$ by $2^{U}$ and a transversal $A$ of a family of sets
$\mathcal{B} = \set{B_1, B_2, \ldots B_n}$ is a set such that there is
a bijective function $f : A \to \mathcal{B}$ satisfying that
$a \in f(a)$ for all elements $a \in A$.

\subsection{Proof Systems}
\label{sec:pcplx}

Let
$\calP = \set{p_1 = 0, \ldots, p_m = 0}$ be a system of polynomial
equations over the set of variables
$X = \set{x_1, \ldots, x_n, \bar{x}_1, \ldots, \bar{x}_n}$.
Each $p_i$ is called an axiom, and
throughout the paper we always assume $\calP$ includes all axioms
$x_i^2 - x_i$ and $\bar{x}_i^2 - \bar{x}_i$, ensuring that the variables
are boolean, as well as the axioms $1 - x_i - \bar{x}_i$, making sure that the
``bar'' variables are in fact the negation of the ``non-bar''
variables.

\paragraph{Sum-of-Squares (SoS)}
is a static semi-algebraic proof system.
An SoS proof of $f \ge 0$ from $\calP$ is a sequence of polynomials
$\pi = (t_1, \ldots, t_m; s_1, \ldots, s_a)$ such that
\begin{align}
  \sum_{i \in [m]} t_i p_i +
  \sum_{i \in [a]} s^2_i = f \eqperiod \label{eq:sos-def}
\end{align}
The \emph{degree} of a proof $\pi$ is
\begin{align}
  \Deg(\pi) =
  \max
  \set{
    \max_{i \in [m]}
    \deg(t_i) + \deg(p_i),
    \max_{i \in [a]}
    2 \deg(s_i)
  } \eqperiod
\end{align}
An \emph{SoS refutation of $\calP$} is an SoS proof of $-1 \ge 0$
from $\calP$, and the SoS degree to refute $\calP$ is the minimum
degree of any SoS refutation of $\calP$: if we let $\pi$ range over
all SoS refutations of $\calP$, we can write
$\RefuteDeg{SoS}{\calP} = \min_\pi \Deg(\pi)$.

\begin{definition}[Pseudoexpectation]
  A degree $d$ pseudo-expectation for $\calP$
  is a linear operator $\pedef$ on the
  space of real polynomials of degree at most $d$, such that
  \begin{enumerate}[label=$(\roman*)$]
  \item $\pe{1} = 1$,
  \item $\pe{tp} = 0$ for all polynomials $t$ and $p \in \calP$ with
    $\deg(t) + \deg(p) \le d$, and
  \item $\pe{s^2} \ge 0$ for all polynomials $s$ of degree $\deg(s) \le d/2$.
  \end{enumerate}
\end{definition}

It is easy to check that if there is a degree $d$ pseudo-expectation
for $\calP$, then there is no SoS refutation of $\calP$ of degree at
most $d$: if $\pedef$ is applied to both sides of \eqref{eq:sos-def},
where $f = -1$, then the right side is equal to $-1$ while the left is
greater than or equal to $0$.

The size of an
SoS refutation $\pi$, $\Size(\pi)$,
is the sum of the number of monomials in each
polynomial in $\pi$ and the size of refuting $\calP$ is the minimum
size over all refutations
$\RefuteSize{SoS}{\calP} = \min_{\pi} \Size(\pi)$.

\paragraph{Polynomial Calculus} is a dynamic proof system
operating on polynomial equations over a field $\mathbb{F}$. Let
$\calP$ be over $\mathbb{F}$.
Polynomial Calculus over $\mathbb{F}$ (PC$_{\mathbb{F}}$)
consists of the derivation rules
\begin{itemize}
\item linear combination
  \AxiomC{$p = 0$}\AxiomC{$q = 0$}\BinaryInfC{$\alpha p + \beta q = 0$}\DisplayProof,
  where $p, q \in \mathbb{F}[X]$ and $\alpha, \beta \in \mathbb{F}$, and
\item multiplication \AxiomC{$p = 0$}\UnaryInfC{$xp = 0$}\DisplayProof,
  where $p \in \mathbb{F}[X]$ and $x \in X$.
\end{itemize}
A PC refutation of $\calP$ is a sequence of polynomials
$\pi = t_1, \ldots, t_\ell$
such that $t_\ell = 1$ and each polynomial $t_i$ is
either in $\calP$ or can be derived by one of the derivation rules
from earlier polynomials. The degree of a refutation is the maximum
degree appearing in the sequence $\Deg(\pi) = \max_{i \in [\ell]}
\Deg(t_i)$ and the PC$_{\mathbb{F}}$ degree of refuting $\calP$ is the
minimum degree required of any refutation
$\RefuteDeg{PC$_{\mathbb{F}}$}{\calP} = \min_{\pi} \Deg(\pi)$.
Similarly, the size of a refutation $\pi$ is the sum of the number of
monomials in each line of $\pi$ and the
PC$_{\mathbb{F}}$ size of refuting $\calP$ is the
minimum size required of any refutation
$\RefuteSize{PC$_{\mathbb{F}}$}{\calP} = \min_{\pi} \Size(\pi)$.

\paragraph{Frege System}
Let us describe a Frege system due to Shoenfield, as presented in
\cite{UF96SimplifiedLowerBounds}.
As Frege systems over the basis
$\lor$, $\land$ and $\lnot$
can polynomially simulate each other
\cite{cook79efficiency},
the details of the system are not essential and hold for any Frege
system over the mentioned basis.

Schoenfield's Frege system works over the basis
$\lor$ and $\lnot$. We treat
the conjunction $A \land B$ as an abbreviation for
the formula $\lnot(\lnot A \lor \lnot B)$ and let 0, 1 denote
``false'' and ``true'' respectively.
If $A$ is a formula over variables $p_1, \ldots, p_m$, and
$\sigma$ maps the variables $p_1, \ldots, p_m$ to formulas
$B_1, \ldots, B_m$, then $\sigma(A)$ is the formula obtained from $A$ by
replacing the variable $p_i$ with $B_i = \sigma(p_i)$ for all
$i \in[m]$.

A \emph{rule} is a sequence of formulas written as
$A_1, \ldots, A_{k} \vdash A_0$.
If every truth assignment
satisfying all of $A_1, \ldots, A_{k}$ also satisfies $A_0$, then the
rule is \emph{sound}.
A formula $C_0$
is inferred from $C_1, \ldots, C_k$ by the rule
$A_1, \ldots, A_k \vdash A_0$ if there is a function
$\sigma$ mapping the variables
$p_1, \ldots, p_m$, over which $A_0, \ldots, A_k$ are defined, to
formulas $B_1, \ldots, B_m$ such that for all
$i \in \set{0, \ldots, k}$ it holds that
$C_i = f(A_i)$.

The Frege system \frege that we consider consists of the following
rules:
\begin{align*}
  &\vdash p \lor \lnot p &&\text{Excluded Middle,}\\
  p &\vdash q \lor p &&\text{Expansion rule,}\\
  p \lor p &\vdash p &&\text{Contraction rule,}\\
  p \lor (q \lor r) &\vdash (p \lor q) \lor r
  && \text{Associative rule,}\\
  p \lor q, \lnot p \lor r &\vdash q \lor r
  && \text{Cut rule.}
\end{align*}

An \emph{\frege-refutation} of an unsatisfiable formula
$A = C_1 \land \ldots \land C_m$ is a sequence of formulas
$F_1, F_2, \ldots, F_\ell$ such that
$F_\ell = 0$ and
every formula $F_i$
is either one of $C_1, \ldots, C_m$ or inferred from
formulas
$F_{j_1}, \ldots, F_{j_k}$ earlier
in the sequence by a rule in \frege.
As \frege is sound and complete a formula $A$ has a
refutation if and only if it is unsatisfiable.

The size of a formula is the number of connectives in the formula and
the size of a refutation~$\pi$, denoted by $\Size(\pi)$, is the sum of
the sizes of all formulas in the refutation. The depth of $\pi$
is the maximum depth of any formula $F \in \pi$. We denote by
\frege[d] the proof system \frege restricted to formulas of
depth at most $d$.

\subsection{Propositional Formulas}
\label{sec:formulas}

As we are only interested in constant degree graphs all our axioms
are of constant size. Hence the precise encoding of the axioms is not
significant as we can change the encoding in constant size/degree.

As the encoding is not essential, we view a propositional formula $\calF$
over the Boolean variables $x_1, \ldots, x_n$
as a family of
functions $\calF = \set{f_1, \ldots, f_m}$ where each
$f_i: \set{0,1}^n \to \set{\text{True},\text{False}}$
is a function that
depends on a constant
number of variables.
The formula $\calF$ is satisfied by an assignment
$\alpha \in \set{0,1}^n$ if under $\alpha$ all
functions evaluate to True: $f_i(\alpha) = \text{True}$ for all
$i \in [m]$.

For a map
$\rho:
\set{x_1, \ldots, x_n} \to
\set{0,1, x_1, \ldots, x_n, \bar{x}_1, \ldots, \bar{x}_n}$
and a function
$f: \set{0,1}^n \to \set{\text{True}, \text{False}}$,
denote by $\restrict{f}{\rho}$ the function defined by
$\restrict{f}{\rho}(x_1, \ldots, x_n) = f(\rho(x_1), \ldots, \rho(x_n))$.
We extend
this notation to formulas in the obvious way, i.e., $\restrict{\calF}{\rho} =
\set{\restrict{f_1}{\rho},
  \restrict{f_2}{\rho}, \ldots,
  \restrict{f_m}{\rho}
}$.

Two formulas $\calF$ and $\calF'$ are equivalent, denoted by
$\calF \equiv \calF'$ if the formulas are element-wise equivalent,
disregarding functions that are constant True.
We say that a formula $\calF'$ is an \emph{affine restriction of $\calF$} if there is a map
$\rho:
\set{x_1, \ldots, x_n} \to
\set{0,1, x_1, \ldots, x_n, \bar{x}_1, \ldots, \bar{x}_n}$
such that
$\calF' \equiv \restrict{\calF}{\rho}$.
The following lemma states that a formula $\calF$ is at least as hard
as any of its affine restrictions.

\begin{lemma}\label{lem:subformula}
  Let $\calF, \calF'$ be formulas such that $\calF' \subformula \calF$
  and each axiom of~$\calF$ depends on a constant number of
  variables. Then,
  \begin{enumerate}[label=$(\roman*)$]
  \item for any field $\mathbb{F}$ it holds that
    $\RefuteDeg{PC$_{\mathbb{F}}$}{\calF} \in
    \Omega\big(\RefuteDeg{PC$_{\mathbb{F}}$}{\calF'}\big)$,
  \item $\RefuteDeg{SoS}{\calF} \in
    \Omega\big(\RefuteDeg{SoS}{\calF'}\big)$, and
  \item
    for all $d \ge 2$ it holds that
    $\RefuteSize{\frege[d]}{\calF} \in
    \Omega\big(\RefuteSize{\frege[d+1]}{\calF'}\big)$.
  \end{enumerate}
\end{lemma}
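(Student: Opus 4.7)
The plan is to refute $\calF'$ by post-processing a refutation of $\calF$. Fix a map $\rho$ witnessing $\calF' \equiv \restrict{\calF}{\rho}$ and extend it naturally to the bar-variables by setting $\rho(\bar{x}_i)$ to be the negation of $\rho(x_i)$, which keeps the axioms $1 - x_i - \bar{x}_i$ consistent after substitution. Given any refutation $\pi$ of $\calF$, I would apply $\rho$ line by line to produce a candidate refutation $\pi|_\rho$ of $\calF'$. The key observation driving all three items is that the derivation apparatus of each of the three proof systems is closed under substitutions of variables by literals: the linear-combination and multiplication rules of PC remain valid after substitution and the degree cannot increase; the SoS identity $\sum_i t_i p_i + \sum_j s_j^2 = -1$ remains an identity term-by-term after substitution, with squares remaining squares and degrees not increasing; and each Frege rule is itself given as a substitution schema, so substituting into an instance yields another instance of the same rule. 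Thus the structural part of $\pi$ transports to a candidate $\calF'$-derivation at essentially zero cost in both size and degree.

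The remaining task is to bridge between the axioms of $\calF$ after substitution and the axioms of $\calF'$. Because each axiom $p$ of $\calF$ depends on $O(1)$ variables, the restricted object $p|_\rho$ is a polynomial (respectively, formula) on $O(1)$ variables. Since $\calF' \equiv \restrict{\calF}{\rho}$, each such $p|_\rho$ is either the trivial constraint (constant True / the zero polynomial) and can simply be dropped from $\pi|_\rho$, or it is semantically equivalent on its $O(1)$ relevant variables to a constant-size combination of axioms of $\calF'$ together with the Boolean axioms $x^2-x$ and $1-x-\bar{x}$. In either case one can prepend to $\pi|_\rho$ an $O(1)$-cost derivation of $p|_\rho$ from $\calF'$: in PC and SoS this costs only $O(1)$ additional degree (any constraint on $O(1)$ variables is derivable in constant degree, and SoS in particular can interpolate between equivalent $O(1)$-sized constraints via a single $O(1)$-degree Positivstellensatz certificate), while in Frege it costs $O(1)$ additional symbols per axiom and at most one extra level of depth.

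Summing up gives $\RefuteDeg{PC$_{\mathbb{F}}$}{\calF'} \le \RefuteDeg{PC$_{\mathbb{F}}$}{\calF} + O(1)$, $\RefuteDeg{SoS}{\calF'} \le \RefuteDeg{SoS}{\calF} + O(1)$, and $\RefuteSize{\frege[d+1]}{\calF'} = O\bigl(\RefuteSize{\frege[d]}{\calF}\bigr)$, from which the $\Omega(\cdot)$ statements of items $(i)$--$(iii)$ follow (using that any non-trivial refutation has degree at least $1$, so an additive $O(1)$ is absorbed into a multiplicative constant). The main obstacle is the depth bookkeeping for item $(iii)$: one must verify carefully that translating an $O(1)$-sized restricted axiom into an equivalent combination of $\calF'$-axioms, and substituting literals for variables inside nested connectives, truly inflates the depth by at most one level rather than accumulating a $+1$ per axiom. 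The other pieces --- substitution invariance of the derivation rules and constant-cost derivations of $O(1)$-width constraints --- are routine.
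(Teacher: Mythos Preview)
Your proposal is correct and follows essentially the same approach as the paper: apply the restriction $\rho$ line by line to a refutation of $\calF$, observe that the derivation rules of each system are closed under such substitutions, and bridge the restricted axioms to the actual axioms of $\calF'$ at constant cost using implicational completeness on $O(1)$ variables. The paper's own proof is terser (it simply invokes implicational completeness and an ``inductive argument over the proof''), whereas you spell out the substitution-closure for each system and flag the depth bookkeeping in the Frege case; but the underlying argument is the same.
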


\begin{proof}
  Suppose we have a refutation $\pi$ of $\calF$ in one of the
  mentioned proof systems. We want to show that if we hit the proof
  with the restriction $\rho$ such that $\restrict{\calF}{\rho} \equiv \calF'$
  then we obtain a proof $\pi' = \restrict{\pi}{\rho}$ of $\calF'$.

  First we need to ensure that we can derive all the axioms of
  $\calF'$. These may be encoded in a different manner, but as
  these proof systems are implicationally complete, and each axiom
  only depends on a constant number of variables, this can be done in
  constant degree (constant size).

  This shows that the SoS degree of the resulting refutation is at
  most a constant factor larger. For Polynomial Calculus and Frege the
  statement is readily verified by an inductive argument over the
  proof.
\end{proof}

For concreteness let us also define the encoding of the formulas that we
are interested in.

\paragraph{Perfect Matching and $\Card(G, \vec{b})$}
The Perfect Matching formula $\pmf{G}$ encodes the claim that the
graph $G$ contains a perfect matching. For every edge $e \in E(G)$
introduce a boolean variable $x_e \in \set{0, 1}$ and add for every
vertex $v \in V(G)$ an axiom claiming that precisely one incident edge
is set to true. As a polynomial over $\R$, we encode this claim as
\begin{align}
  \pmax{v} = \sum_{e \ni v} x_e - 1 \eqcomma
\end{align}
which is satisfied under an assignment $\alpha$
if $\pmax{v}(\alpha) = 0$. Over other fields we encode this as a sum over indicator
polynomials (see example for Tseitin below).  For the Frege proof system we encode the
vertex axiom as the propositional formula
\begin{align}
  \pmax{v} =
  \bigvee_{e \ni v} x_e
  \wedge
  \bigwedge_{
    \substack{
      e, e' \ni v\\
      e \neq e'}
  }
  \bar{x}_e \vee \bar{x}_{e'} \eqperiod
\end{align}

The formula $\Card(G, \vec{b})$ is encoded in a similar fashion: in
the polynomial encoding replace the 1 with $b_v$, whereas in the
propositional encoding we let the latter $\wedge$ range over edge-tuples of
size $b_v + 1$.

\paragraph{Tseitin Formula}
The Tseitin formula $\tsf{G}$ claims that the edges of the graph $G$
can be labeled by $0, 1$ such that the number of 1-labeled edges incident
to any vertex is odd.
For every edge $e \in E(G)$ introduce a boolean
variable $y_e \in \set{0, 1}$, denote the set of variables corresponding to edges
incident to $v$ by $Y_v = \set{y_e \mid v \in e}$ and let
$A_v \subseteq \set{0, 1}^{Y_v}$ contain all assignments to the
variables $Y_v$ that set an odd number of variables to 1.
We encode the claim that an odd number of edges incident to
$v \in V(G)$ are set to 1 as the polynomial
\begin{align}
  \tsax{v} = \sum_{\alpha \in A_v} \ind{Y_v = \alpha} - 1 \eqcomma
\end{align}
where
$
  \ind{Y_v = \alpha} =
\prod_{
  \substack{
    y \in Y_v\\
    \alpha(y) = 1}
} y
\prod_{
  \substack{
    y \in Y_v\\
    \alpha(y) = 0}
} \bar{y}
$
is the indicator polynomial that is 1 iff the variables in $Y_v$ are
set according to
$\alpha$. As before, we also add the boolean axioms to ensure that the
variables take values in $\set{0,1}$.

For the Frege system we encode the claim that an odd number of
edges incident to $v \in V(G)$ is set to $1$ as the propositional
formula
\begin{align}
  \tsax{v} = \bigvee_{\alpha \in A_v} \ind{Y_v = \alpha} \eqcomma
\end{align}
where the indicator is now encoded as the formula
$
  \ind{Y_v = \alpha} =
  \bigwedge_{
    \substack{
      y \in Y_v\\
      \alpha(y) = 1}
  } y
  \,\wedge\,
  \bigwedge_{
    \substack{
      y \in Y_v\\
      \alpha(y) = 0}
  } \bar{y}$.

\subsection{Graph Theory}
\label{sec:prelim-graph}

This paper only considers simple, undirected graphs:
all graphs have no self-loops nor multiple edges.
For a graph $G = (V, E)$
the neighborhood of a vertex $u \in V$ is
$N(u) = \setdescr{v \in V}{\set{u, v} \in E}$,
the neighborhood of a set of vertices $U \subseteq V$ is
$N(U) = \bigcup_{u \in U} N(u)$ and for sets
$U, W \subseteq V(G)$ the neighborhood of
$U$ in $W$ is $N(U, W) = N(U) \cap W$.
We denote by $\deg(v) = |N(v)|$ the degree of a vertex $v \in V$, by
$\Delta(G)$ the maximum degree, $\delta(G)$ the minimum degree
and by $d(G)$ the average degree of $G$.
The edges between two vertex sets $U, W \subseteq V$ are denoted
by
$
  E(U, W) = \setdescr{\set{u, w} \in E}{u \in U, w \in W}
$.
For a set $U \subseteq V$, we denote by $G[U] = (U, E(U, U))$ the
\emph{induced subgraph} of $U$ in $G$. For a
set $T \subseteq V$ we also use $G \setminus T$ as a shorthand for
the induced subgraph $G[V \setminus T]$.
%
For a path $p$ in $G$ we denote by $|p|$ the number of edges and by
$V(p) \subseteq V(G)$ the set of vertices of $p$.  For two vertices
vertices $u, v \in V(p)$, we let $p[u, v]$ denote the subpath of $p$
between (and including) the vertices $u$ and $v$.
The distance between two vertices $u, v \in V$ is the length of the
shortest path from $u$ to $v$ and the distance between two sets
$U, W \subset V$ is the minimum distance between any pair of vertices
$u \in U$ and $w \in W$.  Let $\diam{G}$ denote the diameter of $G$,
that is, the maximum distance between any two vertices in $G$.
For a vertex set
$U \subseteq V$, and an integer $r \in \N$, let
$B^G_r(U) \subseteq V(G)$
be the \emph{ball around $U$ of radius $r$ in $G$}:
$B^G_r(U)$ contains all vertices
$v \in V$ that are at distance at
most $r$ from $U$.

A graph $G$ on $n$ vertices
is an \emph{$\alpha$-expander}
(has \emph{vertex expansion} $\alpha$)
if for all sets $U \subseteq V(G)$ of size
$|U| \le n/2$
it holds that
$|N(U, V \setminus U)| \ge \alpha |U|$.
%
We denote the
\emph{uniform distribution
over $d$-regular graphs on $n$ vertices}
by $\calG(n,d)$ and tacitly assume throughout this paper
that $n d$ is even.
%
A graph $G$ contains $H$ as a \emph{topological minor} if there is an
injective map $\sigma: V(H) \to V(G)$ and for every
$\set{u,v} \in E(H)$ there is a path $p_{u v} \subseteq G$ from
$\sigma(u)$ to $\sigma(v)$ that is pairwise vertex-disjoint from all
other paths except in the endpoints. The paths $p_{u v}$ are the
\emph{edge embeddings} of the minor.

Let us record the well-known fact that vertex expanders have small
diameter.

\begin{lemma}[\cite{krivelevich2018expanders}]\label{lem:exp-diam}
  Let $G$ be an $\alpha$-expander on $n$ vertices. Then the diameter
  of $G$ is upper bounded by
  $\big\lceil\frac{2(\log{n}-1)}{\log(1+\alpha)}\big\rceil + 1
  = O_\alpha(\log{n})$.
\end{lemma}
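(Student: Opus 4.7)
The plan is the standard ball-growing argument for vertex expanders. The key observation is that for any vertex $u \in V(G)$ and any radius $r$ with $|B^G_r(u)| \le n/2$, applying the $\alpha$-expansion property to the set $U = B^G_r(u)$ yields
\[
|B^G_{r+1}(u)| \;=\; |B^G_r(u)| + |N(B^G_r(u), V\setminus B^G_r(u))| \;\ge\; (1+\alpha)\,|B^G_r(u)|\eqperiod
\]
So as long as the ball has not yet exceeded half the vertices, each increment of the radius multiplies its size by at least $1 + \alpha$.

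Starting from $|B^G_0(u)| = 1$, a straightforward induction then gives $|B^G_r(u)| \ge (1+\alpha)^r$ for every $r$ up to the first radius at which the ball surpasses $n/2$. Setting $R = \lceil (\log n - 1)/\log(1+\alpha)\rceil$, we have $(1+\alpha)^R \ge n/2$, and hence either $|B^G_R(u)| > n/2$ outright, or $|B^G_R(u)| \le n/2$ in which case the induction forces $|B^G_R(u)| \ge (1+\alpha)^R \ge n/2$; one further step of expansion then yields $|B^G_{R+1}(u)| \ge (1+\alpha)(n/2) > n/2$. Either way, a ball of radius $R' \le R$ (with a possible $+1$ to handle the borderline equality case) around any vertex contains strictly more than $n/2$ vertices.

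Given any two vertices $u, v \in V(G)$, the two balls $B^G_{R'}(u)$ and $B^G_{R'}(v)$ each have size strictly greater than $n/2$, so their sizes sum to more than $n$ and by pigeonhole they share at least one vertex $w$. Concatenating a shortest $u$-to-$w$ path with a shortest $w$-to-$v$ path gives a path of length at most $2R'$ in $G$, so $\text{dist}_G(u,v) \le 2R'$. Maximizing over $u, v$ gives the bound $\diam{G} \le \lceil 2(\log n - 1)/\log(1+\alpha)\rceil + 1$, where the $+1$ absorbs the borderline case described above and the use of $\lceil 2x \rceil$ versus $2\lceil x \rceil$. Since $\log(1+\alpha)$ is a positive constant once $\alpha > 0$ is fixed, this simplifies to $O_\alpha(\log n)$.

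No step is really an obstacle; the only mild subtleties are (i) correctly handling the borderline case where a ball has size exactly $n/2$ so that one needs a final $+1$ to force strict intersection of the two balls, and (ii) being careful that the induction $|B^G_r(u)| \ge (1+\alpha)^r$ only needs the hypothesis $|B^G_{r-1}(u)| \le n/2$ (not a chain of such hypotheses), which is automatic since we stop using it as soon as the ball first surpasses $n/2$.
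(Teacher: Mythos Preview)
Your argument is the standard ball-growing proof and is correct. Note that the paper does not actually prove this lemma; it is stated with a citation to \cite{krivelevich2018expanders} and used as a black box, so there is no in-paper proof to compare against. The argument you give is precisely the one that underlies the cited result.
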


As this constant will show up in a few places, let
$\cdiam = \frac{2}{\log(1+\alpha)} + 3$ and hence
$\diam{G} \le \cdiam \cdot \log{n}$, if $G$ is an
$\alpha$-expander.

The following lemma states that even if a small set of vertices is
removed from a vertex expander, large sets still have many vertices at
small distance.

\begin{lemma}\label{lem:exp-ball-remove}
  Let $G$ be an $\alpha$-expander on $n$ vertices. Then for all
  $r \ge 0$ and all disjoint $S, T \subseteq V(G)$ satisfying
  $|T| \ge \frac{2}{\alpha} |S|$ it holds that
  $|B_r^{G \setminus S}(T)| \ge \min \set{n/2, (1 + \alpha/2)^r |T|}$. 
\end{lemma}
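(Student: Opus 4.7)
The plan is to prove this by induction on $r$, with the key idea being that at each step the expansion of $G$ buys us growth factor $(1+\alpha/2)$ rather than $(1+\alpha)$, with the loss accounting for the (few) neighbors that might land in the removed set $S$.

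For the base case $r = 0$, we have $B_0^{G \setminus S}(T) = T$ and $(1+\alpha/2)^0 |T| = |T|$, so the inequality is immediate. For the inductive step, let $T_r = B_r^{G \setminus S}(T)$ and assume $|T_r| \ge (1+\alpha/2)^r |T|$. If $|T_r| \ge n/2$, there is nothing to prove since $|T_{r+1}| \ge |T_r| \ge n/2$. Otherwise $|T_r| \le n/2$, so the $\alpha$-expansion property of $G$ gives $|N(T_r, V \setminus T_r)| \ge \alpha |T_r|$.

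The set $T_{r+1}$ contains $T_r$ together with all neighbors in $V \setminus (T_r \cup S)$, so
\begin{align}
  |T_{r+1}| \ge |T_r| + |N(T_r, V \setminus T_r)| - |S| \ge |T_r| + \alpha |T_r| - |S| \eqperiod
\end{align}
Now the crucial observation is that $T_r \supseteq T$ and hence $|T_r| \ge |T| \ge \tfrac{2}{\alpha}|S|$, so $|S| \le \tfrac{\alpha}{2} |T_r|$. Substituting gives $|T_{r+1}| \ge (1 + \alpha/2) |T_r| \ge (1+\alpha/2)^{r+1} |T|$, which closes the induction.

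There is no real obstacle beyond being careful about two technicalities: the expansion hypothesis of $G$ applies only to sets of size at most $n/2$ (handled by the case split above, terminating the induction once $|T_r|$ exceeds $n/2$), and the bound $|S| \le \tfrac{\alpha}{2}|T_r|$ must use the monotonicity $|T_r| \ge |T|$ rather than the initial assumption $|T| \ge \tfrac{2}{\alpha}|S|$ applied to $T$ alone — the latter is exactly what lets the loss $|S|$ be absorbed into the expansion gain at every step of the induction, not just the first.
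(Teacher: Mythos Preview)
Your proof is correct and follows essentially the same approach as the paper's: an induction on $r$ using the inequality $|B_{r+1}^{G\setminus S}(T)| \ge (1+\alpha)|B_r^{G\setminus S}(T)| - |S| \ge (1+\alpha/2)|B_r^{G\setminus S}(T)|$, valid because $|S| \le \tfrac{\alpha}{2}|T| \le \tfrac{\alpha}{2}|B_r^{G\setminus S}(T)|$, and stopping once the ball reaches size $n/2$. The paper states this in two lines; your version is the same argument spelled out more explicitly.
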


\begin{proof}
  Using expansion and $|S| \le \frac{\alpha}{2} |T| \le
  \frac{\alpha}{2} |B_{r}^{G \setminus S}(T)|$ we have that for all $r
  \ge 0$
  \begin{align*}
    |B_{r+1}^{G \setminus S}(T)|
    &\ge (1+\alpha)|B_{r}^{G \setminus S}(T)| - |S|
    \ge (1+\alpha/2)|B_{r}^{G \setminus S}(T)| \eqcomma
  \end{align*}
  unless $B_r^{G \setminus S}(T)$ is already as large as $n/2$.
\end{proof}

A simple consequence of this is that two large sets are
connected by short paths even after the removal of a small set of vertices.

\begin{corollary}\label{cor:exp-diam-remove}
  Let $G$ be an $\alpha$-expander on $n$ vertices. Then for all sets
  $S, T, U \subseteq V(G)$ satisfying that $T, U \neq \emptyset$, that
  $S \cap (T \cup U) = \emptyset$, and
  $|T|, |U| \ge \frac{2}{\alpha} |S|$ it holds that in $G \setminus S$
  the distance between $T$ and $U$ is at most
  $\cdiam[\alpha/2]\log n$.
\end{corollary}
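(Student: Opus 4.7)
The plan is to mimic the standard ``two balls meet'' argument used to bound the diameter of an expander (\cref{lem:exp-diam}), but to replace the raw expansion step with \cref{lem:exp-ball-remove} so that the growth survives the deletion of $S$. The hypothesis $|T|, |U| \geq \frac{2}{\alpha}|S|$ is tailored precisely to allow \cref{lem:exp-ball-remove} to be applied with either $T$ or $U$ in the role of the seed set.

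Concretely, I would choose $r = \lceil \log n / \log(1+\alpha/2) \rceil$, so that $(1+\alpha/2)^r \cdot |T| \geq n/2$ and $(1+\alpha/2)^r \cdot |U| \geq n/2$. Two applications of \cref{lem:exp-ball-remove} inside $G \setminus S$, one seeded at $T$ and one at $U$, then yield
\begin{align}
  |B_r^{G \setminus S}(T)|,\ |B_r^{G \setminus S}(U)| \geq n/2.
\end{align}
Both balls are contained in $V(G) \setminus S$, which has cardinality $n - |S|$, so inclusion--exclusion gives $|B_r^{G \setminus S}(T) \cap B_r^{G \setminus S}(U)| \geq n/2 + n/2 - (n - |S|) = |S|$. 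Whenever $|S| \geq 1$ this already forces a common vertex, and any such vertex yields a path of length at most $2r$ from $T$ to $U$ in $G \setminus S$.

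The only minor edge case is $|S| = 0$, where the two balls might exactly partition $V(G)$ with empty intersection. There I would invoke connectivity of the expander $G$ (which follows from $\alpha > 0$) to exhibit an edge between the two halves, giving a path of length at most $2r + 1$. In either case the distance is at most $2r + 1 \leq \frac{2 \log n}{\log(1+\alpha/2)} + 3 \leq \cdiam[\alpha/2] \log n$ for $n \geq 2$, using the definition $\cdiam[\alpha/2] = \frac{2}{\log(1+\alpha/2)} + 3$. I do not expect any genuine obstacle: the only bookkeeping is verifying that the ceiling and additive $+1$ are comfortably absorbed by the additive ``$+3$'' slack in the definition of $\cdiam[\alpha/2]$, and that the case split on whether $|S|=0$ is handled cleanly.
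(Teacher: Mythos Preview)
Your proposal is correct and follows essentially the same approach as the paper: apply \cref{lem:exp-ball-remove} with $r = \lceil \log n / \log(1+\alpha/2) \rceil$ to both $T$ and $U$, obtain two balls of size at least $n/2$ in $G \setminus S$, and conclude there is a path of length at most $2r+1 \le \cdiam[\alpha/2]\log n$. The paper's proof is terser and does not spell out the $|S|=0$ edge case, simply asserting that two balls of size $\ge n/2$ force a path of length at most $2r+1$, but the underlying argument is the same as yours.
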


\begin{proof}
  Apply \cref{lem:exp-ball-remove} to $S, T$ and
  $r = \ceil{\frac{\log n}{\log(1 + \alpha/2)}}$ to conclude that at
  distance $r$ from $T$ there are at least $n/2$ vertices in the graph
  $G \setminus S$. Applying the same argument to $U$ and $S$, we see
  that also from $U$ there are at least $n/2$ vertices reachable by
  length $r$ paths in $G \setminus S$. But this implies that there is
  a path of length at most $2r + 1 \le \cdiam[\alpha/2]\log n$ between
  $T$ and $U$.
\end{proof}

\subsection{Probabilistic Bounds}

We use the following version of the multiplicative Chernoff bound.

\begin{theorem}[Chernoff]\label{thm:chernoff}
  Suppose $X_1, \ldots, X_n$ are independent random variables taking
  values in $\set{0, 1}$. Let $X$ denote their sum and let
  $\mu = \E[X]$.
  Then, for every $0 \le \delta \le 1$ we have
  \begin{align*}
    \prob{\abs{X - \mu} \ge \delta\mu} \le 2 \exp(-\delta^2\mu/3) \eqperiod
  \end{align*}
\end{theorem}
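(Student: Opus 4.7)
The plan is to prove this via the standard exponential moment (Chernoff--Cramér) method. The main idea is to apply Markov's inequality to the random variable $e^{tX}$ for a well-chosen parameter $t$, which turns a tail bound on $X$ into an inequality involving the moment generating function $\E[e^{tX}]$ that factorizes nicely under independence.

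Concretely, for the upper tail I would first write $\Pr[X \ge (1+\delta)\mu] = \Pr[e^{tX} \ge e^{t(1+\delta)\mu}]$ for $t > 0$, then apply Markov to bound this by $\E[e^{tX}]/e^{t(1+\delta)\mu}$. Using independence of the $X_i$ and the elementary inequality $\E[e^{tX_i}] = 1 + p_i(e^t - 1) \le \exp(p_i(e^t - 1))$, where $p_i = \Pr[X_i = 1]$, the moment generating function factorizes to $\E[e^{tX}] \le \exp(\mu(e^t - 1))$. Optimizing by choosing $t = \ln(1 + \delta)$ yields the classical bound
\begin{align*}
  \Pr[X \ge (1+\delta)\mu] \le \left(\frac{e^\delta}{(1+\delta)^{1+\delta}}\right)^\mu \eqperiod
\end{align*}
A short calculus computation shows $\frac{e^\delta}{(1+\delta)^{1+\delta}} \le e^{-\delta^2/3}$ for $0 \le \delta \le 1$, which gives the desired one-sided bound $\exp(-\delta^2 \mu/3)$.

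For the lower tail $\Pr[X \le (1-\delta)\mu]$ I would run the symmetric argument with $t < 0$ (equivalently, apply Markov to $e^{-tX}$), obtaining $\Pr[X \le (1-\delta)\mu] \le \left(\frac{e^{-\delta}}{(1-\delta)^{1-\delta}}\right)^\mu \le \exp(-\delta^2\mu/2)$, which is in fact slightly stronger than $\exp(-\delta^2\mu/3)$. A union bound over the two tails then produces the factor of $2$ in the statement.

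The only real subtlety is verifying the analytic inequality $\frac{e^\delta}{(1+\delta)^{1+\delta}} \le e^{-\delta^2/3}$ on $[0,1]$; this is done by taking logarithms and showing that $f(\delta) = \delta - (1+\delta)\ln(1+\delta) + \delta^2/3 \le 0$ via checking $f(0) = 0$ and $f'(\delta) \le 0$ on the interval, which reduces to a routine one-variable calculus verification. Everything else is bookkeeping, and the whole argument is standard, so I do not anticipate any real obstacle.
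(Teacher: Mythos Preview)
Your proof is correct and is the standard exponential-moment derivation of the multiplicative Chernoff bound. The paper does not actually prove this theorem; it is simply stated without proof as a well-known probabilistic tool in the preliminaries section, so there is no paper proof to compare against. Your argument is exactly the textbook one and would be fine to include if a proof were desired.
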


We also need a similar bound for Poisson random variables.

\begin{theorem}[\cite{mu2005}, Theorem 5.4]
  \label{thm:poi-tail}
  Let $X$ be a Poisson random variable with parameter $\mu$. If
  $x > \mu$, then
  \begin{align*}
    \prob{X \ge x}
    \le e^{-\mu} \left(\frac{e \mu}{x}\right)^x \eqperiod
  \end{align*}
\end{theorem}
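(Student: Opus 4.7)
The plan is to use the standard exponential moment (Chernoff-style) argument, which is the natural tool for tail bounds of sums of independent or infinitely divisible random variables. Since $X$ is Poisson with parameter $\mu$, I will compute its moment generating function exactly, apply Markov's inequality to $e^{tX}$, and then optimize the resulting bound in $t > 0$.

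First I would observe that for any $t > 0$, Markov's inequality gives
\begin{equation*}
  \Prob{X \ge x} = \Prob{e^{tX} \ge e^{tx}} \le e^{-tx} \cdot \E[e^{tX}] \eqperiod
\end{equation*}
The moment generating function of a Poisson random variable with parameter $\mu$ is a routine computation: expanding the power series,
\begin{equation*}
  \E[e^{tX}] = \sum_{k = 0}^{\infty} e^{tk} \cdot \frac{e^{-\mu} \mu^k}{k!}
  = e^{-\mu} \sum_{k = 0}^{\infty} \frac{(\mu e^t)^k}{k!}
  = \exp\bigl(\mu (e^t - 1)\bigr) \eqperiod
\end{equation*}
Combining the two displays gives the one-parameter family of bounds $\Prob{X \ge x} \le \exp\bigl(\mu(e^t - 1) - tx\bigr)$ for every $t > 0$.

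The remaining step is to optimize over $t$. Differentiating the exponent with respect to $t$ gives $\mu e^t - x$, which vanishes at $t^\star = \ln(x/\mu)$; this is positive precisely because we assumed $x > \mu$, so the choice is admissible. Substituting $e^{t^\star} = x/\mu$ into the exponent yields $\mu(x/\mu - 1) - x \ln(x/\mu) = x - \mu - x\ln(x/\mu)$, so
\begin{equation*}
  \Prob{X \ge x} \le \exp\bigl(x - \mu - x \ln(x/\mu)\bigr)
  = e^{-\mu} \cdot \left(\frac{e\mu}{x}\right)^{x} \eqcomma
\end{equation*}
which is the claimed inequality. There is no real obstacle here: the only minor care needed is to verify that the optimal $t^\star$ is strictly positive (ensured by $x > \mu$) so that the Markov step is valid, and that the MGF series converges (which it does for every real $t$ because the factorial dominates). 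The whole argument is a textbook Chernoff computation specialized to the Poisson MGF.
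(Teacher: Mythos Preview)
Your proof is correct and is the standard Chernoff moment-generating-function argument for the Poisson upper tail. The paper itself does not give a proof of this statement: it is simply quoted as Theorem~5.4 from~\cite{mu2005}, so there is nothing to compare against beyond noting that your argument is exactly the textbook derivation one would find there.
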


Finally we also need the following form of the Lovász local lemma.

\begin{lemma}[Lovász local lemma; \cite{alonspencer}, Lemma
  5.1.1]\label{lem:lll}
  Let $A_1, A_2, \ldots, A_n$ be events in an arbitrary probability
  spacce. A directed graph $D = (V, E)$ on the set of vertices
  $V=\set{1, 2, \ldots n}$ is called a dependency digraph for the
  events $A_1, \ldots, A_n$ if for each $i$, $1 \le i \le n$, the
  event $A_i$ is mutually independent of all the events
  $\set{A_j \mid (i, j) \not\in E}$. Suppose that $D = (V, E)$ is a
  dependency digraph for the above events and suppose there are real
  numbers $x_1, \ldots x_n$ such that $0 \le x_i < 1$ and
  $\prob{A_i} \le x_i \prod_{(i, j) \in E}(1-x_j)$
  for all $1 \le i \le n$. Then
  $\prob{\wedge_{i=1}^n \bar{A_i}} \ge \prod_{i = 1}^n(1-x_i)$.
\end{lemma}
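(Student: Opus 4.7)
The statement is the standard symmetric-ish version of the Lovász Local Lemma, and I would follow the classical inductive argument from Erdős--Lovász. The plan is to prove, by induction on $|S|$, the following stronger claim: for every index $i \in [n]$ and every $S \subseteq [n] \setminus \{i\}$, provided the conditioning event has positive probability,
\begin{align*}
\Pr\Bigl[A_i \,\Big|\, \bigcap_{j \in S} \bar A_j\Bigr] \le x_i \eqperiod
\end{align*}
Once this is established, the conclusion follows by the chain rule: write
\begin{align*}
\Pr\Bigl[\bigcap_{i=1}^n \bar A_i\Bigr] = \prod_{i=1}^n \Pr\Bigl[\bar A_i \,\Big|\, \bigcap_{j<i} \bar A_j\Bigr] \ge \prod_{i=1}^n (1 - x_i) \eqperiod
\end{align*}

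For the base case $|S|=0$, the claim is immediate from the hypothesis, since $\Pr[A_i] \le x_i \prod_{(i,j)\in E}(1-x_j) \le x_i$ (using that each factor $(1-x_j) \le 1$). For the inductive step with $|S| \ge 1$, partition $S = S_1 \disjointunion S_2$ where $S_1 = \{\,j \in S \mid (i,j) \in E\,\}$ and $S_2 = S \setminus S_1$. Apply Bayes' rule to rewrite
\begin{align*}
\Pr\Bigl[A_i \,\Big|\, \bigcap_{j \in S} \bar A_j\Bigr]
= \frac{\Pr\bigl[A_i \cap \bigcap_{j \in S_1} \bar A_j \,\big|\, \bigcap_{j \in S_2} \bar A_j\bigr]}
       {\Pr\bigl[\bigcap_{j \in S_1} \bar A_j \,\big|\, \bigcap_{j \in S_2} \bar A_j\bigr]} \eqperiod
\end{align*}
For the numerator, bound it by dropping $\bigcap_{j \in S_1} \bar A_j$ and then use that $A_i$ is mutually independent of $\{A_j : j \in S_2\}$ (since no $j \in S_2$ is a neighbor of $i$ in the dependency digraph): this gives at most $\Pr[A_i] \le x_i \prod_{(i,j)\in E}(1-x_j)$.

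For the denominator, enumerate $S_1 = \{j_1, \ldots, j_r\}$ and apply the inductive hypothesis to each factor in the product
\begin{align*}
\Pr\Bigl[\bigcap_{k=1}^r \bar A_{j_k} \,\Big|\, \bigcap_{j \in S_2} \bar A_j\Bigr]
= \prod_{k=1}^r \Pr\Bigl[\bar A_{j_k} \,\Big|\, \bigcap_{\ell < k} \bar A_{j_\ell} \cap \bigcap_{j \in S_2} \bar A_j\Bigr] \ge \prod_{k=1}^r (1 - x_{j_k}) \ge \prod_{(i,j) \in E}(1 - x_j) \eqperiod
\end{align*}
Dividing the numerator by the denominator yields the required bound $x_i$, completing the induction. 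The main subtlety — and the only spot requiring care — is the use of mutual independence in the numerator: one must verify that dropping the $S_1$-conditioning is justified because $A_i$ is independent of any Boolean combination of events indexed by non-neighbors in the dependency digraph, which is exactly the hypothesis on $D$. Handling the degenerate case where some conditioning event has probability zero can be dispatched by a standard limiting argument or by noting that in that case the claim is vacuous.
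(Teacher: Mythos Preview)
The paper does not prove this lemma; it is quoted as a known result from Alon and Spencer's textbook without proof. Your argument is the standard Erd\H{o}s--Lov\'asz inductive proof that appears in that reference, so there is nothing to compare against here and your write-up is correct.
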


\section{Lower Bounds on Average}
\label{sec:avg-case}

In this section we establish our main result \cref{thm:pm avg lower
  bound} giving average-case lower bounds in PC, SoS and bounded depth
Frege for the $\Card(G, \vec{t})$ formulas.

\subsection{Lower Bounds for Perfect Matching}

Recall that we aim to prove that any sparse graph $H$ (in particular a
graph where $\pmf{H}$ is hard to refute) can be topologically embedded
into a random graph such that all paths in the embedding have odd
length.  In order to do this, we need to assume that the graph is far
from bipartite (since otherwise $H$ would need to be bipartite as
well, and $\pmf{H}$ is easy for bipartite graphs).  Furthermore our
embedding theorem relies on all large induced subgraphs of $G$ having
sufficiently large maximum degree.  The two following definitions
capture that both properties hold for all large induced subgraphs of
$G$.

\begin{definition}\label{def:high-degree-rich}
  A graph $G$ on $n$ vertices is \emph{\degrich{\kappa}{d}} if for all
  $U \subseteq V(G)$ of size $|U| \ge \kappa n$ it holds that the
  maximum degree of the induced subgraph $G[U]$ is
  $\Delta(G[U]) \ge d$.
\end{definition}

\begin{definition}\label{def:odd-cycle-rich}
  A graph $G$ on $n$ vertices is \emph{\nonbip{\kappa}{\alpha, \ell}}
  if for all $U \subseteq V(G)$ of size $|U| \ge \kappa n$ and such
  that $G[U]$ is an $\alpha$-expander it holds that the induced
  subgraph $G[U]$ contains an odd cycle $C$ of length
  $\ell \le |C| \le 3 \cdiam[\alpha/2] \log n$.
\end{definition}

Note that in the latter definition, assuming that $G[U]$ is an
$\alpha$-expander, the diameter of $G[U]$ is at most
$\cdiam[\alpha] \log(n) \le \cdiam[\alpha/2]\log(n)$ which means that,
unless $G[U]$ is bipartite, it certainly has short odd cycles of
length at most $1+2\cdiam[\alpha/2] \log n$. But a priori these may
all be shorter than~$\ell$. The definition asks for short odd cycles
of length at least $\ell$, at the cost of a slightly worse upper bound
on the cycle length.

Both properties are clearly monotone in $\kappa$: if the properties
hold for some $\kappa_0 > 0$, then they also hold for all
$\kappa \ge \kappa_0$.
With these definitions at hand we can state our embedding theorem.

\begin{restatable}[Embedding Theorem]{theorem}{EmbeddingTheorem}
  \label{thm:odd-minor}
  For $\alpha > 0$
  there are $\epsilon, n_0 > 0$ such
  that the following holds.
  Let $G$ be an
  $\alpha$-expander on $n > n_0$ vertices,
  let $k \ge 6$,
  and let $H$ be a graph on at most
  $\epsilon n / k \log n$
  vertices and edges.
  If $G$ is \degrich{1-4/k}{550\Delta(H)/\alpha^2},
  then $G$ contains~$H$ as a
  topological minor.
  Furthermore, if $G$ is also
  \nonbip{1 - 2/k}{\beta, 1 + 2/\beta},
  for $\beta = \frac{\alpha}{3(1+\alpha)}$,
  then one can choose the parities
  of the lengths of all the edge embeddings in the minor.
\end{restatable}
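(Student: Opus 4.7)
The plan is to adapt the minor-embedding argument of Krivelevich and Nenadov~\cite{kn2018MinorsNoCut} so as to produce \emph{topological} minors, and then to exploit the odd-cycle-richness hypothesis to steer the parity of the edge embeddings. Throughout the embedding I maintain a set $S \subseteq V(G)$ of ``used'' vertices and work in the residual graph $G \setminus S$.

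First, I embed the vertices of $H$ one at a time. Rather than send $v \in V(H)$ to a single vertex of $G$, I reserve for it a ``cross'': a central vertex $c_v$ together with $\Delta(H)$ pairwise vertex-disjoint connected ``branches'' of size $\Theta(\log n)$, each branch containing its own vertex of large residual degree. Having $\Delta(H)$ fresh attachment points per cross is exactly what allows the later edge-embedding step to realize every edge of $H$ as a \emph{vertex-disjoint} path. Existence of such a cross in the residual graph is guaranteed by the degree-richness hypothesis: as long as $|V(G) \setminus S| \ge (1 - 4/k) n$, the subgraph $G \setminus S$ still contains a vertex of degree at least $550\Delta(H)/\alpha^2$, and one then grows the branches off such a hub using short paths obtained from \cref{cor:exp-diam-remove}. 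I would isolate this step as an auxiliary ``cross lemma''.

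Second, I process the edges of $H$ one at a time. For $\{u,v\} \in E(H)$ I pick an unused branch $B_u$ of the cross assigned to $u$ and an unused branch $B_v$ of the cross assigned to $v$ and apply \cref{cor:exp-diam-remove} to $G \setminus S$, $B_u$, $B_v$ to produce a connecting path of length $O(\log n)$; I then add $B_u$, $B_v$ and the interior of the path to $S$. To get parity control I replace the single connecting path by a two-stage construction: using the odd-cycle-richness hypothesis on the residual graph I extract an odd cycle $C$ of length between $1 + 2/\beta$ and $3 \cdiam[\beta/2] \log n$, and then apply \cref{cor:exp-diam-remove} twice to attach short paths from $B_u$ to some vertex $a \in C$ and from $b \in C$ to $B_v$. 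Since $|C|$ is odd, the two arcs of $C$ from $a$ to $b$ have opposite parities, so exactly one of the two resulting routings produces a final path of the prescribed parity.

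The main obstacle is the global accounting. Each vertex embedding and each edge embedding consumes $O(\log n)$ vertices of $G$, so over all $O(\eps n / k \log n)$ rounds at most $O(\eps n / k)$ vertices end up in $S$. I must choose $\eps$ small enough, relative to the constants $4/k$ and $2/k$ in the hypotheses, so that at every intermediate step the residual unused set $V(G) \setminus S$ simultaneously (i) has size at least $(1 - 4/k)n$ and $(1 - 2/k) n$, keeping the two robustness hypotheses live for the next cross and the next odd cycle; (ii) induces an $\Omega(\alpha)$-expander in $G$, so that \cref{cor:exp-diam-remove} can be invoked with $S$ being everything used so far and still delivers paths of logarithmic length; and (iii) intersects each still-unsaturated cross in its central vertex together with enough untouched branches. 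Verifying that all three invariants survive every round — and in particular that the slightly worse expansion $\beta = \alpha / 3(1+\alpha)$ used in the odd-cycle hypothesis is the right bookkeeping margin for the residual graph — is the most delicate part of the proof.
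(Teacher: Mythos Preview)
Your high-level plan --- embed vertices as crosses, embed edges as paths between unused branches, route through an odd cycle for parity control --- is indeed the paper's approach. But two load-bearing mechanisms are missing, and without them the greedy procedure you describe can get stuck.

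\textbf{Expansion is not self-maintaining.} You list as an invariant that $G \setminus S$ ``induces an $\Omega(\alpha)$-expander'', but removing even a small set from an expander can destroy expansion. And your direct appeal to \cref{cor:exp-diam-remove} requires the branches to satisfy $|B_u|,|B_v| \ge (2/\alpha)|S|$, which fails the moment $|S|$ exceeds $O(\log n)$ --- yet $|S|$ will reach $\Theta(\eps n/k)$. The paper resolves this by actively \emph{restoring} $\beta$-expansion after every step: any $U \subseteq C$ with $|U|\le |C|/2$ and $|N(U, C\setminus U)| < \beta|U|$ is evicted from the live set $C$ into a discard set $A$, and all path-finding then happens inside the genuine $\beta$-expander $G[C]$ rather than through $G$ modulo $S$. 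Bounding $|A|$ is not a bookkeeping triviality; it requires the balanced-separator argument of \cref{lem:exp-sep}, and this is where the constant $\beta = \alpha/3(1+\alpha)$ actually comes from.

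\textbf{Branches die, and you need a recovery mechanism.} Once expansion restoration is in play, a branch of an already-placed cross can lose its neighborhood in $C$ (because that neighborhood was swept into $A$). Your crosses carry exactly $\Delta(H)$ branches, so a single dead branch leaves you unable to realize all incident edges. The paper over-provisions each cross with roughly $\deg_H(x)(1+4/\beta)$ branches, and if even these run out it \emph{un-embeds} the vertex: dead branches go to $A$, while branches already carrying edge embeddings (together with those embeddings) go to a second discard set $A'$, and the vertex is queued for re-embedding. The over-provisioning ratio is chosen so that every un-embedding moves enough mass into $A$ to maintain $|A'| < \beta|A|/2$, which in turn keeps $|C| \ge (1-2/k)n$ so that both robustness hypotheses and \cref{lem:cross} remain applicable. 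This backtracking and the two-tier discard accounting are not a verification afterthought; they are the algorithm, and your proposal needs them.
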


Let us highlight that $k$ may depend on the graph $G$. We have made
no attempt to optimize the constants.
The proof of the embedding theorem can be found in \cref{sec:odd-minor}.

As mentioned before we need to ensure that once we obtain an
embedding of the worst-case graph $H$ in $G$, that there is a
matching in the graph $G$ with the embedding of $H$ removed. To ensure
this we will in fact not embed $H$ directly in $G$
but rather in a subgraph of $G$: first we identify a set of vertices
$T \subseteq V(G)$ such that no matter what set $U \subseteq T$ of odd
cardinality is removed from $G$, the graph $G \setminus U$ still
contains a perfect matching.
We then proceed to show that the graph $G[T]$ satisfies all the
properties required in order to embed $H$ into it.
The following lemma captures these properties\footnote{
  For clarity of exposition
  we say that an event holds for \emph{odd} $n$
  asymptotically almost surely as $n \rightarrow \infty$
  if $n = 2n' + 1$ for some non-negative integer $n'$ and the event
  holds asymptotically almost surely as $n' \rightarrow \infty$.}.

\begin{restatable}[Partition Lemma]{lemma}{PartitionLemma}
  \label{lem:partition} \label{LEM:PARTITION}
  There is a
  $d_0$ such that for all
  $d > d_0$ there is an
  $n_0$ such
  that the following holds.
  Let $n > n_0$ be odd and $G \sim \mathcal{G}(n, d)$.
  Then,
  asymptotically almost surely,
  there is a set
  $T \subseteq V(G)$
  of size $|T| \ge n/8$
  such that
  $G[T]$ is
  a
  $1/3$-expander,
  \nonbip{1/2}{1/12, 25},
  \degrich{1/3}{d/32}
  and for any set $U \subseteq T$ of odd
  cardinality it holds that
  $G \setminus U$ has a perfect matching.
\end{restatable}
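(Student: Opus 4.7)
The plan is to construct $T$ as a randomly sampled subset of $V(G)$---each vertex included independently with a suitable constant probability $p$ (say $p = 1/4$)---and verify the four required conditions simultaneously via the Lovász local lemma (\cref{lem:lll}), Chernoff concentration (\cref{thm:chernoff}), and spectral properties of $G \sim \calG(n,d)$ that hold asymptotically almost surely (e.g., Friedman's near-Ramanujan bound and the Poisson limit for short cycle counts). The size bound $|T| \ge n/8$ is immediate from Chernoff, and throughout the remaining analysis I would condition on $G$ being a good spectral expander whose short-cycle counts match their expectations.

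For the three local conditions on $G[T]$, I would define for each witness structure (small neighborhood, putative low-degree set, putative non-expanding induced subgraph) a bad event and apply the local lemma. The \emph{max-degree-robustness} is the easiest: the average degree inside any $U \subseteq T$ with $|U| \ge |T|/3$ is well above $d/32$ once $d$ is sufficiently large, and Chernoff plus LLL ensures that every such $U$ contains a high-degree vertex. The $1/3$-\emph{expansion} of $G[T]$ follows from the expander mixing lemma applied to the spectral expansion of $G$ together with concentration of the random cut $(T, V \setminus T)$: edge counts between $S \subseteq T$ and $T \setminus S$ concentrate around their expectations and yield the required vertex expansion. The \emph{odd-cycle-robustness} leverages the abundance of short odd cycles in $G$ (a random $d$-regular graph a.a.s.\ contains $\Theta(d^\ell)$ cycles of each length $\ell$ up to $\Theta(\log n)$), with an LLL argument to guarantee that enough such cycles survive in $G[T]$ and lie inside every large expanding induced subgraph.

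The main obstacle is the fourth condition: for every odd $U \subseteq T$, the graph $G \setminus U$ must have a perfect matching. Since $|U|$ may be linear in $n$, this cannot be reduced to a purely local event handled by the LLL. My plan is to verify Tutte's condition directly using the vertex expansion of $G$. For any $S \subseteq V \setminus U$, any component $C$ of $(G \setminus U) \setminus S$ of size at most $n/2$ must satisfy $N_G(C) \subseteq U \cup S$, and the $\alpha$-expansion of $G$ then forces $|C| = O((|U|+|S|)/\alpha)$. To convert this into a bound on odd components, I would additionally require (via an LLL condition) that every vertex of $V \setminus T$ has many neighbors inside $V \setminus T$, so that $G[V \setminus T]$ is itself robustly connected and the removal of any $U \subseteq T$ leaves a unique giant even-sized component together with only small (and, by parity accounting, even-sized) satellites, thereby establishing $o((G \setminus U) \setminus S) \le |S|$ for all $S$ and hence a perfect matching by Tutte-Berge.

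Combining the local LLL computations for the three structural conditions with the global matching argument then yields that a random $T$ satisfies all four conditions simultaneously with positive probability, which, together with the a.a.s.\ structural guarantees for $G \sim \calG(n,d)$, proves the lemma.
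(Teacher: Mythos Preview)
Your overall strategy---sample $T$ randomly with constant probability and use the local lemma to guarantee that every vertex has a controlled fraction of its neighbours on each side---is exactly what the paper does (phrased there as a \emph{$(c,\eps)$-degree-balanced cut} with $c=3/4$, $\eps=1/16$).  The first three properties (expansion, max-degree-robustness, odd-cycle-robustness) are handled in the paper via spectral bounds (interlacing, the expander mixing lemma, Hoffman's bound) together with the Poisson limit for short cycle counts, much as you outline.

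The genuine gap is in your perfect-matching argument.  First, the LLL condition you impose---``every vertex of $V\setminus T$ has many neighbours inside $V\setminus T$''---is too weak: you must require this of \emph{every} vertex, including those in $T$.  Otherwise a vertex $v\in T\setminus U$ could have all its neighbours inside $U$ and become isolated in $G\setminus U$, immediately destroying the matching.  Second, and more seriously, your Tutte verification does not go through as sketched.  The bound $|C|=O((|U|+|S|)/\alpha)$ on component sizes is vacuous because $|U|$ may be $\Theta(n)$, and in any case Tutte's condition concerns the \emph{number} of odd components, not their sizes.  Your claim that the satellites are ``by parity accounting, even-sized'' applies at best to the case $S=\emptyset$; once you remove an arbitrary Tutte set $S$ you have no parity control on the resulting components, and nothing in the sketch bounds $o((G\setminus U)\setminus S)$ by $|S|$.

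The paper sidesteps this entirely with a spectral criterion of Brouwer and Haemers: a graph on an even number of vertices has a perfect matching whenever $\lambda_{\max}(L)\le 2\lambda_2(L)$.  Because the degree-balanced cut guarantees $\delta(G\setminus U)\ge (c-\eps)d>d/2$ for every $U\subseteq T$, Weyl's inequality together with eigenvalue interlacing (applied to $A_G$ and $A_{G\setminus U}$) and Friedman's bound $\lambda\le 2\sqrt{d-1}+o(1)$ give $\lambda_2(L_{G\setminus U})\ge (c-\eps)d-2\sqrt{d}$ and $\lambda_{\max}(L_{G\setminus U})\le d+2\sqrt{d}$, which satisfies the Brouwer--Haemers criterion once $d$ is large enough.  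You should replace the Tutte sketch with this spectral argument (or else supply a complete edge-counting proof of Tutte's condition, which is possible but considerably more work than indicated).
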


The partition lemma is proved in \cref{sec:partition}.  The constants
in \cref{lem:partition} are rather arbitrarily chosen and their
precise values are not significant -- the interested reader can find
the precise dependencies between them in the proof.  With
\cref{thm:odd-minor,lem:partition} at hand, we can now easily state
and prove our lower bounds for the perfect matching formula (i.e., the
special case $t=1$ of \cref{thm:pm avg lower bound}).

\begin{theorem}\label{thm:pm-avg}
\label{THM:PM-AVG} 
  There is a $d_0$
  and an $\eps > 0$ such that for all
  $d > d_0$ the following holds.
  For $n$ and
  $n' \le \frac{\eps n}{\log n}$
  both odd, let
  $G \sim \mathcal{G}(n,d)$ and
  $H$ be any
  graph on $n'$ vertices of degree
  $\Delta(H) \le 5$.
  Then, asymptotically almost surely,
  $\pmf{H} \subformula \pmf{G}$.
\end{theorem}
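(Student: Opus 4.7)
The plan is to assemble \cref{thm:pm-avg} from the two black boxes already stated: the Partition Lemma (\cref{lem:partition}) and the Embedding Theorem (\cref{thm:odd-minor}), following the informal strategy outlined at the end of \cref{sec:proof overview}.

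First I would apply the Partition Lemma to $G \sim \mathcal{G}(n,d)$, assuming $d > d_0$ is large enough for its hypothesis. This gives, asymptotically almost surely, a set $T \subseteq V(G)$ with $|T| \ge n/8$ such that $G[T]$ is a $1/3$-expander, \nonbip{1/2}{1/12,25}, \degrich{1/3}{d/32}, and such that $G \setminus U$ has a perfect matching for every odd-sized $U \subseteq T$. Next I would invoke the Embedding Theorem on the host graph $G[T]$ with $\alpha = 1/3$ and the choice $k = 6$; note that $\beta = \alpha/(3(1+\alpha)) = 1/12$ and $1 + 2/\beta = 25$, so the non-bipartiteness hypothesis at parameter $(1 - 2/k, \beta, 1 + 2/\beta) = (1/2, 1/12, 25)$ is exactly what the Partition Lemma guarantees, and the degree-richness parameter $(1 - 4/k) = 1/3$ matches as well. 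Taking $d_0$ large enough that $d/32 \ge 550 \cdot 5/(1/3)^2 = 24750$ ensures the \degrich{1/3}{550 \Delta(H)/\alpha^2} hypothesis holds for every $H$ of maximum degree $\le 5$. Finally, $H$ has at most $n'$ vertices and at most $5n'/2$ edges, and choosing $\eps$ sufficiently small (recalling $|T| \ge n/8$, hence $\log|T| \le \log n$) makes $n' \le \eps n / \log n$ small enough to satisfy the size condition $|E(H)|, |V(H)| \le \epsilon'|T|/(k \log|T|)$ of the Embedding Theorem. The theorem then produces an injection $\sigma \colon V(H) \to T$ together with internally-disjoint paths $p_{uw} \subseteq G[T]$ of odd length for each $\{u,w\} \in E(H)$.

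Let $S \subseteq T$ be the set of vertices touched by the embedding. A path of odd length $\ell$ contributes $\ell - 1$ internal vertices, which is even, so $|S| \equiv |V(H)| = n' \equiv 1 \pmod 2$. Since $S \subseteq T$ has odd cardinality, the Partition Lemma gives a perfect matching $M$ of $G \setminus S$. I now define the restriction $\rho$ mapping the variables of $\pmf{G}$ to literals of $\pmf{H}$:
\begin{itemize}
\item for each $e \in M$, set $\rho(x_e) = 1$;
\item for each edge $e$ incident to $S$ that is not on any embedded path, set $\rho(x_e) = 0$;
\item for each path $p_{uw}$ of (odd) length $\ell$, label its edges $e_1, e_2, \ldots, e_\ell$ in order from $\sigma(u)$ to $\sigma(w)$, and set $\rho(x_{e_i}) = x_{uw}$ for odd $i$ and $\rho(x_{e_i}) = \bar{x}_{uw}$ for even $i$; since $\ell$ is odd, both $e_1$ and $e_\ell$ receive the label $x_{uw}$.
\end{itemize}

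It remains to verify that $\restrict{\pmf{G}}{\rho} \equiv \pmf{H}$. For $v \in V(G) \setminus S$, the vertex axiom $\sum_{e \ni v} x_e = 1$ becomes $1 = 1$ via the unique $M$-edge at $v$. For an internal vertex $v$ of some path $p_{uw}$, the two path edges at $v$ are consecutive, hence map to $x_{uw}$ and $\bar{x}_{uw}$, while all other incident edges are zeroed, so the axiom reduces to $x_{uw} + \bar{x}_{uw} = 1$, which is a default boolean axiom. For $v = \sigma(u)$ with $u \in V(H)$, the incident path edges are precisely the first edges of the paths $p_{uw}$ for the $H$-neighbors $w$ of $u$, each labeled $x_{uw}$, and the remaining edges are zeroed; the axiom becomes $\sum_{w \sim_H u} x_{uw} = 1$, which is exactly $\pmax{u}$ in $\pmf{H}$. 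Thus $\pmf{H} \subformula \pmf{G}$, completing the proof.

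The only subtle step is the parity bookkeeping for $|S|$ together with the matching of constants across \cref{thm:odd-minor} and \cref{lem:partition}; every other step is a direct mechanical substitution once the two lemmas are in place. The rest of \cref{thm:pm avg lower bound} then follows by combining \cref{lem:subformula} with the worst-case hardness of $\pmf{H}$ from \cref{sec:worst-case}, and the extension to $\Card(G,\vec t)$ for $t > 1$ is obtained via the $\lfloor t/2 \rfloor$-cycle-cover restriction described in the proof overview.
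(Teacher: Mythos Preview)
Your proposal is correct and follows essentially the same route as the paper: apply the Partition Lemma to obtain $T$, invoke the Embedding Theorem on $G[T]$ with $\alpha=1/3$ and $k=6$, use oddness of the path lengths to get $|S|$ odd, pull a perfect matching on $G\setminus S$ from the Partition Lemma, and define the alternating restriction along each path. Two small slips to fix: first, $1-2/k = 2/3$, not $1/2$, so the odd-cycle-robustness hypothesis of the Embedding Theorem is \nonbip{2/3}{1/12,25}, which follows from the Partition Lemma's \nonbip{1/2}{1/12,25} by monotonicity in the first parameter rather than by an exact match; second, your definition of $\rho$ should also send to $0$ every edge with both endpoints in $V(G)\setminus S$ that is not in $M$ (your verification already assumes this).
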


Using the graphs from \cref{sec:worst-case} (i.e., the graphs from
\cref{thm:worst-pc,thm:worst-sos,thm:worst-frege}) as our choice of $H$
and combining \cref{thm:pm-avg} with
\cref{lem:subformula} finishes the proof of
\cref{thm:pm avg lower bound} for the perfect matching formula.

\begin{proof}[Proof of Theorem~\ref{thm:pm-avg}]
  Let $G \sim \mathcal{G}(n, d)$ as in the statement.
  Apply \cref{lem:partition} to $G$
  to obtain a set $T$
  with the mentioned properties.
  In order to apply \cref{thm:odd-minor} to
  $G[T]$ and the graph $H$
  to obtain a
  topological minor
  $\embed{H} \subseteq G[T]$,
  where all edge embeddings in $\embed{H}$
  are of odd length,
  we need to check that (for our choice $\alpha = 1/3, k=6$)
  \begin{enumerate}[label=$(\roman*)$]
  \item $G[T]$ is a $1/3$-expander,\label{item:expansion-check}
  \item $G[T]$ is \degrich{1/3}{550 \cdot 5 \cdot 9},\label{item:degrich-check}
  \item $G[T]$ is \nonbip{2/3}{1/12, 1 + 2 \cdot 12}, and\label{item:nonbip-check}
  \item $H$ is a graph on at most $\epsilon n /6 \log n$
    vertices and edges, for some $\epsilon > 0$. \label{item:H-check}
  \end{enumerate}
  From the guarantees of \cref{lem:partition} we see that
  \ref{item:expansion-check} is satisfied,
  that for $d$ large \ref{item:degrich-check} holds
  and also that
  \ref{item:nonbip-check} holds
  as odd-cycle-robustness is monotone in the first argument.
  Lastly, \ref{item:H-check} holds if we let $\eps = \epsilon/6$.

  With the topological minor $\embed{H}$ of $H$ in $G$ at hand, we proceed
  to construct a restriction $\rho$ to argue that $\pmf{H} \subformula \pmf{G}$.
  As all edge embeddings in $\embed{H}$ are of odd length
  and the number of vertices in $H$ is odd,
  we see that
  $|V(\embed{H})|$ is odd.
  Hence \cref{lem:partition} guarantees
  that there exists a perfect
  matching $M$ in the graph
  $G' = G \setminus V(\embed{H})$.
  The restriction~$\rho$ sets
  all variables outside of $\embed{H}$ to $0$ or $1$ depending on whether
  the edge $e \in M$.

  We still need to specify how $\rho$ maps the variables in
  $\embed{H}$. For every edge embedding
  $p_{u v}$ of~$\embed{H}$,
  choose an arbitrary edge $e_{u v} \in p_{u v}$ and map the
  edge variables $x_e$, for $e \in p_{u v}$,
  alternatingly along $p_{u v}$ to either
  $x_{e_{u v}}$ or $\bar{x}_{e_{u v}}$
  such that the first and last edge of
  $p_{u v}$ are mapped to $x_{e_{u v}}$
  (where we use that $|p_{u v}|$ is odd).
  By inspection we see that
  $\pmf{H} \subformula \pmf{G}$ as claimed.
\end{proof}

\subsection{Lower Bounds for $\Card(G, \vec{t})$}
\label{sec:avg-general}

In the following we prove the average-case lower bounds
on the $\Card(G, \vec{t})$ formulas for $G \sim \calG(n,d)$.
We consider the special case when
$n$ and $t \le d$ are odd and thus $d$ is even.
Without loss of generality, assume that $t \le d/2$: otherwise ``flip'' the
roles of $0$ and $1$.

The idea is to split
the edge set of the graph $G$
into $\floor{t/2}$
$2$-regular graphs
$G_1, \ldots, G_{\floor{t/2}}$
and one $d_0$-regular graph $G_0$, where $d_0=d-2\floor{t/2}$.
Then we want to set all variables that correspond to an edge in any of
the 2-regular graphs $G_1, \ldots, G_{\floor{t/2}}$ to 1 so that we
are left with the perfect matching formula $\pmf{G_0}$, on which we
will embed the worst-case instance of \cref{sec:worst-case}.

In order to be able to apply \cref{thm:pm-avg} to $\pmf{G_0}$, we need
to argue that $G_0$ is a random $d_0$-regular graph. Also, we need to
show that it is in fact possible to decompose a random $d$-regular graph
into $\floor{t/2}$ $2$-regular graphs plus a $d_0$-regular graph.
For this, we
use the notion of \emph{contiguity}.
Intuitively, two
sequences of probability measures are contiguous, if
all properties that hold with high probability in one
also hold with high probability in the other measure.

\newcommand{\pmP}{P_n}
\newcommand{\pmQ}{Q_n}
\newcommand{\sequ}[1]{(#1)_{1}^{\infty}}
\begin{definition}
  Let $\sequ{\pmP}$ and $\sequ{\pmQ}$ be two sequences of probability
  measures, such that for each $n$, $\pmP$ and $\pmQ$ both are defined
  on the same measurable space $(\Omega_n, \mathcal{F}_n)$.
  The two sequences are \emph{contiguous}
  if for every sequence of sets $\sequ{A_n}$, where
  $A_n \in \mathcal{F}_n$, it holds that
  \begin{align*}
    \lim_{n \rightarrow \infty} \pmP(A_n) = 0 \Leftrightarrow \lim_{n
    \rightarrow \infty} \pmQ(A_n) = 0 \eqperiod
  \end{align*}
  We denote contiguity of two sequences by $\pmP \approx \pmQ$.
\end{definition}

\newcommand{\graphplus}{\oplus}
For two random graphs $\calG_n$ and $\calH_n$ on the same set of $n$ vertices, we denote by
$\calG_n \graphplus \calH_n$ the union of two independent samples conditioned
on the result being simple. If $\calG_n = \calG(n, d)$ and
$\calH_n = \calG(n, d')$ are uniform distributions over
random regular graphs we can think of
this as a proccess where
we first sample $G \sim \calG_n$ and then repeatedly sample
$H \sim \calH_n$ until the union of~$G$ and~$H$ is simple.

\begin{theorem}[Corollary 9.44, \cite{JLR00}]\label{thm:contiguous}
  For all constants $d \ge 3$, $m \ge 1$ and $d_1, \ldots, d_m \ge 1$
  satisfying $d = \sum_{i=1}^m d_i$
  it holds that
  $$
  \calG(n, d_1) \graphplus \cdots \graphplus \calG(n, d_m)
  \approx
  \calG(n, d) \eqperiod
  $$
\end{theorem}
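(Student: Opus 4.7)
The plan is to reduce the statement to a contiguity computation in the configuration model and then apply the small subgraph conditioning method of Robinson--Wormald. First, I would replace each $\calG(n, d_i)$ by the configuration model $\calG^*(n, d_i)$, in which every vertex is assigned $d_i$ half-edges and the half-edges are paired uniformly at random, so that the simple-graph conditional distribution coincides with $\calG(n, d_i)$. Because taking $m$ independent configurations of degrees $d_1, \ldots, d_m$ is exactly the same random experiment as a single configuration of degree $d = \sum d_i$ with the half-edges artificially partitioned into $m$ color classes, the two sides of the desired contiguity agree perfectly \emph{before} conditioning on simplicity. So the entire problem reduces to understanding how the simplicity-conditioning on the left (simplicity of the union) compares with the simplicity-conditioning on the right (simplicity of the single $d$-regular configuration).

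Next I would apply the small subgraph conditioning framework. The classical observation is that for fixed $d$, the numbers $X_k$ of cycles of length $k$ in $\calG^*(n, d)$ converge jointly to independent Poisson random variables with explicit means $\lambda_k(d)$, and the probability that the configuration is simple is asymptotically $\exp(-\sum_k \lambda_k)$, bounded away from zero. On the ``sum'' side, the cycle counts decompose as $X_k = \sum_{\text{color patterns}} X_{k,\text{pattern}}$, and each piece is again asymptotically Poisson with means determined by $d_1, \ldots, d_m$. I would compute the joint factorial moments to verify Poisson convergence in both measures, and show that the Radon--Nikodym derivative of one sequence of conditional distributions with respect to the other is asymptotically a function of only the short-cycle statistics $(X_3, X_4, \ldots, X_K)$.

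Finally, to conclude contiguity $\pmP \approx \pmQ$, I would verify the two hypotheses of the Robinson--Wormald theorem: (i) the limiting ratio of cycle-count densities is integrable, and (ii) $\limsup_{n\to\infty}\E[Z_n^2] = \prod_k \E[(\text{limit ratio})^2]$, where $Z_n$ is the density. The hard part is precisely this second-moment calculation: one must carefully count pairs of configurations on the two sides that agree on a prescribed cycle structure, and show the contribution from atypical overlaps is negligible. The computation is lengthy but standard, and is exactly the content of the relevant chapter of \cite{JLR00}; the cleanest route is to invoke the general contiguity criterion proved there and check that the degree sequences $(d_1, \ldots, d_m)$ and $(d)$ give rise to the same asymptotic Poisson parameters for all short cycle lengths, which is immediate from the additivity of half-edge pairings.
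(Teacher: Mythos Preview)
The paper does not prove this statement at all: it is quoted as Corollary~9.44 of \cite{JLR00} and used as a black box. So there is no ``paper's own proof'' to compare against.

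That said, your outline (pass to the configuration model, identify the superposed configurations with a single colour-partitioned $d$-configuration, and run small subgraph conditioning on the short-cycle counts) is exactly the machinery developed in Chapter~9 of \cite{JLR00} that underlies the cited corollary. One framing point is slightly off: after the configuration-model reduction, the Radon--Nikodym derivative of the superposition model with respect to $\calG(n,d)$ is not best described as a ratio of two simplicity-conditionings, but is proportional to the random variable $Y_n(G)$ counting edge-decompositions of $G$ into $d_i$-regular factors. The Robinson--Wormald check is then the variance condition $\E[Y_n^2]/(\E[Y_n])^2 \to \prod_k \exp(\lambda_k \delta_k^2)$ for suitable $\delta_k$ describing how such decompositions bias the short cycles; your final sentence (``the same asymptotic Poisson parameters'') is not the relevant verification, since the cycle means $\lambda_k$ are trivially equal on both sides.
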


In other words, if we can show that e.g.~SoS
requires linear degree for a formula over
$
G
\sim
\calG(n, d_0) \graphplus
\bigoplus_{i \in \floor{t/2}} \calG(n, 2)
$
with high probability, then this
also holds for the same formula over graphs
$G \sim \calG(n, d)$. Implementing our idea in the former
probability distribution is straightforward
and we have the following theorem.

\begin{theorem}\label{thm:avg-card}\label{THM:AVG-CARD}
  There is a $d_0$ and an $\eps > 0$ such that for all
  $d \ge d_0$ the following holds.
  Let $n, n' \le \frac{\eps n}{\log n}$ and
  $t \in [d]$ all be odd, let
  $G \sim \mathcal{G}(n,d)$ and $H$ be a
  graph on $n'$ vertices of degree
  $\Delta(H) \le 5$.
  Then, asymptotically almost surely,
  $\pmf{H} \subformula \Card(G, \vec{t})$.
\end{theorem}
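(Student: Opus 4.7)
The plan is to follow the sketch given at the start of \cref{sec:avg-general}: decompose the random graph into one $d_0$-regular graph $G_0$ together with $\lfloor t/2 \rfloor$ disjoint $2$-factors $G_1, \ldots, G_{\lfloor t/2 \rfloor}$, set every variable corresponding to an edge of a $2$-factor to $1$, and reduce to the perfect matching instance $\pmf{G_0}$ via \cref{thm:pm-avg}. First I would assume without loss of generality that $t \le d/2$; otherwise the substitution $x_e \mapsto \bar{x}_e$, which is a legal affine restriction, turns $\Card(G, \vec{t})$ into $\Card(G, \vec{d - t})$ with $d - t \le d/2$ still odd, since $d$ is even (forced by $n$ being odd and $G$ being $d$-regular). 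Then set $d_0 = d - 2\lfloor t/2 \rfloor = d - (t-1)$, which is even and at least $d/2 + 1$; by taking $d$ large enough this exceeds the threshold of \cref{thm:pm-avg}.

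Next I would invoke \cref{thm:contiguous} to replace $\calG(n, d)$ with the contiguous product distribution $\calG(n, d_0) \graphplus \calG(n, 2) \graphplus \cdots \graphplus \calG(n, 2)$ (with $\lfloor t/2 \rfloor$ copies of $\calG(n, 2)$). Since the event ``$\pmf{H} \subformula \Card(G, \vec{t})$'' depends only on $G$ as a graph, it suffices to show that it holds asymptotically almost surely under the product distribution. Under the product, a sample comes equipped with an explicit decomposition $G = G_0 \cup G_1 \cup \cdots \cup G_{\lfloor t/2 \rfloor}$, and the marginal of $G_0$ is $\calG(n, d_0)$ conditioned on the union being simple; another application of contiguity ensures that this conditioning preserves a.a.s.\ events. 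Applying \cref{thm:pm-avg} to $G_0$ therefore yields a.a.s.\ a restriction $\rho_0$ with $\restrict{\pmf{G_0}}{\rho_0} \equiv \pmf{H}$.

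The final step is to lift $\rho_0$ to a restriction $\rho$ on the variables of $\Card(G, \vec{t})$ by setting $\rho(x_e) = 1$ whenever $e$ lies in some $G_i$ with $i \ge 1$ and $\rho(x_e) = \rho_0(x_e)$ otherwise. Since each $G_i$ with $i \ge 1$ is $2$-regular on $V(G)$, these edges contribute exactly $2\lfloor t/2 \rfloor = t - 1$ to the sum $\sum_{e \ni v} x_e$ at every vertex~$v$, so the axiom $\sum_{e \ni v} x_e = t$ reduces to $\sum_{e \in E(G_0),\, e \ni v} \rho_0(x_e) = 1$, which is exactly the corresponding axiom of $\restrict{\pmf{G_0}}{\rho_0} \equiv \pmf{H}$. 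Hence $\restrict{\Card(G, \vec{t})}{\rho} \equiv \pmf{H}$, establishing the claimed reduction. I do not expect any real obstacle: the whole argument is essentially bookkeeping around contiguity and the already-established \cref{thm:pm-avg}, with the only mildly delicate point being the use of contiguity to brush aside the simplicity-conditioning on $G_0$'s marginal in the product distribution.
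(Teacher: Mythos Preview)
Your proposal is correct and follows essentially the same route as the paper: reduce to $t \le d/2$ by negating variables, pass via \cref{thm:contiguous} to the product distribution $\calG(n,d_0)\graphplus\bigoplus_i\calG(n,2)$, set the $2$-factor edges to $1$, and invoke \cref{thm:pm-avg} on the $d_0$-regular part. You are in fact slightly more careful than the paper in flagging the simplicity conditioning on the marginal of $G_0$, which the paper treats informally.
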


Analogously to how \cref{thm:pm-avg} implied the $t=1$ case of
\cref{thm:pm avg lower bound}, this theorem implies the general case
of \cref{thm:pm avg lower bound}.

\begin{proof}[Proof of Theorem~\ref{thm:avg-card}] 
  As $n$ is odd
  $d$ must be even. Note that we may assume that $t \le d/2$: if $ t >
  d/2$, let us flip the role of 1 and 0 in the formula to obtain
  $\Card(G, \overrightarrow{d-t})$.
  Let $d_0 = d - 2\floor{t/2} \ge d/2$ and sample
  \begin{align}
    G' = G_0 \cup \bigcup_{1 \le i \le \floor{t/2}} G_i
    \,\sim\,
    \calG(n, d_0) \graphplus \bigoplus_{1 \le i \le \floor{t/2}}\calG(n, 2)
    \eqperiod
  \end{align}
  By \cref{thm:contiguous}, if we show the statement for $G'$,
  then it also holds for $G \sim \calG(n, d)$.

  Set all variables in $G_1, \ldots G_{\floor{t/2}}$ to 1.
  When $\Card(G, \vec{t})$ is hit with
  this restriction we are left with the formula $\pmf{G_0}$. As $G_0$
  is distributed according to $\calG(n, d_0)$, we may apply
  \cref{thm:pm-avg} to conclude that
  $\pmf{H} \subformula \Card(G, \vec{t})$.
\end{proof}

\section{Proof of the Partition Lemma}
\label{sec:partition}

In this section we prove \cref{lem:partition}, restated here for convenience.

\PartitionLemma*

We proceed as follows. First, we partition $V(G) = S \disjointunion T$
into two sets such that every vertex $v \in V(G)$ has a good fraction of its
neighbors in $S$.

\begin{definition}
  A \emph{$(c, \eps)$-degree-balanced cut} of a graph $G$ is a partition $S
  \disjointunion T = V(G)$ of the $n$ vertices of $G$ such that:
  \begin{enumerate}[label=$(\roman*)$]
  \item $\big||S|-cn\big| \le \eps n$
  \item for every vertex $u \in V$, the fraction of $u$'s neighbors
    that are in $S$ is at least $c-\eps$ and at most $c+\eps$.
  \end{enumerate}
\end{definition}

It turns out that in random regular graphs any
$(c, \eps)$-degree-balanced cut possesses the properties needed in the
Partition Lemma, as summarized in the following lemma.

\begin{lemma}\label{lem:partition-properties}
  For all constants
  $c, \eps, d > 0$
  satisfying $c > 1/2 + \eps$ and
  $d \ge
  \max \{(c - 1/2 - \eps)^{-2}, 4\cdot\eps^{-2}\}$ the
  following holds.
  Let $n$ be odd and $G \sim \calG(n, d)$.
  Then, asymptotically almost surely as $n \rightarrow \infty$,
  for any $(c, \eps)$-degree-balanced cut
  $(S,T)$ of $G$ it
  holds that
  \begin{enumerate}[label=$(\roman*)$]
  \item the graph $G$ is
    \DEGRICH{\kappa}
    {
      d \big(
      \kappa -
      2\sqrt{\frac{1 - \kappa}{\kappa d}}
      \big)
    }
    for all constants $\kappa \in [0,1]$,
  \item the graph $G$ is \nonbip{6/\sqrt{d}}{\beta, \ell},
    for any constants $\beta$ and $\ell$,

  \item the graph $G[T]$ is an $\alpha$-expander, where
    $\alpha = \frac{1 - c - 2\eps}{2(1-c-\eps)}$, and

  \item the graph $G \setminus U$ has a perfect matching for any
    $U \subseteq T$ of odd cardinality.
  \end{enumerate}
\end{lemma}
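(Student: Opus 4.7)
The plan is to first extract two global asymptotic properties of $G \sim \calG(n, d)$ that do not refer to the cut $(S, T)$, and then use them together with the degree-balance condition to verify each of the four claims. The two properties are: (a) $G$ is a near-Ramanujan expander with $\lambda(G) \le 2\sqrt{d-1} + o(1)$, so that the Expander Mixing Lemma (EML) $\bigl|e_G(X, Y) - d|X||Y|/n\bigr| \le \lambda\sqrt{|X||Y|}$ applies; and (b) for every fixed constant $\ell$, the number of cycles in $G$ of length at most $\ell$ is $O(1)$, by the Poisson limit theorem for short cycles in $\calG(n, d)$. Both are standard a.a.s.\ facts. The hypotheses $d \ge (c-1/2-\eps)^{-2}$ and $d \ge 4\eps^{-2}$ are calibrated so that $\lambda \le 2\sqrt{d}$ is dominated by both $d\eps$ and $d(c-1/2-\eps)$, letting all EML error terms be absorbed. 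Henceforth fix a realization of $G$ satisfying (a) and (b), together with an arbitrary $(c, \eps)$-degree-balanced cut $(S, T)$.

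Properties (i) and (iii) follow from EML combined with the degree-balance. For (i), applying EML to $(U, V \setminus U)$ with $|U| = \kappa n$ gives $e_G(U) \ge d\kappa|U|/2 - (\lambda/2)\sqrt{|U|(n-|U|)}$, so the average, and hence maximum, degree of $G[U]$ is at least $d\kappa - \lambda\sqrt{(1-\kappa)/\kappa} \ge d\bigl(\kappa - 2\sqrt{(1-\kappa)/(\kappa d)}\bigr)$, using $\lambda \le 2\sqrt{d}$. For (iii), for any $W \subseteq T$ with $|W| \le |T|/2$, write $e_G(W, T \setminus W) = \sum_{v \in W} d_T(v) - 2e(W)$, where $e(W)$ is the number of edges internal to $W$. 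The degree-balance gives $\sum_v d_T(v) \ge d(1-c-\eps)|W|$, while EML gives $e(W) \le d|W|^2/(2n) + \lambda|W|/2$. Each vertex in $N_{G[T]}(W) \cap (T\setminus W)$ absorbs at most $d(1-c+\eps)$ such edges, so dividing and plugging in $|W|/n \le (1-c+\eps)/2$ together with $d \ge 4\eps^{-2}$ (to ensure $\lambda/d \le \eps$) yields $|N_{G[T]}(W) \cap (T\setminus W)| \ge \alpha|W|$ with the claimed $\alpha$ after careful arithmetic manipulation.

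For property (ii), the key observation enabled by (a) is that no induced subgraph of $G$ of size at least $(6/\sqrt d)n$ is bipartite: if $U$ had such a bipartition $A \cup B$, the larger part satisfies $|A| \ge (3/\sqrt d)n$, and EML gives that the number of edges inside $A$ is at least $d|A|^2/(2n) - \lambda|A|/2 > 0$, contradicting bipartiteness. By (b), the set $R$ of vertices on some cycle of length at most $\ell - 1$ in $G$ has $|R| = O(1)$. For any $U$ fulfilling the hypotheses of odd-cycle-robustness, $U \setminus R$ still has size at least $(6/\sqrt d)n - O(1)$, and is therefore non-bipartite by the previous argument, so $G[U \setminus R]$ contains an odd cycle, necessarily of length at least $\ell$ since the vertex sets of all shorter cycles of $G$ have been removed. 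Since $|R|$ is constant and $G[U]$ is a $\beta$-expander on $\Omega(n)$ vertices, a minor variant of \cref{cor:exp-diam-remove} (obtained by first growing balls around two endpoints in $G[U]$ via \cref{lem:exp-ball-remove} until they are large enough for the corollary to apply) yields that $G[U \setminus R]$ has diameter at most $\cdiam[\beta/2] \log n$, bounding the length of the shortest odd cycle by $2\cdiam[\beta/2]\log n + 1 \le 3\cdiam[\beta/2]\log n$.

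Property (iv) is the main obstacle. The plan is to verify Tutte's condition for $G \setminus U$: for every $X \subseteq V \setminus U$, the number of odd components of $(G\setminus U)\setminus X$ is at most $|X|$. Setting $Y := U \cup X$, this rewrites as $o(G \setminus Y) \le |Y \cap S| + (|Y \cap T| - |U|)$, and the hardest case is $X \subseteq S$, where $Y \cap T = U$ and the bound becomes $o(G \setminus Y) \le |Y \cap S|$. I will proceed via a Hall-based two-stage matching. Stage one: find a bipartite matching $M_1$ saturating $T \setminus U$ into $S$, which is possible by Hall's theorem since for any $W \subseteq T\setminus U$ every $v \in W$ has at least $d(c-\eps)$ neighbors in $S$ while every $u \in S$ has at most $d(1-c+\eps)$ neighbors in $T$, yielding $|N_G(W) \cap S| \ge \tfrac{c-\eps}{1-c+\eps}|W| > |W|$ from $c > 1/2 + \eps$. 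Stage two: match the leftover set $B := S \setminus M_1(T \setminus U)$, which has even cardinality $|S| - |T| + |U|$ (even since $n$ and $|U|$ are both odd), inside $G[S]$; this relies on $G[S]$ inheriting enough spectral expansion from $G$ via EML to admit a perfect matching on $B$. The main subtlety, and the reason (iv) is the hardest property, is that the target set $B$ is determined adversarially by the choice of $M_1$, so the argument must either exploit the flexibility in $M_1$ via augmenting paths in $G[S]$, or recast the two stages as a single defect Tutte-Berge problem on $G\setminus U$ so that the $\tfrac{c-\eps}{1-c+\eps} > 1$ slack can be used uniformly over all admissible $U$.
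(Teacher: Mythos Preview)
Your treatment of (i) and (ii) is essentially the paper's. For (ii) you derive non-bipartiteness of large induced subgraphs from EML directly rather than via Hoffman's bound as the paper does; these are interchangeable. Your final step, passing from a diameter bound on $G[U\setminus R]$ to a bound on the shortest odd cycle, tacitly uses the no-shortcut property of shortest odd cycles, which the paper isolates as \cref{lem:non-bip-odd-cycle}.

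For (iii) and especially (iv) the paper takes a uniformly spectral route that differs from yours. For (iii), rather than EML plus edge-counting, the paper controls $\lambda_2(L_{G[T]})$ via eigenvalue interlacing (\cref{prop:spectrum}): since $\delta(G[T]) \ge (1-c-\eps)d$ by degree-balance and $\lambda_{n-1}(A_G) \le 2\sqrt{d} + o(1)$, one gets $\lambda_2(L_{G[T]}) \ge (1-c-2\eps)d$ (this is where $d \ge 4\eps^{-2}$ enters), and then \cref{cor:exp-ev} yields vertex expansion $\lambda_2/(2\Delta)$. Your EML computation is viable in principle, but the ``careful arithmetic'' does not actually recover the stated $\alpha$: carrying it through with $|W|/n \le (1-c+\eps)/2$ and $\lambda \le \eps d$ yields roughly $\tfrac{1-c-4\eps}{2(1-c+\eps)}$, not $\tfrac{1-c-2\eps}{2(1-c-\eps)}$.

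For (iv) your plan has a genuine gap at exactly the point you flag. The two-stage Hall argument saturates $T\setminus U$ into $S$, but the leftover $B \subseteq S$ is then determined by the choice of $M_1$, and nothing you have written guarantees a perfect matching on $B$; neither the augmenting-path idea nor the Tutte--Berge recasting is carried out, and it is not clear either can be made to work uniformly over all odd $U \subseteq T$ without substantial additional effort. The paper bypasses this entirely via a spectral matching criterion of Brouwer and Haemers (\cref{thm:spectrum-matching}): any graph $H$ on an even number of vertices with $\lambda_{|H|}(L_H) \le 2\lambda_2(L_H)$ has a perfect matching. Applying \cref{prop:spectrum} to $G\setminus U$ and using $\delta(G\setminus U) \ge (c-\eps)d$ (every vertex keeps all its $S$-neighbours) together with $|\lambda_1(A_G)|,\,|\lambda_{n-1}(A_G)| \le 2\sqrt{d} + o(1)$ gives $2\lambda_2(L_{G\setminus U}) - \lambda_m(L_{G\setminus U}) \ge (2(c-\eps)-1)d - O(\sqrt d)$, which is nonnegative precisely because $d \ge (c-1/2-\eps)^{-2}$. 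This handles all $U$ at once, explains the role of that hypothesis, and is the key idea missing from your proposal.
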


Deferring the proof of this lemma to \cref{sec:partition-properties},
let us
first show that $(c, \eps)$-degree-balanced cuts always exist in
regular graphs of large enough degree.

\begin{lemma}\label{lem:split}
  For all $c \in [0,1], \eps >0$
  there is a
  $d_0 \in O\big(\frac{c}{\eps^2}\log^2(\frac{c}{\eps^2})\big)$
  such that
  the following holds.
  For every $d > d_0$, every
  $d$-regular graph $G$
  has a $(c, \eps)$-degree-balanced cut.
\end{lemma}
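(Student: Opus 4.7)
The plan is to construct the cut randomly and apply the Lovász Local Lemma (\cref{lem:lll}). Without loss of generality assume $\eps < c$, since otherwise $S = \emptyset$ already satisfies both requirements. Include each vertex of $G$ in $S$ independently with probability $c$, and for every $u \in V(G)$ let $A_u$ be the bad event that $|N(u) \cap S|$ deviates from its mean $cd$ by more than $\eps d$.

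Since $|N(u) \cap S|$ is a sum of $d$ independent Bernoulli$(c)$ random variables, Chernoff (\cref{thm:chernoff}) applied with $\delta = \eps/c \le 1$ yields
\begin{align*}
  \prob{A_u} \,\le\, 2 \exp\bigl(-\eps^2 d/(3c)\bigr) \eqperiod
\end{align*}
The event $A_u$ is determined by the membership indicators of the vertices in $N(u)$, so $A_u$ is mutually independent of every $A_w$ with $N(w) \cap N(u) = \emptyset$. In a $d$-regular graph the number of $w \ne u$ with $N(w) \cap N(u) \ne \emptyset$ is at most $|N(N(u))| - 1 \le d^2 - 1$. Hence the symmetric form of \cref{lem:lll} gives $\Prob{\bigwedge_u \bar{A}_u} > 0$ as soon as $2 e d^2 \exp(-\eps^2 d/(3c)) \le 1$, i.e., whenever $\eps^2 d/(3c) \ge 2\ln d + \ln(2e)$. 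Letting $K = 3c/\eps^2$, this last inequality is easily seen (by trying $d = C K \log K$ for a sufficiently large constant $C$) to hold for every $d \ge d_0$ with $d_0 = O\bigl(\tfrac{c}{\eps^2} \log(\tfrac{c}{\eps^2})\bigr) \subseteq O\bigl(\tfrac{c}{\eps^2} \log^2(\tfrac{c}{\eps^2})\bigr)$.

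Fix any outcome in which $\bigwedge_u \bar{A}_u$ holds. Then property $(ii)$ of a $(c,\eps)$-degree-balanced cut is satisfied directly, and property $(i)$ comes for free by double counting:
\begin{align*}
  d \cdot |S| \,=\, \sum_{u \in V} |N(u) \cap S| \,\in\, [\,(c-\eps) d n,\; (c+\eps) d n\,] \eqcomma
\end{align*}
which rearranges to $\big||S|-cn\big| \le \eps n$, so the resulting partition is a $(c,\eps)$-degree-balanced cut.

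The only mildly delicate point is identifying the correct dependency structure: because the bad event $A_u$ depends only on the $d$ indicators for $v \in N(u)$, two bad events are dependent only when the corresponding vertices share a neighbor, giving a dependency digraph of out-degree at most $d^2$. This is what keeps $d_0$ polylogarithmic in $c/\eps^2$ rather than blowing up with $n$, which would happen under a naive union bound.
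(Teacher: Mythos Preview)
Your proof is correct and follows the same overall strategy as the paper: sample $S$ by including each vertex independently with probability $c$, bound the bad events $A_u$ via Chernoff, and apply the Lov\'asz Local Lemma using the distance-$2$ dependency structure in a $d$-regular graph.

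The one genuine difference is how you obtain property~$(i)$. The paper argues it separately: it uses the asymmetric LLL to extract a \emph{quantitative} lower bound $\Prob{\bigwedge_u \bar A_u} \ge \exp(-n/d^2)$, applies a second Chernoff bound to get $\Prob{\big||S|-cn\big| > \eps n} \le 2\exp(-\eps^2 n/3c)$, and checks that the latter is smaller so that both good events can hold simultaneously. Your double-counting observation $d|S| = \sum_u |N(u)\cap S|$ derives property~$(i)$ \emph{deterministically} from property~$(ii)$, which is cleaner: it lets you use only the symmetric LLL (no quantitative probability bound needed) and avoids the second Chernoff application entirely. The paper's route, on the other hand, would still work if one only controlled, say, $|N(u)\cap S|$ for $u\in T$ rather than for all $u$, so it is slightly more robust to variations of the statement; but for the lemma as stated your shortcut is the better argument.
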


\begin{proof}
  Independently include every vertex $v \in V(G)$ in $S$ with probability
  $c$.
  Let $A_u$ denote the bad event that
  $\Abs{|N(u, S)| - cd} \ge \eps d$. By the Chernoff bound (\cref{thm:chernoff}),
  we have
  \begin{align}
    \prob{A_u} \le 2 \exp(-\eps^2 d/3c) \eqperiod
  \end{align}
  Note that the event $A_u$ depends only on $A_v$ for $v$ within
  distance $2$ of $u$ in $G$, and there are at most $d^2$ many such
  $v$'s.  We want to apply the Lovász local lemma
  (\cref{lem:lll}) to the events $\set{A_v \mid v \in V(G)}$ and
  $x_v = x$ for some parameter $x$.
  The local lemma conditions then require
  $\prob{A_u} \le x(1-x)^{d^2}$ and this right hand side
  is maximized at $x=\frac{1}{d^2+1}$ where, using the bound
  $1-x = 1-1/(d^2+1) \ge e^{-1/d^2}$,
  it becomes
  \[
  x \cdot(1-x)^{d^2} =
  \frac{1}{d^2+1} \cdot \! \left(1-\frac{1}{d^2+1}\right)^{d^2} \ge
  \frac{1}{d^2+1} \cdot \frac{1}{e}.
  \]
  For large enough $d = \Omega(\frac{c}{\eps^2} \log(\frac{c}{\eps^2}))$,
  this is much larger than $\prob{A_u} \le 2\exp(-\eps^2 d/3c)$
  so by \cref{lem:lll} we conclude that
  $\prob{\wedge_{v \in V(G)} \bar{A}_v} > (1-x)^n \ge
  \exp(-\frac{n}{d^2})$.
  All that remains is to argue that there is a positive probability
  that both this happens as well as the size of $S$ being close to~$cn$.
  In particular if
  $\Prob{\Abs{|S| - cn} \ge \eps d} < \prob{\wedge_{v \in
      V(G)} \bar{A}_v}$, the lemma follows.

  By the Chernoff bound (\cref{thm:chernoff}), the cardinality of $S$
  is in $[cn \pm \eps n]$ except with probability at most
  $2 \exp(-\eps^2n/3c)$.  Hence it is sufficient that
  $2 \exp(-\frac{\eps^2n}{3c}) < \exp(-\frac{n}{d^2})$, and for
  $d \gg \sqrt{c}/\eps$ this clearly holds. This concludes the proof.
\end{proof}

With \cref{lem:partition-properties,lem:split} at hand, proving the
\hyperref[lem:partition]{Partition Lemma} simply boils down to
choosing appropriate values for the different constants.

\begin{proof}[Proof of Lemma \ref{lem:partition}]
  Fix $c = 3/4$,
  $\eps = \kappa = 1/16$ and
  $\ell = 7$.
  Let $(S, T)$ be the $(c, \eps)$-degree-balanced cut as
  guaranteed to
  exist in $G$ by
  \cref{lem:split}.
  The cut $(S, T)$ satisfies all the properties of
  \cref{lem:partition-properties}. Hence
  all that remains is to
  verify that the constants were
  chosen appropriately.
  \begin{enumerate}[label=$(\roman*)$]

  \item $G[T]$ is \degrich{1/3}{d/32}: we have that
    $|T| \ge (1 - c - \eps)n = 3n/16$.
    Thus if the graph $G$ is
    \degrich{1/16}{d/32}, the statement follows.
    Observe that for our choice of $\kappa$ and $d$ large enough
    (e.g. $d \ge 2^{16}$ suffices) it
    holds that
    \[
    d \left(
    \kappa -
    2 \sqrt{\frac{1 - \kappa}{\kappa d}}
    \right)
    =
    d \left(
    \frac{1}{16} - 2 \sqrt{\frac{15}{d}}
    \right)
    \ge
    d/32 .
    \]

  \item $G[T]$ is \nonbip{1/2}{1/12, 25}:
    as we may assume that $d \ge 144$,
    this property is satisfied.

  \item $G[T]$ is a $1/3$-expander: the expansion $\alpha$
    guaranteed by \cref{lem:partition-properties} is
    \[
    \alpha =
    \frac{1 - c - 2 \eps}{ 2(1 - c - \eps)} =
    \frac{1/8}{3/8} = 1/3 \eqperiod
    \]
  \end{enumerate}
  The statement follows.
\end{proof}

All that remains is to prove \cref{lem:partition-properties}. In the
following section we recall some results from spectral graph theory
needed for the proof of \cref{lem:partition-properties} which is then
given in
\cref{sec:partition-properties}.

\subsection{Spectral Bounds}

Let us establish some notation and
recall some results from spectral graph theory.

We denote the adjacency matrix of a graph $G$ by $A_G$ and by $L_G$
its Laplacian $L_G = D_G - A_G$ (where $D_G$ is the diagonal
matrix containing the degrees of the vertices of $G$). For a matrix
$A \in \mathbb{R}^{n \times n}$, denote by
$\lambda_1(A) \le \lambda_2(A) \le \cdots \le \lambda_n(A)$ the
eigenvalues of $A$ in non-decreasing order.

The \emph{edge expansion} of a graph $G$ on $n$ vertices
is
\begin{align}
  \Phi(G) = \min_{\substack{U \subseteq V(G)\\ |U| \le n/2}}
    \frac{|E(U, V(G) \setminus U)|}{|U|} \eqperiod
\end{align}

It is well-known that if the second smallest eigenvalue of the
Laplacian is large, then the graph is a good expander. Note that the
following theorem does not require that $G$ is regular.

\begin{theorem}[\cite{mohar}]\label{thm:exp-edge-ev}
  For all graphs $G$ it holds that
  $\frac{\lambda_2(L_G)}{2} \le \Phi(G)$.
\end{theorem}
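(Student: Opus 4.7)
The plan is to use the variational (Courant--Fischer) characterization of the second smallest eigenvalue of the Laplacian together with a well-chosen test vector coming from the extremal cut.

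First I would recall that for any graph $G$ on $n$ vertices one has the identity $x^{T} L_{G} x = \sum_{\{u,v\} \in E(G)} (x_{u} - x_{v})^{2}$, and that since $L_{G} \mathbf{1} = 0$, the second smallest eigenvalue admits the variational formula
\[
\lambda_{2}(L_{G}) = \min_{\substack{x \ne 0 \\ x \perp \mathbf{1}}} \frac{x^{T} L_{G} x}{x^{T} x} \eqperiod
\]
So to prove the upper bound $\lambda_{2}(L_{G}) \le 2\Phi(G)$ it suffices, for every $U \subseteq V(G)$ with $|U| \le n/2$, to exhibit a single nonzero vector $x \perp \mathbf{1}$ whose Rayleigh quotient is at most $2|E(U, V \setminus U)|/|U|$.

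Second, I would use the standard centered indicator of $U$: let $x_{u} = 1 - |U|/n$ if $u \in U$ and $x_{u} = -|U|/n$ otherwise. By construction $\sum_{u} x_{u} = 0$, so $x \perp \mathbf{1}$. A short calculation gives
\[
x^{T} x = |U|\bigl(1 - |U|/n\bigr)^{2} + (n - |U|)(|U|/n)^{2} = |U|\bigl(1 - |U|/n\bigr) \ge |U|/2 \eqcomma
\]
where the last inequality uses $|U| \le n/2$. Moreover, $(x_{u} - x_{v})^{2}$ equals $1$ on edges crossing the cut and $0$ on edges inside $U$ or inside $V \setminus U$, so $x^{T} L_{G} x = |E(U, V \setminus U)|$.

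Combining these two computations with the variational principle yields
\[
\lambda_{2}(L_{G}) \le \frac{x^{T} L_{G} x}{x^{T} x} \le \frac{2\,|E(U, V \setminus U)|}{|U|} \eqperiod
\]
Taking the minimum over all $U \subseteq V(G)$ with $|U| \le n/2$ gives $\lambda_{2}(L_{G})/2 \le \Phi(G)$, as claimed. There is no real obstacle here: the only thing to check carefully is that the centered indicator is orthogonal to $\mathbf{1}$ and that the lower bound $|U|(1-|U|/n) \ge |U|/2$ is where the $n/2$ size restriction in the definition of $\Phi(G)$ gets used.
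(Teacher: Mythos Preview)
Your proof is correct and is the standard argument for the easy direction of the discrete Cheeger inequality. Note however that the paper does not give its own proof of this statement: it is quoted as a known result with a citation to \cite{mohar}, so there is nothing to compare against beyond observing that your argument is the usual one.
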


\begin{corollary}\label{cor:exp-ev}
  All graphs $G$ have vertex expansion
  $\frac{\lambda_2(L_G)}{2 \Delta(G)}$.
\end{corollary}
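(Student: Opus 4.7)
The plan is to derive this corollary directly from \cref{thm:exp-edge-ev} by relating vertex expansion to edge expansion via the maximum degree. Concretely, I would fix any subset $U \subseteq V(G)$ with $|U| \le n/2$ and show that
\[
  |N(U, V(G) \setminus U)| \ge \frac{\lambda_2(L_G)}{2 \Delta(G)} \cdot |U|,
\]
which, by \cref{def:high-degree-rich}-style reasoning on the definition of an $\alpha$-expander stated in \cref{sec:prelim-graph}, is exactly what is required.

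First I would invoke \cref{thm:exp-edge-ev} to obtain the edge-expansion bound
\[
  |E(U, V(G) \setminus U)| \ge \Phi(G) \cdot |U| \ge \frac{\lambda_2(L_G)}{2} \cdot |U|.
\]
Then I would pass from edge expansion to vertex expansion by the standard pigeonhole argument: every edge in $E(U, V(G) \setminus U)$ has exactly one endpoint in $N(U, V(G) \setminus U)$, and each vertex of $N(U, V(G) \setminus U)$ is incident to at most $\Delta(G)$ such crossing edges. Hence
\[
  |N(U, V(G) \setminus U)| \ge \frac{|E(U, V(G) \setminus U)|}{\Delta(G)} \ge \frac{\lambda_2(L_G)}{2 \Delta(G)} \cdot |U|,
\]
and since $U$ was arbitrary this establishes the claimed vertex expansion.

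There is no real obstacle here; the whole point is that the only non-trivial step (spectral lower bound on edge expansion) has already been isolated in \cref{thm:exp-edge-ev}, and the passage from edge to vertex expansion only loses a factor of $\Delta(G)$ by a one-line counting argument. The only thing to be mildly careful about is that the inequality $|N(U, V(G) \setminus U)| \ge |E(U, V(G) \setminus U)|/\Delta(G)$ uses the maximum degree of $G$ (not of $G[V \setminus U]$), which is exactly the quantity appearing in the corollary's statement.
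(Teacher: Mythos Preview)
Your proposal is correct and follows essentially the same approach as the paper: apply \cref{thm:exp-edge-ev} to bound the number of crossing edges, then divide by $\Delta(G)$ since each vertex in $N(U, V(G)\setminus U)$ absorbs at most $\Delta(G)$ of them. The paper's proof is the same one-line counting argument, just stated more tersely.
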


\begin{proof}
  As every vertex has at most $\Delta(G)$ neighbors, the neighborhood
  of every set $U$, satisfying $|U| \le n/2$,
  is of size at least $\Phi(G) /\Delta(G)$. The statement follows from
  \cref{thm:exp-edge-ev}.
\end{proof}

Recall that regular random graphs are very good spectral
expanders.
For the sake of conciseness, let
$\lambda = \max \set{\abs{\lambda_1(A_G)}, \abs{\lambda_{n-1}(A_G)}}$.

\begin{theorem}[\cite{friedman2004proof}]\label{thm:rand-ev}
  Fix $d \ge 3$ and let $nd$ be even. Then, for
  $G \sim \calG(n, d)$ it holds asymptotically almost surely as $n \rightarrow \infty$ that
  $\lambda \le 2\sqrt{d-1} + o(1)$.
\end{theorem}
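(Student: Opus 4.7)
The plan is to bound the extremal nontrivial eigenvalues of $A_G$ via the \emph{trace method}, exploiting $\mathrm{Tr}(A_G^{2k}) = \sum_i \lambda_i(A_G)^{2k}$. Since $d$ is the top eigenvalue with multiplicity one (eigenvector $\mathbf{1}$), subtracting $d^{2k}$ and then taking $(2k)$-th roots would, after a Markov step, yield $\lambda \le (\mathbb{E}[\mathrm{Tr}(A_G^{2k})] - d^{2k})^{1/2k}$ up to lower-order corrections. Choosing $k = k(n) \to \infty$ slowly, so that $n^{1/2k} \to 1$ while $k$ remains much smaller than the girth of typical regions of $G$, would then give the claimed almost-sure bound with an $o(1)$ slack.

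First I would pass to the configuration model, which agrees with $\calG(n, d)$ up to a constant probability factor (so almost-sure events transfer), because pairing-level independence makes the enumeration tractable. There $\mathbb{E}[\mathrm{Tr}(A_G^{2k})]$ is a sum over closed length-$2k$ walks, and I would group walks by their \emph{shape}: the isomorphism class of the underlying edge multigraph together with the traversal pattern. For walks whose shape is a tree, the count is classical via Catalan-number combinatorics; starting from a fixed vertex one obtains asymptotically $(2\sqrt{d-1})^{2k}$ closed walks, matching the spectral radius of the infinite $d$-regular tree, and summing over the $n$ starting vertices yields the intended main term of $n \cdot (2\sqrt{d-1})^{2k}$.

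The hard part — indeed Friedman's deep contribution — is controlling walks whose shape contains cycles (so-called ``tangles''). Naive enumeration over such shapes gives a constant strictly worse than $2\sqrt{d-1}$, because random configurations occasionally spawn short dense subgraphs that support disproportionately many walks, and these rare events would otherwise produce outlier eigenvalues. Friedman's original argument surmounts this by selectively excluding the rare tangle configurations and performing an ``ideal'' trace enumeration, with the exclusion absorbed into the $o(1)$ slack in the constant. A considerably shorter alternative due to Bordenave passes through the non-backtracking walk operator combined with the Ihara--Bass formula, where the analogous trace estimate is easier because non-backtracking walks decompose more cleanly into tree excursions. Either way, once such a trace bound is in place, a union over the two extremal eigenvalues $\lambda_1(A_G)$ and $\lambda_{n-1}(A_G)$ completes the proof.
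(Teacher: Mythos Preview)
The paper does not give its own proof of this statement: it is simply quoted as a known result with a citation to Friedman~\cite{friedman2004proof}, and is used as a black box in \cref{sec:partition-properties}. So there is nothing to compare your proposal against within the paper itself.

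That said, your sketch is a fair high-level summary of how the result is actually established in the literature, correctly identifying the trace method, the configuration-model reduction, the tree-walk main term $n(2\sqrt{d-1})^{2k}$, and the fact that the real difficulty lies in controlling contributions from ``tangled'' walk shapes. You also correctly note Bordenave's alternative via the non-backtracking operator and Ihara--Bass. What you have written is an outline rather than a proof: the genuinely hard step---showing that the tangle contributions can be controlled to within the $o(1)$ slack---is precisely where Friedman's argument runs to hundreds of pages, and your proposal acknowledges this without attempting it. For the purposes of this paper that is entirely appropriate, since the theorem is imported wholesale; but if you were asked to \emph{prove} the statement rather than sketch the landscape, the gap would be that you have not actually carried out (or even outlined in any detail) the selective-trace or moment-matching machinery that makes the tangle bound go through.
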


Another well-known result from spectral graph theory is that the smallest
eigenvalue of the adjacency matrix puts a limit on the maximum size of
an independent set.

\begin{theorem}[Hoffman's bound]\label{thm:hoffman}
  Let $G$ be a $d$-regular graph on $n$ vertices. If $S \subseteq
  V(G)$ is an independent set of $G$, then
  \[
  |S| \le -\frac{n \cdot \lambda_1(A_G)}{d - \lambda_1(A_G)} \eqperiod
  \]
\end{theorem}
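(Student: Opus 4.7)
The plan is to follow the standard spectral proof of Hoffman's bound, using the adjacency matrix's variational characterization. Let $A = A_G$ and $s = |S|$. Since $G$ is $d$-regular, the all-ones vector $\mathbf{1}$ is an eigenvector of $A$ with eigenvalue $d$; this is the largest eigenvalue $\lambda_n(A) = d$. The idea is to test the Rayleigh quotient of $A$ against a vector built from the indicator $\mathbf{1}_S$ of the independent set, but shifted so that it is orthogonal to $\mathbf{1}$ so that we can bound it from below using $\lambda_1(A)$ rather than $\lambda_n(A)$.

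Concretely, first I would define
\[
v = \mathbf{1}_S - \frac{s}{n}\mathbf{1},
\]
and verify by direct computation that $v \perp \mathbf{1}$. Next I would compute $v^{T}Av$ by expanding the three cross terms. The key input here is that $S$ is independent, which gives $\mathbf{1}_S^{T} A \mathbf{1}_S = 2|E(G[S])| = 0$. Using $d$-regularity one also gets $\mathbf{1}_S^{T}A\mathbf{1} = ds$ and $\mathbf{1}^{T}A\mathbf{1} = dn$, and combining these yields
\[
v^{T}Av = -\frac{d s^{2}}{n} \eqperiod
\]
A similarly routine calculation gives $v^{T}v = s(n-s)/n$.

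Now I would invoke the Rayleigh--Ritz inequality, which says $v^{T}Av \ge \lambda_{1}(A)\,v^{T}v$ for every real vector $v$. Substituting the two quantities computed above produces
\[
-\frac{d s^{2}}{n} \,\ge\, \lambda_{1}(A)\cdot\frac{s(n-s)}{n} \eqperiod
\]
Cancelling $s/n$ (trivial if $s=0$) and rearranging gives $s\bigl(\lambda_{1}(A)-d\bigr) \ge \lambda_{1}(A)\,n$. Since a $d$-regular graph with $d \ge 1$ has $\lambda_{1}(A) < 0$, dividing by $\lambda_{1}(A) - d < 0$ flips the inequality and yields precisely
\[
|S| \,\le\, -\frac{n \cdot \lambda_{1}(A_{G})}{d - \lambda_{1}(A_{G})} \eqperiod
\]

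There is no real obstacle: the entire argument is a two-line Rayleigh-quotient computation, and the only thing to be careful about is the direction of the inequality when dividing by $\lambda_{1}(A) - d$ (negative) and handling the degenerate case $\lambda_{1}(A) = 0$, which only arises for edgeless $G$ where the bound becomes vacuous and the claim is trivial.
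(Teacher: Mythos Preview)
Your argument is correct and is the standard proof of Hoffman's bound. The paper does not prove this theorem at all: it is stated as a classical result and used as a black box, so there is nothing to compare against. One small expository quibble: the inequality $v^{T}Av \ge \lambda_1(A)\,v^{T}v$ already holds for every vector $v$, so the orthogonalization to $\mathbf{1}$ is not needed to invoke $\lambda_1$ rather than $\lambda_n$ as you suggest; its actual role is to make the value $v^{T}Av = -ds^2/n$ come out nontrivial.
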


\begin{corollary}\label{prop:non-bip}
  Let $G$ be a $d$-regular graph on $n$ vertices.
  For any set $S \subseteq V(G)$ it holds that
  if $|S| > -\frac{2 \cdot n \cdot \lambda_1(A_G)}{d - \lambda_1(A_G)}$,
  then $G[S]$ is not bipartite.
\end{corollary}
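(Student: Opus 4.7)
The plan is to prove the corollary by contradiction, using Hoffman's bound (\cref{thm:hoffman}) applied to the original graph $G$ together with a standard pigeonhole argument on the bipartition.

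Suppose for contradiction that $G[S]$ is bipartite. Then there exists a partition $S = A \disjointunion B$ such that every edge of $G[S]$ has one endpoint in $A$ and one in $B$. In particular, both $A$ and $B$ are independent sets of $G[S]$. Since $G[S]$ is the induced subgraph of $G$ on the vertex set $S$, any independent set of $G[S]$ is also an independent set of $G$ itself (removing edges cannot create new adjacencies). Hence $A$ and $B$ are independent sets of the $d$-regular graph $G$.

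By the pigeonhole principle at least one of $A, B$ has size at least $|S|/2$. Say $|A| \ge |S|/2$. Applying Hoffman's bound (\cref{thm:hoffman}) to the independent set $A$ in $G$ gives
\begin{align*}
  \frac{|S|}{2} \le |A| \le -\frac{n \cdot \lambda_1(A_G)}{d - \lambda_1(A_G)} \eqcomma
\end{align*}
which rearranges to $|S| \le -\frac{2 \cdot n \cdot \lambda_1(A_G)}{d - \lambda_1(A_G)}$, contradicting the hypothesis. Thus $G[S]$ cannot be bipartite.

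There is no real obstacle here: the only small thing to verify is the trivial observation that independence is preserved under passing to induced subgraphs (so that Hoffman's bound, which is a statement about $G$, can be invoked for a large independent set contained in $S$), and that $\lambda_1(A_G) < 0$ so the right-hand side is positive and the inequality makes sense (this holds because $G$ has at least one edge, whence $A_G$ has both positive and negative eigenvalues by the trace being zero).
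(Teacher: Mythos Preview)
Your proof is correct and takes essentially the same approach as the paper: assume $G[S]$ is bipartite, take the larger side of the bipartition, and apply Hoffman's bound to it to derive a contradiction. The paper's version is just terser and omits the sanity checks you spell out (independence passing from $G[S]$ to $G$, and $\lambda_1(A_G)<0$).
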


\begin{proof}
  For the sake of contradiction
  suppose that there is an $S \subseteq V(G)$ such that $G[S]$ is
  bipartite and
  $|S| > -\frac{2 \cdot n \cdot \lambda_1(A_G)}{d - \lambda_1(A_G)}$.
  Let us denote the partition by $S = A \disjointunion B$.
  W.l.o.g., assume that
  $|A| \ge |S|/2$ and apply \cref{thm:hoffman}
  to $A$ to conclude that
  $
  - \frac{n \cdot \lambda_1(A_G)}{d - \lambda_1(A_G)}
  <
  |A|
  \le
  - \frac{n \cdot \lambda_1(A_G)}{d - \lambda_1(A_G)}
  $.
\end{proof}

Let us recall the mixing
lemma; it states that between linearly sized sets of vertices there
are about as many edges as expected in a random regular
graph.

\begin{lemma}[Expander Mixing Lemma \cite{Hoory06expandergraphs}]
  \label{lem:mix}
  Let $G$ be a $d$-regular graph on $n$ vertices.
  Then for all
  $S, T \subseteq V(G)$:
  \begin{align*}
    \ABS{ \abs{E(S, T)} - \frac{d |S| |T|}{n} }
    \le
    \lambda \sqrt{|S| |T|} \eqperiod
  \end{align*}
\end{lemma}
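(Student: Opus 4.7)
The plan is to prove the bound via the standard spectral argument on the adjacency matrix $A_G$. Since $G$ is $d$-regular, $A_G$ is symmetric with largest eigenvalue $\lambda_n(A_G) = d$ and corresponding eigenvector $v_n = \mathbf{1}/\sqrt{n}$ (the normalized all-ones vector). Let $v_1, \ldots, v_{n-1}, v_n$ form an orthonormal eigenbasis of $A_G$ with eigenvalues $\lambda_1 \le \cdots \le \lambda_{n-1} \le \lambda_n = d$. By the definition $\lambda = \max\{|\lambda_1(A_G)|, |\lambda_{n-1}(A_G)|\}$, every non-top eigenvalue then satisfies $|\lambda_i| \le \lambda$ for $i < n$.

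The first step is to recognize the quadratic form $\mathbf{1}_S^{\top} A_G\, \mathbf{1}_T$, where $\mathbf{1}_S, \mathbf{1}_T \in \R^n$ are the characteristic vectors of $S$ and $T$, as counting the number of ordered pairs $(u,w)$ with $u \in S$, $w \in T$, and $\{u,w\} \in E$. For disjoint $S, T$ this quantity equals $|E(S,T)|$; in general it differs from $|E(S,T)|$ only by an additive correction counting edges inside $S \cap T$, which is nonnegative and so can be absorbed without affecting the stated bound.

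Second, I would expand $\mathbf{1}_S = \sum_i \alpha_i v_i$ and $\mathbf{1}_T = \sum_i \beta_i v_i$ in the eigenbasis. A direct computation gives $\alpha_n = \langle \mathbf{1}_S, v_n\rangle = |S|/\sqrt{n}$ and $\beta_n = |T|/\sqrt{n}$, so that
\begin{align*}
\mathbf{1}_S^{\top} A_G\, \mathbf{1}_T \;=\; \sum_i \lambda_i \alpha_i \beta_i \;=\; \frac{d\,|S|\,|T|}{n} \;+\; \sum_{i<n} \lambda_i \alpha_i \beta_i \eqperiod
\end{align*}
The final step is to bound $\bigl|\sum_{i<n} \lambda_i \alpha_i \beta_i\bigr|$ by $\lambda \sqrt{|S|\,|T|}$, using $|\lambda_i| \le \lambda$ for $i<n$, one application of Cauchy--Schwarz on the sum $\sum_{i<n} |\alpha_i|\,|\beta_i|$, and the identities $\sum_i \alpha_i^2 = \|\mathbf{1}_S\|^2 = |S|$ and $\sum_i \beta_i^2 = |T|$.

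I do not expect any real obstacle: the argument is entirely standard and essentially mechanical once the spectral decomposition is in place. The only subtlety worth flagging is the bookkeeping required to reconcile the quadratic-form count $\mathbf{1}_S^{\top} A_G \mathbf{1}_T$ with the paper's set-cardinality definition $E(S,T) = \{\,\{u,w\}\in E : u \in S, w \in T\,\}$ when $S \cap T \neq \emptyset$, but as noted this discrepancy only makes the bound easier to satisfy.
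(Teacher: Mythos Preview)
The paper does not prove this lemma; it is cited from \cite{Hoory06expandergraphs} without proof, so there is no in-paper argument to compare against. Your spectral-decomposition approach is exactly the standard proof found in that reference, and the main line of reasoning (eigenbasis expansion, isolating the top eigenvector contribution, Cauchy--Schwarz on the remainder) is correct.

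That said, there is a genuine gap in your handling of the overlap case. Your claim that the discrepancy between $\mathbf{1}_S^{\top} A_G\, \mathbf{1}_T$ and $|E(S,T)|$ ``only makes the bound easier to satisfy'' is false. With the paper's set-theoretic definition $E(S,T) = \{\,\{u,w\}\in E : u\in S,\, w\in T\,\}$, one has $\mathbf{1}_S^{\top} A_G\, \mathbf{1}_T = |E(S,T)| + |E(S\cap T, S\cap T)|$, and subtracting a nonnegative quantity can push $|E(S,T)| - d|S||T|/n$ \emph{below} $-\lambda\sqrt{|S||T|}$. Concretely, take $S = T = V$: then $|E(S,T)| = |E| = nd/2$ while $d|S||T|/n = dn$, so the left-hand side is $nd/2$, which exceeds $\lambda n$ whenever $\lambda < d/2$ (e.g.\ for any Ramanujan-like graph with $d \ge 17$). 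Thus the lemma as literally stated with the paper's $E(S,T)$ notation is not true for arbitrary overlapping $S,T$; it holds for the ordered-pair count $\mathbf{1}_S^{\top} A_G\, \mathbf{1}_T$, which is what your argument actually bounds. This is harmless for the paper's purposes, since the only application (\cref{sec:partition-properties}, item~(i)) takes $S$ and $T$ disjoint, but you should not paper over the issue: either restrict to disjoint $S,T$, or redefine the edge count as the quadratic form.
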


We also rely on the following theorem that relates the spectrum of the Laplacian
and the existence of a perfect matching.

\begin{theorem}[\cite{BROUWER2005155}]\label{thm:spectrum-matching}
  Let $G$ be a graph on $n$ vertices. If $n$ is even and
  $\lambda_n(L_G) \le 2\lambda_2(L_G)$, then~$G$ has a perfect matching.
\end{theorem}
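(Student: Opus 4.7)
The final statement is a classical spectral matching theorem of Brouwer and Haemers which is cited as a black box here; below is a plausible proof sketch. I would argue by contrapositive: assuming $G$ has no perfect matching on its $n$ (even) vertices, show that $\lambda_n(L_G) > 2 \lambda_2(L_G)$.

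The combinatorial input is Tutte's theorem. Since $n$ is even, the absence of a perfect matching yields a set $S \subseteq V(G)$ for which the number $m$ of odd connected components of $G - S$ satisfies $m \ge |S| + 2$ (the deficit must have the same parity as $n$, hence is at least $2$); label these odd components $C_1, \ldots, C_m$ and set $s = |S|$. The spectral input is eigenvalue interlacing. I would form the $(m+1) \times (m+1)$ quotient matrix $Q$ of $L_G$ with respect to the partition $(S, C_1, \ldots, C_m)$; by Haemers' quotient-matrix interlacing theorem, the eigenvalues of $Q$ interlace those of $L_G$, so in particular $\lambda_2(L_G) \le \lambda_2(Q)$ while $\lambda_n(L_G) \ge \lambda_{m+1}(Q)$.

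It then remains to analyze $Q$ directly. Its structure is rigid: each $C_i$-block contributes an entry equal to (essentially) the average boundary of $C_i$ to $S$, and $S$ is a single block whose Laplacian row entries to the various $C_i$ are negative and sum (in magnitude) to a controlled quantity. Exploiting the slack $m \ge s + 2$, one aims to exhibit on the one hand a ``small'' eigenvalue of $Q$ bounded above by some $\mu$ (hence bounding $\lambda_2(Q)$), and on the other a ``large'' eigenvalue of $Q$ exceeding $2\mu$. The factor-of-$2$ gap should emerge precisely from the $+2$ slack, which leaves enough linear-algebraic room in $\R^{m+1}$ to accommodate both test directions without collision.

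The main obstacle is the bookkeeping in this last step: extracting the exact factor $2$ from the quotient-matrix spectrum requires a delicate balancing of the edge counts from $S$ to the $C_i$ against the block sizes $|C_i|$. As a fallback, a direct Courant--Fischer argument should do the same work: upper-bound $\lambda_2(L_G)$ by a Rayleigh quotient on an indicator of a single odd component shifted to be orthogonal to $\mathbf{1}$, and lower-bound $\lambda_n(L_G)$ by the Rayleigh quotient of a signed combination $\chi_{C_i} - \chi_{C_j}$ for two distinct odd components, which roughly doubles the numerator while keeping the normalization unchanged.
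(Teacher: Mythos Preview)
The paper does not prove this theorem; it is quoted from \cite{BROUWER2005155} and invoked as a black box in the proof of \cref{lem:partition-properties}(iv). You correctly note this at the outset, so there is no paper proof to compare against.

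As a reconstruction of the Brouwer--Haemers argument, your sketch starts in the right place: Tutte's theorem followed by an interlacing or variational argument is indeed their approach, and the parity observation forcing deficit at least $2$ is correct. But both proposed endgames have real gaps. For the quotient-matrix route, the tuple $(S, C_1, \ldots, C_m)$ of \emph{odd} components need not partition $V(G)$ (even components of $G-S$ may exist), and more importantly you give no mechanism by which $m \ge s+2$ forces $\lambda_{m+1}(Q) > 2\lambda_2(Q)$; that is the entire content of the theorem, not ``bookkeeping''. For the Courant--Fischer fallback, the Rayleigh quotient of $\chi_{C_i} - \chi_{C_j}$ with respect to $L_G$ is $\tfrac{e(C_i,S) + e(C_j,S)}{|C_i| + |C_j|}$, while that of a shifted indicator of $C_k$ is $\tfrac{e(C_k,S)}{|C_k|(1 - |C_k|/n)}$. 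In the symmetric case where all $|C_i|$ and all $e(C_i,S)$ are equal, these become $e/c$ and $e/(c(1-c/n))$ respectively, so your lower bound on $\lambda_n$ sits \emph{below} your upper bound on $\lambda_2$, not above twice it. The claim that $\chi_{C_i} - \chi_{C_j}$ ``roughly doubles the numerator while keeping the normalization unchanged'' is simply false: both numerator and denominator roughly double relative to a single indicator, so the quotient is unchanged. Extracting the factor $2$ in Brouwer--Haemers requires a more delicate argument that your sketch does not supply.
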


The following statements consider large induced subgraphs
$H \subseteq G$.  \cref{prop:spectrum} states that if we have good
control of the degrees in $H$, then we have good control
of the spectrum of the Laplacian of $H$ in terms of the spectrum of
the adjacency matrix of $G$.
The proof uses Weyl's
theorem and Cauchy's interlacing theorem, so let us first state these.

\begin{theorem}[Weyl]\label{thm:weyl}
  Let $A, B \in \mathbb{R}^{n \times n}$ be Hermitian.
  Then, for all
  $k \in [n]$,
  \begin{align*}
    \lambda_k(A) + \lambda_1(B) \le \lambda_k(A+B) \le \lambda_k(A) +
    \lambda_n(B) \eqperiod
  \end{align*}
\end{theorem}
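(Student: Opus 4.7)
The plan is to derive both inequalities from the Courant--Fischer min-max characterization of eigenvalues of Hermitian matrices. Recall that for any Hermitian $M \in \mathbb{R}^{n \times n}$ and any $k \in [n]$, one has
\begin{align*}
\lambda_k(M) \;=\; \min_{\substack{U \subseteq \mathbb{R}^n \\ \dim U = k}} \;\max_{\substack{v \in U \\ v \ne 0}} \frac{v^T M v}{v^T v} \eqperiod
\end{align*}
I would take this as a black box; it is a classical fact that follows from diagonalizing $M$ in an orthonormal basis and combining a dimension count with orthogonality of eigenspaces.

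Given the variational formula, the key observation is that for any Hermitian $B$ and any unit vector $v$, the Rayleigh quotient $v^T B v$ lies between $\lambda_1(B)$ and $\lambda_n(B)$, and that Rayleigh quotients add: $v^T(A+B)v = v^T A v + v^T B v$. For the upper bound, I would fix a $k$-dimensional subspace $U^\star$ that achieves the minimum in the Courant--Fischer formula for $\lambda_k(A)$, so that $\max_{v \in U^\star,\,\|v\|=1} v^T A v = \lambda_k(A)$. Then for every unit $v \in U^\star$,
\begin{align*}
v^T(A+B)v \;=\; v^T A v + v^T B v \;\le\; v^T A v + \lambda_n(B) \eqperiod
\end{align*}
Maximizing the right-hand side over $v \in U^\star$ yields $\lambda_k(A) + \lambda_n(B)$, and because $\lambda_k(A+B)$ is the minimum of $\max_{v \in U} v^T(A+B)v$ over all $k$-dimensional subspaces $U$, specializing to $U = U^\star$ gives the desired bound $\lambda_k(A+B) \le \lambda_k(A) + \lambda_n(B)$.

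For the lower bound I would avoid repeating the argument and instead apply the upper bound to the pair $(A+B, -B)$ in place of $(A, B)$. This yields $\lambda_k(A) = \lambda_k\bigl((A+B)+(-B)\bigr) \le \lambda_k(A+B) + \lambda_n(-B) = \lambda_k(A+B) - \lambda_1(B)$, which rearranges to $\lambda_k(A) + \lambda_1(B) \le \lambda_k(A+B)$, as required. The only real subtlety in the whole argument is keeping track of the indexing convention: since the excerpt orders eigenvalues $\lambda_1 \le \cdots \le \lambda_n$, the upper perturbation pairs $\lambda_k(A)$ with the \emph{largest} eigenvalue $\lambda_n(B)$ and the lower perturbation with the \emph{smallest} eigenvalue $\lambda_1(B)$, and the sign flip $B \mapsto -B$ correctly swaps these two roles. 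No serious obstacle is expected, as Weyl's inequality is a textbook consequence of Courant--Fischer; its role in the paper is purely as a tool, combined with Cauchy interlacing, to relate the spectrum of the Laplacian of an induced subgraph to the spectrum of the adjacency matrix of the host graph.
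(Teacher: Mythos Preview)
Your proof is correct and follows the standard textbook route via the Courant--Fischer min-max principle. Note, however, that the paper does not supply its own proof of Weyl's theorem: it is stated as a classical fact and used as a black box (together with Cauchy interlacing) in the proof of \cref{prop:spectrum}, so there is nothing to compare against.
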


\begin{theorem}[Interlacing Theorem]\label{thm:interlace}
  Suppose $A \in \mathbb{R}^{n \times n}$ is symmetric. Let
  $B \in \mathbb{R}^{m \times m}$,
  with $m < n$, be a principal submatrix.
  Then, for all $k \in [m]$,
  \begin{align*}
    \lambda_k(A) \le \lambda_k(B) \le \lambda_{k+n-m}(A) \eqperiod
  \end{align*}
\end{theorem}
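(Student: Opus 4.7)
The plan is to prove the interlacing theorem via the Courant--Fischer min-max characterisation of eigenvalues, which is the standard tool for comparing the spectra of related symmetric matrices. Recall that for any symmetric $A \in \mathbb{R}^{n \times n}$, writing $R_A(x) = x^T A x / x^T x$ for the Rayleigh quotient, one has
\[
\lambda_k(A) \;=\; \min_{\substack{S \subseteq \mathbb{R}^n\\ \dim S = k}} \max_{x \in S \setminus \set{0}} R_A(x) \;=\; \max_{\substack{S \subseteq \mathbb{R}^n\\ \dim S = n - k + 1}} \min_{x \in S \setminus \set{0}} R_A(x) \eqperiod
\]

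The key observation is that if $B$ is the principal submatrix of $A$ indexed by some set $I \subseteq [n]$ of size $m$, then the natural coordinate embedding $\iota \colon \mathbb{R}^m \hookrightarrow \mathbb{R}^n$ that places the coordinates at positions in $I$ and zeros elsewhere is an isometry satisfying $R_A(\iota(y)) = R_B(y)$ for every nonzero $y \in \mathbb{R}^m$. Consequently any subspace $S \subseteq \mathbb{R}^m$ lifts to a subspace $\iota(S) \subseteq \mathbb{R}^n$ of the same dimension on which the two Rayleigh quotients agree pointwise.

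For the upper bound $\lambda_k(A) \le \lambda_k(B)$, I would take a $k$-dimensional subspace $S^{\star} \subseteq \mathbb{R}^m$ attaining the min-max for $B$ and use $\iota(S^{\star})$ as a feasible candidate in the min-max formula for $A$, which immediately yields $\lambda_k(A) \le \max_{y \in S^{\star}} R_B(y) = \lambda_k(B)$. For the lower bound $\lambda_k(B) \le \lambda_{k + n - m}(A)$, I would take $T^{\star} \subseteq \mathbb{R}^m$ of dimension $m - k + 1$ attaining the max-min characterisation of $\lambda_k(B)$; since $m - k + 1 = n - (k + n - m) + 1$, the lifted subspace $\iota(T^{\star}) \subseteq \mathbb{R}^n$ has exactly the right dimension to serve as a feasible candidate in the max-min formula for $\lambda_{k + n - m}(A)$, yielding the desired inequality.

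There is no genuine obstacle here: this is the textbook Courant--Fischer argument. The only things that require care are the dimension bookkeeping (the index shift $k \mapsto k + n - m$ corresponds precisely to the co-dimension shift $n - k + 1 \mapsto m - k + 1$) and the observation that it is essential for $B$ to be a \emph{principal} submatrix of $A$ rather than an arbitrary $m \times m$ submatrix, since only principality makes the inclusion $\iota$ preserve the Rayleigh quotient.
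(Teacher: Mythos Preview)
Your argument via the Courant--Fischer min-max principle is correct and is the standard textbook proof of Cauchy's interlacing theorem. Note, however, that the paper does not give its own proof of this statement: it is quoted as a classical result and used as a black box (together with Weyl's theorem) in the proof of \cref{prop:spectrum}. So there is nothing to compare against; your proposal simply supplies a valid proof where the paper provides none.
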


\begin{proposition}\label{prop:spectrum}
  Let $G$ be a graph on $n$ vertices
  and $H$ be an induced subgraph of $G$
  with $m$ vertices. Then, for all $k \in [m]$,
  \begin{align*}
    \delta(H) - \lambda_{n-k+1}(A_G) \le \lambda_k(L_H) \le
    \Delta(H) - \lambda_{m-k+1}(A_G) \eqperiod
  \end{align*}
\end{proposition}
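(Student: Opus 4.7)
The plan is to prove the two inequalities by decomposing $L_H = D_H + (-A_H)$ and applying Weyl's theorem, then relating the spectrum of the principal submatrix $A_H$ to the spectrum of $A_G$ via the Interlacing Theorem.

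First I would apply \cref{thm:weyl} to the sum $L_H = D_H + (-A_H)$. Since $D_H$ is diagonal with entries being the degrees of $H$, its eigenvalues lie in $[\delta(H), \Delta(H)]$, so in particular $\lambda_1(D_H) \ge \delta(H)$ and $\lambda_m(D_H) \le \Delta(H)$. Weyl's theorem then yields
\begin{align*}
\delta(H) - \lambda_{m-k+1}(A_H) \;\le\; \lambda_k(L_H) \;\le\; \Delta(H) - \lambda_{m-k+1}(A_H) \eqcomma
\end{align*}
where I used the identity $\lambda_k(-A_H) = -\lambda_{m-k+1}(A_H)$.

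Next, because $H$ is an induced subgraph of $G$, the matrix $A_H$ is a principal submatrix of $A_G$. Applying \cref{thm:interlace} with the index $j = m-k+1 \in [m]$ gives
\begin{align*}
\lambda_{m-k+1}(A_G) \;\le\; \lambda_{m-k+1}(A_H) \;\le\; \lambda_{n-k+1}(A_G) \eqperiod
\end{align*}
Negating and substituting the left inequality into the upper bound for $\lambda_k(L_H)$, and the right inequality into the lower bound, produces exactly the claimed estimates
\begin{align*}
\delta(H) - \lambda_{n-k+1}(A_G) \;\le\; \lambda_k(L_H) \;\le\; \Delta(H) - \lambda_{m-k+1}(A_G) \eqperiod
\end{align*}

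There is no real obstacle here: the proof is a routine composition of two classical eigenvalue inequalities. The only mild care needed is to keep track of the reversal of indices when passing from $A_H$ to $-A_H$ in the application of Weyl, and to verify that the interlacing index $m-k+1$ lies in the admissible range $[m]$ for all $k \in [m]$, which it clearly does.
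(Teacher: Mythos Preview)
Your proof is correct and follows essentially the same approach as the paper: both combine Weyl's inequality on $L_H = D_H + (-A_H)$ with the Interlacing Theorem relating the spectrum of $A_H$ to that of $A_G$. The only cosmetic difference is the order of application (the paper interlaces $-A_G$ and $-A_H$ first, then applies Weyl, whereas you apply Weyl first and then interlace $A_G$ and $A_H$), which amounts to the same computation.
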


\begin{proof}
  By \cref{thm:interlace}, applied to $-A_G$ and $-A_H$, we see that
  for all $k \in [m]$
  \begin{align}
    \lambda_k(-A_G) \le \lambda_k(-A_H) \le
    \lambda_{k+n-m}(-A_G) \eqperiod
  \end{align}
  Note that $\lambda_1(D_H) = \delta(H)$ and $\lambda_m(D_H) = \Delta(H)$.
  Applying \cref{thm:weyl} to $D_H$ and $-A_H$, we conclude
  that, for all $k \in [m]$
  \begin{align}
    \lambda_{k}(-A_G) + \delta(H)
    &\le
    \lambda_k(-A_H) + \lambda_1(D_H)\\
    &\le
    \lambda_k(D_H - A_H)\\
    &\le
    \lambda_{k}(-A_H) + \lambda_m(D_H)
    \le
    \lambda_{k+n-m}(-A_G) + \Delta(H) \eqperiod
  \end{align}
  As $\lambda_k(-A_G) = - \lambda_{n-k+1}(A_G)$ and
  $\lambda_{k+n-m}(-A_G) = - \lambda_{m-k+1}(A_G)$, the statement follows.
\end{proof}

Before commencing with the proof of \cref{lem:partition-properties},
let us state two results that are of non-spectral nature.
The following is a
theorem by Bollobás which captures the distribution of
short cycles in random regular graphs.

\begin{theorem}[\cite{bollobas2001randgraphs}, Corollary 2.19]
  \label{thm:short-cycles}
  Let $d \ge 2$ and $k \ge 3$ be fixed natural numbers and denote by
  $Y_i = Y_i(G)$ the number of $i$-cycles in a graph
  $G \sim \mathcal{G}(n, d)$. Then $Y_3, Y_4, \ldots Y_k$ are
  asymptotically independent Poisson random variables with means
  $\lambda_3, \lambda_4, \ldots, \lambda_k$, where
  $\lambda_i = (d-1)^i/(2i)$.
\end{theorem}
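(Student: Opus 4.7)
The plan is to prove this by the method of moments: for any fixed nonnegative integers $a_3, \ldots, a_k$ I will show that the joint falling factorial moments satisfy $\E\bigl[\prod_{i=3}^{k}(Y_i)_{a_i}\bigr] \to \prod_{i=3}^{k} \lambda_i^{a_i}$ as $n \to \infty$, where $(Y_i)_{a_i} = Y_i(Y_i-1)\cdots(Y_i-a_i+1)$. Convergence of factorial moments to those of independent Poissons with parameters $\lambda_3, \ldots, \lambda_k$ is a standard sufficient condition for joint convergence in distribution of nonnegative integer valued random variables to independent Poissons, so this reduces the theorem to a moment computation.

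The cleanest way to carry out that computation is in the configuration (pairing) model: attach $d$ half-edges to each of the $n$ vertices, take a uniformly random perfect matching of the $nd$ half-edges, and let $G$ be the resulting (multi)graph. It is classical that conditioning on $G$ being simple yields exactly $\calG(n,d)$, and the simplicity probability tends to the positive constant $\exp(-(d^2-1)/4)$. Hence any bounded-moment asymptotic established in the pairing model transfers to $\calG(n,d)$ up to a multiplicative constant that does not affect the limit, and I would work in the pairing model from here on.

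To compute $\E\bigl[\prod_i (Y_i)_{a_i}\bigr]$ in the pairing model I would split the sum, which ranges over ordered tuples of pairwise distinct labelled cycles of the prescribed lengths, into a main term coming from configurations whose cycles are pairwise vertex-disjoint and a remainder. For a vertex-disjoint configuration of $a_i$ disjoint $i$-cycles on $\sum_i i a_i$ distinct vertices, there are $\prod_i n^{i a_i}(1+O(1/n))$ ways to pick ordered vertex sequences; for each cycle $(d(d-1))^i$ ways to pick the two half-edges used at each vertex; and probability $(1+o(1))(nd)^{-\sum_i i a_i}$ that the specified pairs of half-edges actually match up in the uniformly random pairing. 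Dividing by the $2i$ cyclic and reflectional symmetries of each cycle (with no additional $a_i!$ correction, since the falling factorial $(Y_i)_{a_i}$ already counts ordered tuples of distinct cycles) yields a main contribution of $\prod_i \bigl((d-1)^i/(2i)\bigr)^{a_i} = \prod_i \lambda_i^{a_i}$.

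The main obstacle I anticipate is controlling the remainder from degenerate configurations in which two of the chosen cycles share a vertex or an edge. The key observation is that any such configuration uses strictly fewer than $\sum_i i a_i$ vertices, so its count is smaller by at least a factor of $n$ than the main term, while the number of combinatorial types of degenerate configurations is bounded by a constant depending only on $k$ and $\sum_i a_i$. Summing over all such types therefore introduces only an $O(1/n)$ error, which together with the transfer from the pairing model to $\calG(n,d)$ yields the desired asymptotic $\E\bigl[\prod_i (Y_i)_{a_i}\bigr] \to \prod_i \lambda_i^{a_i}$ and finishes the proof.
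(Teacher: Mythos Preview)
The paper does not prove this theorem; it is quoted verbatim as Corollary~2.19 of Bollob\'as's book and used as a black box in the proof of \cref{lem:partition-properties}. So there is no ``paper's own proof'' to compare against.

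That said, your sketch is essentially the standard argument (and is the one in the cited reference): work in the configuration model, compute joint falling factorial moments by separating vertex-disjoint tuples of cycles from degenerate ones, and invoke the method of moments. One point deserves tightening: your claim that asymptotics in the pairing model ``transfer to $\calG(n,d)$ up to a multiplicative constant that does not affect the limit'' is not literally correct as stated. Conditioning on an event of limiting probability $c \in (0,1)$ can in principle shift moments. The clean fix is to include the number of loops $Y_1$ and the number of repeated edges $Y_2$ in the joint factorial-moment computation; the same argument shows $(Y_1, Y_2, Y_3, \ldots, Y_k)$ are asymptotically independent Poissons in the pairing model, and since simplicity is exactly the event $\{Y_1 = Y_2 = 0\}$, conditioning on it leaves the limiting joint law of $(Y_3, \ldots, Y_k)$ unchanged. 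With that adjustment your proposal is a correct proof.
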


Let us also record a simple observation that establishes that shortest
odd cycles contain no shortcut.

\begin{lemma}\label{lem:non-bip-odd-cycle}
  Let $G$ be any graph and suppose that $C$ is a shortest odd cycle in
  $G$. Then there is no path $p$ connecting two vertices $u,v$ on $C$
  such that both paths of $C$ connecting $u$ to $v$ are longer than
  $p$.
\end{lemma}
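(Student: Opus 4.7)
The plan is to argue by contradiction, using the standard fact that every odd closed walk contains an odd cycle of length at most the walk's length. Suppose some such shortcut $p$ between vertices $u, v \in V(C)$ existed, with $|p|$ strictly less than the length of either of the two arcs of $C$ joining $u$ to $v$; call these arcs $C_1$ and $C_2$, so that $|C_1| + |C_2| = |C|$ is odd. Then $|C_1|$ and $|C_2|$ have opposite parities.

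First I would form the two closed walks $W_i = C_i \cup p$ for $i \in \{1,2\}$. Since $|W_i| = |C_i| + |p|$ and $|C_1|, |C_2|$ have opposite parities, exactly one of $|W_1|, |W_2|$ is odd, regardless of the parity of $|p|$. Fix the index $i$ for which $W_i$ is odd.

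Next I would invoke the fact that any odd closed walk contains an odd cycle of length at most the walk's length. (This can be seen by a short induction on length: if the walk visits some vertex twice, split it into two shorter closed walks whose lengths sum to the original odd length, so at least one is odd, and recurse; otherwise the walk is already a cycle.) Applied to $W_i$, this yields an odd cycle $C'$ in $G$ with $|C'| \le |C_i| + |p|$. By the hypothesis on $p$ we have $|p| < |C_{3-i}|$, hence
\[
|C'| \le |C_i| + |p| < |C_i| + |C_{3-i}| = |C|,
\]
contradicting the minimality of $C$ as a shortest odd cycle.

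The only nontrivial ingredient is the odd-closed-walk-contains-odd-cycle fact, which is entirely routine, so I do not expect any real obstacle; the argument is essentially a two-line parity manipulation plus one standard lemma.
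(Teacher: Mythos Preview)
Your proof is correct. Both you and the paper argue by contradiction and use parity to find a shorter odd cycle, but you handle the possible intersection of $p$ with $C$ differently. The paper first extracts a subpath $q \subseteq p$ that is internally vertex-disjoint from $C$ and still shorter than both arcs of $C$ between its endpoints (this requires a small separate argument), so that gluing $q$ to either arc yields an honest cycle and one of the two is odd. You instead keep $p$ as is, form the closed walk $C_i \cup p$, and invoke the standard lemma that every odd closed walk contains an odd cycle no longer than the walk. Your route is a bit cleaner since it sidesteps the subpath extraction entirely at the cost of one well-known auxiliary fact; the paper's route is more self-contained but needs the extra step of justifying the existence of such a $q$.
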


\begin{proof}
  Suppose such a path $p$ exists. Let $q \subseteq p$ be a subpath of
  $p$ such that
  \begin{enumerate}[label=$(\roman*)$]
  \item $q$ only shares its endpoints $w_0, w_1$ with $C$, and
  \item the two paths $a_0, a_1$ from
    $w_0$ to $w_1$ on $C$ are longer than $q$.
  \end{enumerate}
  Note that such a subpath $q$ exists as the two paths connecting $u$
  to $v$ on $C$ are both longer than $p$: if no such path $q$ exists,
  then each potential $q$ can be replaced by a part of $C$, thereby
  obtaining a walk from $u$ to $v$ on $C$ of length at most $|p|$; a
  contradiction.
  
  But note that such a $q$ gives rise to a shorter odd cycle: either
  $a_0 \cup q$ or $a_1 \cup q$ is an odd cycle, of length less than
  $C$. This is in contradiction to the initial assumption that $C$ is
  a shortest odd cycle. The statement follows.
\end{proof}

\subsection{Proof of Lemma \ref{lem:partition-properties}} 
\label{sec:partition-properties}

Recall that by \cref{thm:rand-ev}, with high probability all but the
largest eigenvalue of the adjacency matrix of $G$ are bounded in
magnitude by $2\sqrt{d-1} + o(1)$. In the following we assume that $n$
is large enough such that the $o(1)$ term is small. Let us
argue each property separately.
\begin{enumerate}[label=$(\roman*)$]
\item
  Let $U \subseteq V(G)$ be any set of size $\kappa n$.
  Apply the mixing lemma (\cref{lem:mix}) to the graph~$G$ to conclude
  that
  \[
  |E(U, V(G) \setminus U)| \le
  \kappa n \cdot d
  \big(
  (1 - \kappa) +
  2\sqrt{\frac{1-\kappa}{\kappa d}}
  + o(1)
  \big) \eqperiod
  \]
  As $G$ is a $d$-regular graph, we
  conclude that the average degree in $G[U]$ is at least
  $
  d (\kappa - 2\sqrt{\frac{1-\kappa}{\kappa d}} - o(1))
  $. By the observation that if the average degree is at least $t$,
  then there is a vertex of degree at least $\ceil{t}$, the statement
  follows for $n$ large enough.

\item
  Recall that a sum of independent Poisson variables $X_1, \ldots, X_k$
  with means $\mu_1, \ldots, \mu_k$
  is again a Poisson variable with mean $\sum_{i \in [k]} \mu_i$.
  Hence the number of cycles in $G$ of length at most
  $\ell$ is, according to \cref{thm:short-cycles}, a Poisson
  random variable $Y$ with mean
  \begin{align}
    \mu
    =
    \sum_{i = 3}^{\ell} \frac{(d-1)^i}{2i}
    \le
    d^{\ell}/6 \eqcomma
  \end{align}
  where we used that $d^{\ell - 1} + (d - 1)^\ell \le d^\ell$.
  \Cref{thm:poi-tail} then tells us that for any $\gamma > 0$,
  independent of $n$, it holds that
  \begin{align}
    \prob{Y \ge \gamma \log n}
    \le e^{-\mu} \left(\frac{e \mu}{\gamma\log n}\right)^{\gamma\log n}
    < \frac{1}{n}
    \eqcomma
  \end{align}
  where the strict inequality holds for $n$ large enough.  Hence we
  may assume that $Y < \gamma \log n$. Let $S \subseteq V(G)$ be a set
  of vertices that contains one vertex from each cycle of length at
  most $\ell$. By assumption $|S| < \gamma \log n$ and the shortest
  cycle in $G \setminus S$ is of length at least $\ell$.

  We also know that all but
  the largest eigenvalue of
  the adjacency matrix of $G$
  are bounded in magnitude by
  $2\sqrt{d-1} + o(1)$.
  Apply
  \cref{prop:non-bip}
  to conclude that
  no subset
  $W \subseteq V(G)$
  of size at least
  $|W| \ge 5 \frac{n}{\sqrt{d}}$
  induces a
  bipartite subgraph, in other words
  any such $G[W]$ contains an odd cycle.

  Let $U \subseteq V(G)$ be of size at least
  $6\frac{n}{\sqrt{d}}$. For $n$ large, it holds that
  $G[U \setminus S]$ is of size at least $5\frac{n}{\sqrt{d}}$ and
  thus contains an odd cycle of length at least $\ell$. Let $C$ denote
  such a cycle.
  What remains is to show that there is a cycle in $G[U]$ that is
  simultaneously of length at least $\ell$ as well as bounded in
  length by $3\cdiam[\beta/2] \log n$.

  Towards contradiction suppose that
  $|C| \ge 3\cdiam[\beta/2] \log n$ and that $C$ is a shortest
  odd cycle in $G[U\setminus S]$. Arbitrarily split the cycle $C$ into
  four paths $A_1, B_1, A_2$ and $B_2$, such that $B_1$ and $B_2$
  separate $A_1$ from $A_2$ on $C$, both $A_i$s are of size at least
  $\frac{1}{4}\cdiam[\beta/2]\log n$, and both $B_i$s are of size at
  least $\frac{9}{8}\cdiam[\beta/2]\log n$.

  We may assume that $\gamma \le \frac{\beta}{9} \cdiam[\beta/2]$ so
  that for $n$ large we can apply \cref{cor:exp-diam-remove} to
  $G[U]$, $S, A_1$ and $A_2$ to conclude that in $G[U\setminus S]$
  there is a path $p$ connecting $A_1$ to $A_2$ of length at most
  $\cdiam[\beta/2] \log n$. This contradicts
  \cref{lem:non-bip-odd-cycle} as both paths of $C$ connecting $A_1$ to
  $A_2$ are of length at least
  $\frac{9}{8}\cdiam[\beta/2] \log n$.

  We conclude that there is an odd cycle of length at most
  $3\cdiam[\beta/2] \log n$ in $G[U \setminus S]$. As there are
  no cycles of length at most $\ell$ in $G[U \setminus S]$ we see that
  this cycle is also of length at least $\ell$, as required.
  
\item
  Applying \cref{prop:spectrum} to $G$ and $G[T]$,
  we see that
  \begin{align}
    \lambda_2(L_{G[T]}) \ge
    \delta(G[T]) - \lambda_{n-1}(A_G) \eqperiod
  \end{align}
  Every vertex $v \in T$ has degree at least
  $(1 - c - \eps)d$ in
  $G[T]$.
  Furthermore,
  as $\lambda_{n-1}(A_G)$
  is bounded by
  $2\sqrt{d-1} + o(1)$
  and we assumed that
  $d \ge 4/\eps^2$,
  we obtain that
  $\lambda_2(L_{G[T]}) \ge (1 - c - 2\eps)d$.
  Applying \cref{cor:exp-ev},
  we conclude that $G[T]$ has vertex
  expansion at least $\frac{1-c-2\eps}{2(1-c-\eps)}$.

\item
  Let $U \subseteq T$ of odd cardinality be as in the statement, and
  denote by $m$ the number of vertices in
  $G \setminus U$.
  By \cref{thm:spectrum-matching}, it is sufficient to establish the
  bound
  $\lambda_m(L_{G \setminus U}) \le 2 \lambda_2(L_{G \setminus U})$
  on the eigenvalues of the Laplacian of
  $G \setminus U$.
  Applying \cref{prop:spectrum}
  to $G \setminus U$, we can bound these eigenvalues in
  terms of the eigenvalues
  of the adjacency matrix of $G$, obtaining
  \begin{align}
    \lambda_{m}(L_{G \setminus U})
    &\le d - \lambda_{1}(A_G) \text{~and}\\
    \lambda_2(L_{G \setminus U})
    &\ge (c-\eps)d - \lambda_{n-1}(A_G) \eqperiod
  \end{align}
  As $\lambda_1(A_G)$ and $\lambda_{n-1}(A_G)$ are both bounded in
  absolute value by $2\sqrt{d-1} + o(1)$ we thus conclude
  \[
  2 \lambda_{2}(L_{G \setminus U}) - \lambda_{m}(L_{G \setminus U})
  \ge
  (2(c-\eps)-1) d - 2\sqrt{d-1} - o(1).
  \]
  Since $c > 1/2+\eps$
  and we assumed that
  $d \ge (c-1/2-\eps)^{-2}$, we have that
  $
  \lambda_{m}(L_{G \setminus U})
  \le
  2 \lambda_{1}(L_{G \setminus U})
  $
  as desired.
\end{enumerate}

\section{Embedding Theorem}
\label{sec:odd-minor}

In this section we prove our embedding theorem (\cref{thm:odd-minor}).
Before starting with the proof, let us establish some notation and
recall some facts from graph theory.

\subsection{Further Graph Theory Preliminaries}

In a graph $G = (V, E)$ on $n$ vertices a vertex set $S \subseteq V$
is a \emph{balanced separator in $G$} if there is a partition
$V = A \disjointunion B \disjointunion S$ of the vertex set of $G$
such that $|A|, |B| \le 2n/3$, and $G$ has no edges between $A$ and
$B$.

Large vertex expansion implies that balanced separators are large:
the next lemma makes this well-known connection precise.

\begin{lemma}\label{lem:exp-sep}
  Let $G$ be an $\alpha$-expander on $n$ vertices, and let $S$ be a
  balanced separator in $G$.
  Then $|S| \ge \frac{\alpha n}{3(1 + \alpha)}$.
\end{lemma}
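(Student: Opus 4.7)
The plan is to apply the expansion property to the smaller of the two sides $A, B$ and derive the inequality via straightforward counting.

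First I would observe that since $V = A \disjointunion B \disjointunion S$ we have $|A| + |B| = n - |S| \le n$, so at least one of the two sides has size at most $n/2$; by symmetry assume $|A| \le n/2$. Then $A$ is eligible for the vertex expansion guarantee, giving
\begin{align*}
  |N(A, V \setminus A)| \ge \alpha |A| \eqperiod
\end{align*}
Since $S$ is a balanced separator there are no edges between $A$ and $B$, so every vertex of $N(A, V \setminus A)$ lies in $S$. Hence $|S| \ge \alpha |A|$.

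Next I would use the balanced separator condition $|B| \le 2n/3$ to lower bound $|A|$: from $|A| = n - |S| - |B| \ge n/3 - |S|$. Combining with $|S| \ge \alpha |A|$ yields
\begin{align*}
  |S| \ge \alpha\bigl(n/3 - |S|\bigr) \eqcomma
\end{align*}
and rearranging gives $|S|(1 + \alpha) \ge \alpha n / 3$, i.e., $|S| \ge \frac{\alpha n}{3(1+\alpha)}$ as desired. The degenerate case $n/3 - |S| \le 0$ is trivial since then $|S| \ge n/3 \ge \frac{\alpha n}{3(1+\alpha)}$.

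There is no real obstacle here; the only mild care is making sure that the smaller side has size at most $n/2$ so that vertex expansion applies, and handling the degenerate case where the lower bound on $|A|$ from the separator condition becomes vacuous.
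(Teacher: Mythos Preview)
Your proof is correct and follows essentially the same argument as the paper: apply expansion to the smaller side $A$ (which has size at most $n/2$) to get $|S| \ge \alpha|A|$, combine with $|A| + |S| \ge n/3$ from the balanced separator bound $|B| \le 2n/3$, and solve for $|S|$. The paper's proof is identical up to notation, though it does not separately mention the degenerate case $|S| \ge n/3$.
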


\begin{proof}
  Let $S$ be a balanced separator in $G$ of size
  $|S| = s$, separating $A$ and $B$, with
  $|A| = a$,
  $|B| = b$. Without loss of generality assume that
  $a \le b \le 2n/3$.
  Clearly, $a + s \ge n/3$. Further,
  $N(A) \subseteq S$, and since $a \le n/2$,
  by expansion, we get that $s \ge \alpha a$. In other words,
  $s/\alpha \ge a$, which when substituted into
  $a + s \ge n/3$ yields $s(1+1/\alpha) \ge n/3$.
\end{proof}

We also require the following lemma on vertex-disjoint paths in
expanders.

\begin{lemma}[\cite{fm2019cycle}]\label{lem:exp-path-disj}
  Let $G=(V, E)$ be an $\alpha$-expander and let $A, B \subseteq V$ be
  two vertex sets of sizes $|A|, |B| \ge t$ for some $t > 0$. Then $G$
  contains at least $\frac{t\alpha}{1+\alpha}$ vertex-disjoint paths
  between $A$ and $B$.
\end{lemma}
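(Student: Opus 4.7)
My plan is to derive \cref{lem:exp-path-disj} from Menger's theorem combined with the vertex expansion property of $G$. The vertex-form of Menger's theorem says that the maximum number of vertex-disjoint $A$-$B$ paths in $G$ equals the minimum cardinality of a vertex set $S \subseteq V$ such that $G \setminus S$ contains no path from $A \setminus S$ to $B \setminus S$ (here $S$ is allowed to intersect $A$ and $B$). So I would argue by contradiction: assume that the maximum number of vertex-disjoint $A$-$B$ paths is strictly less than $\frac{t\alpha}{1+\alpha}$, and derive a contradiction with $\alpha$-expansion.

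Let $S$ be such a minimum separator, with $|S| < \frac{t\alpha}{1+\alpha}$. Set $A' = A \setminus S$ and $B' = B \setminus S$, so that
\begin{equation*}
|A'|, |B'| \ge t - |S| > t - \frac{t\alpha}{1+\alpha} = \frac{t}{1+\alpha} \eqperiod
\end{equation*}
In $G \setminus S$ the sets $A'$ and $B'$ lie in disjoint unions of connected components; let $C_A$ be the union of components meeting $A'$ and $C_B$ the union of components meeting $B'$. These sets are disjoint, so at least one of them, say $C_A$, has $|C_A| \le n/2$ (otherwise switch the roles of $A$ and $B$).

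Since no edge of $G$ leaves $C_A$ except into $S$, we have $N_G(C_A) \setminus C_A \subseteq S$. Applying the $\alpha$-expansion of $G$ to $C_A$ (which has size at most $n/2$) gives
\begin{equation*}
|S| \;\ge\; |N_G(C_A) \setminus C_A| \;\ge\; \alpha \, |C_A| \;\ge\; \alpha \, |A'| \;\ge\; \alpha(t - |S|) \eqcomma
\end{equation*}
which rearranges to $|S|(1+\alpha) \ge \alpha t$, i.e., $|S| \ge \frac{t\alpha}{1+\alpha}$, contradicting the assumption on $|S|$.

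The only real subtlety to check carefully is the use of Menger's theorem in the multi-source/multi-sink form with the allowance that $S$ may include vertices of $A \cup B$; this is standard and can be reduced to the classical two-vertex Menger's theorem by adding a super-source adjacent to $A$ and a super-sink adjacent to $B$. The rest is a one-line expansion inequality, so I do not expect any obstacle beyond being careful that $C_A$ (not $A'$ itself) is what we apply expansion to, and choosing the side whose component-union lands below $n/2$.
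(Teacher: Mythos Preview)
The paper does not prove this lemma; it is quoted verbatim from \cite{fm2019cycle} without proof. Your argument via Menger's theorem is correct and is essentially the standard proof of this fact: a small $A$--$B$ separator $S$ would leave one side $C_A \supseteq A \setminus S$ of size at most $n/2$ with external neighbourhood contained in $S$, and expansion then forces $|S| \ge \alpha|C_A| \ge \alpha(t-|S|)$.
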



\subsection{Proof of \cref{thm:odd-minor}}
\label{sec:odd-minor-proof}

We now proceed with the proof of \cref{thm:odd-minor}, restated here
for convenience.

\EmbeddingTheorem*

When embedding a high degree vertex $x \in V(H)$ into $G$, we want to find a vertex
$v \in V(G)$ of high degree such that many neighbors are connected to large,
disjoint sets of vertices. These large sets are very useful as they
guarantee that there are many vertices to which we can connect a vertex embedding.
The following definition makes this intuition precise.

\begin{definition}[Cross]\label{def:cross}
  An \emph{\cross{(r, s)}} in a graph $G = (V, E)$
  is a tuple $(v, \branches)$,
  where $v \in V$ is a vertex and
  $\branches \subseteq 2^{V}$ consists of
  $r$ pairwise disjoint vertex sets
  $U \subseteq V \setminus \set{v}$,
  each of size $|U| = s$,
  such that $N(v) \cap U \neq \emptyset$
  and the graph $G[U]$ is connected.
  We refer to $v$ as the \emph{center} of the cross and to
  $\branches$ as the \emph{branches} of the cross.
\end{definition}

The following lemma shows that crosses always exist in expanders with
sufficiently large maximum degree.

\begin{restatable}{lemma}{CrossLemma}\label{lem:cross}
  For all $\beta > 0$ and
  $\gamma = \frac{\beta}{3 (1 + \beta)}$
  the following holds.
  Let $G$ be a $\beta$-expander on $n$ vertices that is
  \degrich{1-2/k}{(1+1/\beta)r},
  for some $k \ge 3$  and $r > 0$ such that
  $r \le \frac{\gamma^3 n}{k(1+\gamma)}$.
  Then $G$ contains an \cross{(r,s)}, for
  all $s$ that satisfy
  $r \cdot s \le \frac{\gamma^2 n}{k(1+\gamma)}$.
\end{restatable}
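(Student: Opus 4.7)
The plan is to first locate a suitable center vertex via the max-degree-robust hypothesis and then iteratively grow the $r$ branches. Since $V(G)$ itself has size $n \ge (1-2/k) n$, applying the max-degree-robust property to $G$ yields a vertex $v$ with $\deg_G(v) \ge (1+1/\beta) r$. I fix an arbitrary subset $N' \subseteq N(v)$ of size exactly $(1+1/\beta) r$; these will serve as the pool of candidate seed vertices from which the branches emanate.

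I then build the branches $U_1, \ldots, U_r$ greedily, maintaining the invariant that each completed branch $U_j$ meets $N'$ in exactly one vertex $u_j$ (its seed). Having built $U_1, \ldots, U_{i-1}$, set $W_{i-1} = \{v\} \cup U_1 \cup \cdots \cup U_{i-1}$, so that $|W_{i-1}| \le 1 + (i-1) s \le rs$ and $|N' \setminus W_{i-1}| \ge r/\beta + 1$. The hypotheses $r \le \gamma^3 n / (k(1+\gamma))$ and $rs \le \gamma^2 n / (k(1+\gamma))$, together with $\gamma = \beta/(3(1+\beta))$, imply $|W_{i-1} \cup N'| < \gamma n$. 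By \cref{lem:exp-sep}, no set of size less than $\gamma n$ can be a balanced separator of the $\beta$-expander $G$; consequently $G \setminus (W_{i-1} \cup N')$ admits a unique ``giant'' connected component $C^*$ of size greater than $2n/3$, and the union $R$ of all other small components satisfies $|R| \le |W_{i-1} \cup N'|/\beta$ by the standard expansion bound applied to $R$ (whose external neighborhood in $G$ is forced into $W_{i-1} \cup N'$).

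The crucial step is to exhibit a seed $u_i \in N' \setminus W_{i-1}$ that has at least one neighbor in $C^*$. Once such a $u_i$ is identified, I grow a BFS tree rooted at $u_i$ inside $\{u_i\} \cup C^*$ (which is connected through $u_i$'s neighbor in $C^*$ and of total size greater than $2n/3 \gg s$) until it reaches exactly $s$ vertices; this tree is the branch $U_i$. Automatically $U_i \cap N' = \{u_i\}$ because the remaining $s-1$ vertices of $U_i$ all lie in $C^* \subseteq V(G) \setminus N'$, which preserves the invariant. Existence of such a $u_i$ is established by applying $\beta$-expansion of $G$ to the candidate set $(N' \setminus W_{i-1}) \cup R$: if no candidate had a neighbor in $C^*$, then this set would have external neighborhood contained in $W_{i-1} \cup (N' \cap W_{i-1})$, and the resulting sizes are incompatible under the quantitative choice of $\gamma$ and the bounds on $r$ and $rs$.

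The main obstacle is precisely this last existence step. The numerical form of the constraints — in particular the factor $\gamma^3$ in the bound on $r$ and the factor $\gamma^2$ in the bound on $rs$, together with the defining relation $\gamma = \beta/(3(1+\beta))$ — is engineered so that the expansion of the candidate set into the rest of $G$ is forced to meet $C^*$, making the seed search succeed in every round. Once this is in place, iterating $r$ times and verifying that $(v, \{U_1, \ldots, U_r\})$ satisfies all requirements of \cref{def:cross} is routine bookkeeping.
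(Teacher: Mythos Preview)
Your overall architecture --- fix the center first, then grow branches greedily from seeds in $N'$ --- is natural, but the key existence step (``some unused seed has a neighbor in $C^*$'') does not follow from the expansion inequality you invoke, and the hypotheses of the lemma are not strong enough to rescue it.

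Concretely, suppose no vertex of $N'\setminus W_{i-1}$ has a neighbor in $C^*$, and set $S=(N'\setminus W_{i-1})\cup R$ as you do. Then indeed $N_G(S)\setminus S\subseteq W_{i-1}$, so $\beta$-expansion (once $|S|\le n/2$, which your size bounds do force) gives $\beta|S|\le |W_{i-1}|$. The only lower bound you have on $|S|$ is $|S|\ge |N'\setminus W_{i-1}|\ge r/\beta+1$, hence $\beta|S|\ge r+\beta$. But $|W_{i-1}|$ can be as large as $1+(i-1)s$, which for $i$ close to $r$ is of order $rs$. Nothing in the hypotheses prevents $s\gg 1$: the assumptions $r\le \gamma^3 n/(k(1+\gamma))$ and $rs\le \gamma^2 n/(k(1+\gamma))$ are both \emph{upper} bounds relating $r$ and $rs$ to $n$ and say nothing about the ratio $s/1$ or $rs/r$. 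There is also no lower bound on $|R|$ (it may well be empty), so including $R$ in $S$ does not help. In particular, for $r=2$ and $s$ large the first BFS branch can swallow a bottleneck of size roughly $\beta|N'|\approx r$ through which all remaining seeds connect to the rest of $G$; this is perfectly consistent with $\beta$-expansion, yet leaves every remaining seed with neighbors only in $W_1$.

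The paper circumvents this by reversing the order: it first runs a Krivelevich--Nenadov style process to find $r'=r(1+1/\gamma)$ pairwise disjoint connected sets $B_1,\ldots,B_{r'}$ of size $s$, together with a residual set $C$ on which $G[C]$ is a $\gamma$-expander and into which every subfamily of the $B_i$'s expands well. Only then is the center $v$ chosen (inside $C$, using max-degree-robustness), and the connections from $v$ to the $B_i$'s are obtained all at once by the vertex-disjoint-paths lemma (\cref{lem:exp-path-disj}) applied to $N(v)$ and a transversal of $\{N(B_i,C)\}$ inside the expander $G[C]$. The factor $1+1/\gamma$ in $r'$ is precisely what makes $\frac{r'\gamma}{1+\gamma}=r$ paths come out of that lemma. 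This global use of expansion is what your local, round-by-round argument is missing.
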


The proof is an adaptation of a proof by Krivelevich and
Nenadov~\cite{kn2018MinorsNoCut} and is deferred to \cref{sec:cross
  proof}.  We also have the following lemma which is what allows us to
choose the path length parities in the ``furthermore'' part of
\cref{thm:odd-minor}.
It states that if there is an odd cycle in the graph,
then there is an odd and even path between any vertex $u$ and a large enough
set~$A$ of vertices. Note that this does not necessarily hold if $A$ is too small:
the vertex $u$ may have degree~$1$ and~$A$ may be
the single neighbor of $u$.  Similarly a lower bound on the length of the odd
cycle is needed.

\begin{restatable}{lemma}{OddEvenPath}\label{lem:odd-even-path}\label{LEM:ODD-EVEN-PATH}
  For all $\beta > 0$ the following holds.
  Let $G$ be a $\beta$-expander on
  $n$ vertices that contains an
  odd cycle of length $\ell \ge 1 + 2/\beta$. Then,
  for all $u \in V(G)$ and $A \subseteq V(G)$, of size
  $|A| \ge (\cdiam[\beta] \log n + 1)(1 + 2/\beta)$,
  there is a vertex $v \in A$ such that
  $u$ and $v$ are connected by both an odd and an even path, each of length
  at most $(15 \cdiam[\beta/2]/\beta) \log{n} + \ell$.
\end{restatable}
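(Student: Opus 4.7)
The plan is to use the odd cycle $C$ of length $\ell$ as a parity-flipping device: I will produce a vertex $v \in A$ together with two distinct vertices $w_0, w_2 \in C$ and two internally vertex-disjoint paths $P_0$ from $u$ to $w_0$ and $P_2$ from $v$ to $w_2$ with $P_0 \cap C = \{w_0\}$, $P_2 \cap C = \{w_2\}$ and $P_0 \cap P_2 = \emptyset$. The two arcs of $C$ joining $w_0$ and $w_2$ have lengths summing to $\ell$ and hence opposite parities, so concatenating $P_0$ with either arc and then $P_2^{-1}$ yields two simple $u$-$v$ paths of opposite parity, as required.

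To obtain $P_0$ I would use \cref{cor:exp-diam-remove} with $S = \emptyset$ to pick a shortest $u$-to-$C$ path in $G$, of length at most $\cdiam[\beta/2]\log n$; by minimality $P_0 \cap C = \{w_0\}$ for its endpoint $w_0$. Setting $T_0 := C \setminus \{w_0\}$ gives a seed of size $|T_0| = \ell - 1 \ge 2/\beta$, still far too small to feed directly into \cref{cor:exp-diam-remove} with $S = P_0$, where one would want $|T| \ge (2/\beta)|P_0|$. I therefore bootstrap in two stages. First, I enlarge $T_0$ inside the full graph $G$ via \cref{lem:exp-ball-remove} with nothing removed: a ball $B = B^G_{r_1}(T_0)$ of radius $r_1 = O((\log\log n)/\beta)$ already has $|B| \ge (4/\beta)|P_0|$. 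Second, I pass to the connected component $K$ of $T_0$ in $G \setminus P_0$; the set of vertices of $V(G) \setminus P_0$ missing from $K$ is separated from $T_0$ by $P_0$ in $G$, so \cref{lem:exp-sep} applied to $G$ bounds it by $|P_0|/\beta$, giving $|B \cap K| \ge (2/\beta)|P_0|$. Now \cref{cor:exp-diam-remove} applied in $G$ with $S = P_0$, $T = B \cap K$ and $U = A \setminus (P_0 \cup C)$ — where the hypothesis on $|A|$ is precisely tailored so that $|U| \ge (2/\beta)|P_0|$ — yields a path $Q \subseteq G \setminus P_0$ of length at most $\cdiam[\beta/2]\log n$ from some $x \in B \cap K$ to some $v \in A$.

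It remains to connect $x$ back to $T_0$ inside $G \setminus P_0$ by a short path $R$, so that the shortest $v$-to-$(C \setminus \{w_0\})$ path in $G \setminus P_0$ — which I take as $P_2$ — has length at most $|Q| + |R|$ via the witness walk from $v$ through $x$ to $T_0$; minimality then forces $|P_2 \cap C| = 1$ and, since $w_0 \notin V(G) \setminus P_0$, the endpoint $w_2 \in C \setminus \{w_0\}$. Bounding $|R|$ is the main obstacle, since a direct application of \cref{lem:exp-ball-remove} to grow $T_0$ inside $G \setminus P_0$ fails because $|T_0| < (2/\beta)|P_0|$. My plan is to argue that the ball $B^{G \setminus P_0}_r(T_0)$, viewed inside the component $K$, grows by at least one vertex per step as long as it is a proper subset of $K$ (which holds because $|K| \ge (2/\beta)|P_0| > |T_0|$ and $K$ is connected), so after $O((\log n)/\beta)$ steps it reaches size $(2/\beta)|P_0|$; at that point \cref{lem:exp-ball-remove} takes over with $(1+\beta/2)$-growth and a further $O((\log n)/\beta)$ steps suffice to cover at least half of $K$. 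Running the same argument from $\{x\}$ shows that the balls of radius $O((\log n)/\beta)$ around $T_0$ and around $x$ inside $K$ each cover at least half of $K$ and hence intersect, giving $|R| = O((\log n)/\beta)$. Summing, the two $u$-$v$ paths have total length at most $\cdiam[\beta/2]\log n + \ell + O((\log n)/\beta)$, well within the stated $(15\cdiam[\beta/2]/\beta)\log n + \ell$.
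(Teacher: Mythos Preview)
Your argument has a real gap where you bound the component $K$ of $T_0$ in $G \setminus P_0$. You invoke \cref{lem:exp-sep} to get $|V(G) \setminus P_0 \setminus K| \le |P_0|/\beta$, but that lemma only tells you $P_0$ is too small to be a \emph{balanced} separator, so one of the two sides exceeds $2n/3$ --- it does not say which. You have not ruled out the possibility $|K| \le |P_0|/\beta$ with the complement huge; in that case every path from $A \setminus K$ to $T_0$ must cross $P_0$, so no $P_2$ disjoint from $P_0$ exists at all, and your later ball-growth inside $K$ (which explicitly assumes $|K| \ge (2/\beta)|P_0|$) also fails. Fixing the shortest path $P_0$ up front is exactly what makes this case fatal. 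A secondary issue: the lower bound on $|A|$ does not scale with $\ell = |C|$, so your claim $|A \setminus (P_0 \cup C)| \ge (2/\beta)|P_0|$ can fail when $\ell$ is large (a case split on whether $A$ meets $C \setminus \{w_0\}$ would patch this, but not the main gap).

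The paper's proof sidesteps the problem by not committing to the initial path from $u$. It applies a separate lemma (\cref{lem:exp-paths}) with $S = C$, $T = \{u\}$ and $U = A$: given the initial shortest path $p$ from $u$ to $C$, that lemma returns some $v \in A$ together with \emph{new} vertex-disjoint paths $p_u', p_v'$ from $u$ and $v$ to $C$, where $p_u'$ may differ from $p$. The proof of \cref{lem:exp-paths} grows a chain $S = S_0 \subseteq S_1 \subseteq \cdots \subseteq S_\ell$ by repeatedly attaching short detours that cut through the current path system, maintaining the invariant that every vertex of $S_i$ can be connected to $S$ by a path disjoint from a suitably rerouted $u$-to-$S$ path; once $S_\ell$ is large (or can be grown large while avoiding the paths), \cref{cor:exp-diam-remove} finishes. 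This rerouting mechanism is precisely the ingredient your direct approach is missing.
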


We defer the proof of \cref{lem:odd-even-path} to
\cref{sec:odd-even-path}.

We now prove \cref{thm:odd-minor} with the assumption of
odd-cycle-robustness. Furthermore, the proof makes all paths of odd
length, though it is immediate that one can choose the parities. To
get the theorem without the assumption of odd-cycle-robustness, one
just has to replace the application of \cref{lem:odd-even-path} by any
shortest path (which, by \cref{lem:exp-diam} is short).

The main idea is due to Krivelevich and
Nenadov~\cite{kn2018MinorsNoCut}
(see also \cite{krivelevich2018expanders}).
In contrast to their work we cannot
directly embed the vertices into the graph but rather take a
detour by embedding appropriately sized crosses for each vertex and
then
connect branches of crosses that correspond to embeddings of
adjacent vertices. The reason for this difference is that the present
theorem deals with topological minors rather than plain graph
minors (the difference is that in topological minors vertices
are connected by vertex disjoint paths while in graph minors
subgraphs are connected).

In order for this to work we need to make some further changes to the
embedding process used.  In their work, three sets of vertices are
maintained throughout the process: one set $A$ of ``discarded'' vertices,
one set $B$ of vertices used in the embedding, and the remaining set $C$ of
vertices.  A key invariant which is maintained is that the set of discarded
vertices expand poorly into the set of remaining vertices, which
together with expansion implies that not too many vertices can be
discarded.  In our case, some of the discarded vertices may in fact
have good expansion into $C$, but we can maintain the property that
there are not too many such vertices. The details are worked out in
what follows. If the verbal description is ambiguous, there is an
algorithmic description in \cref{sec:pseudocode} (\cref{alg:embed-graph}).

Formally, the algorithm maintains a partition
$A \disjointunion A' \disjointunion \dot\bigcup_{B \in \calB}B
\disjointunion C$ of the vertices of $G$.  The sets $A$, $B$, and $C$
play the same roles as in the informal description above, and $A'$ is
an additional set of discarded vertices which may have large expansion
into $C$.
When the algorithm terminates,
every vertex $v \in V(H)$ (edge $e \in E(H)$, respectively)
has a vertex embedding
$\embed{v} \in \calB$ (an edge embedding $\embed{e} \in \calB$)
giving a topological minor of $H$ in $G$.
Initially, all sets except $C = V(G)$ are empty.

Let $\beta = \frac{\alpha}{3(1+\alpha)}$ be the constant from
\cref{lem:exp-sep} for the lower bound on the size of a balanced
separator in an $\alpha$-expander.  At several points in the algorithm
we want to ensure that $G[C]$ is a $\beta$-expander. This is achieved
by removing any subset $U \subseteq C$ of size $|U| \le |C|/2$ with
small neighborhood $|N(U, C \setminus U)| < \beta |U|$ from $C$ and
adding it to $A$ (i.e., letting $C \leftarrow C \setminus U$ and $A
\leftarrow A \cup U$). Clearly once there are no sets $U \subseteq C$
left as above, $G[C]$ is a $\beta$-expander.

Throughout the algorithm the following invariants are maintained:
\begin{enumerate}[label=$(\roman*)$]
\item
  $C$ never increases in size and $|C| \ge n(1-2/k)$,
\item
  $G[C]$ is a $\beta$-expander (by restoring expansion as
  described above whenever needed),
\item
  $N(A, C) < \beta |A|$, and
\item
  $|A'| < \beta |A|/2$.
\end{enumerate}
The algorithm
maintains the set $I \subseteq V(H)$ to keep track of the vertices
already embedded.

\newcommand{\constPathLen}{\big(18 \cdiam[\beta/2]/\beta\big) }
Let
$r(d) = d(1 + 4/\beta) - 1 < 25d/\alpha$ and
$
  s = \constPathLen \log n
  $.
In what follows we assume $\eps$ is sufficiently small as a function of $\beta$.

Fix a vertex $x \in V(H) \setminus I$ not already embedded and
apply \cref{lem:cross} to $G[C]$
to obtain a \cross{(r(\deg_H(x)),s)} $\embed{x}$.
Remove $\embed{x}$ from
$C$ and add it to $\calB$ as the vertex embedding of $x$ (set $C \leftarrow C \setminus \embed{x}$ and
$\calB \leftarrow \calB \cup \set{\embed{x}}$), and restore
$\beta$-expansion in $G[C]$.

Let us check that all the conditions of
\cref{lem:cross} are satisfied.
First, we need that $G[C]$ is \degrich{1-2/k}{(1+3(1+\beta)/\beta)r(\deg_H(x))}.  We
have $1+3(1+\beta)/\beta \le 22/\alpha$ and thus
\begin{align*}
  r(\deg_H(x))(1 + 3(1+\beta)/\beta)
  &\le
    r(\Delta(H)) \frac{22}{\alpha}
  <
    \frac{550}{\alpha^2} \Delta(H) .
\end{align*}
Furthermore since $G$ is \degrich{1-4/k}{550\Delta(H)/\alpha^2} and
$|C| \ge (1-2/k)n$, $G[C]$ is
\degrich{1-2/k}{550\Delta(H)/\alpha^2}.
Second we need to check that
\begin{align*}
  r(\deg_H(x)) &\le \frac{\gamma^3 |C|}{k(1+\gamma)} &&\text{and} &
  r(\deg_H(x)) \cdot s & \le
  \frac{ \gamma^2 |C|}
       {k (1+\gamma)} \eqcomma
\end{align*}
where $\gamma = \frac{\beta}{3(1+\beta)}$.
Since $|C| \ge (1-2/k)n$ and $\Delta(H) \le |V(H)| \le \frac{\eps n}{k \log
  n}$ the first bound clearly holds for $n$ large enough, and provided $\eps$ is sufficiently
small as a function of $\alpha$ the second bound also holds.
Thus we can indeed apply \cref{lem:cross} on
$G[C]$ with the desired choice of $r$ and $s$.


After embedding $x$, we need to connect the embedding
$\embed{x}$ to the embeddings of the
neighbors $N_H(x) \cap I = \set{y_1, \ldots, y_\nu}$ that are already
embedded.
Suppose, for now, that the vertex embeddings have
branches $\branch_x \in \embed{x}$ and $\branch_{y_i} \in \embed{y_i}$ that are
$\beta$-expanding into $C$
(i.e.\ $|N(\branch_x, C)|, |N(\branch_{y_i}, C)| \ge \beta s$), and such that neither of the two
branches are already used to connect $x$, resp. $y_i$, to a neighbor.

By the assumption on odd-cycle-robustness, we see that $G[C]$ is
non-bipartite and contains an odd cycle $c$ of length
\begin{align}
  1 + 2/\beta
  \le
  |c|
  \le
  3 \cdiam[\beta/2] \log n \eqperiod
\end{align}
As each branch is rather large, of size $s$, we can
apply \cref{lem:odd-even-path} to $G[C]$, $N(\branch_{x}, C)$ and
$N(\branch_{y_i}, C)$ to
conclude that
in $G[C]$ there is an odd
path $q_i$ connecting $\branch_x$ to $\branch_{y_i}$ of length
$\constPathLen \log n \le s$.
Remove $q_i$ from $C$, add it to $\calB$ as the edge embedding
$\embed{\set{x, y_i}}$ and restore $\beta$-expansion in $G[C]$. This process is
illustrated in \cref{fig:odd-minor-edge} and can be found as pseudo
code in \cref{alg:embed-graph}.


\begin{figure}
  \centering
  \includegraphics{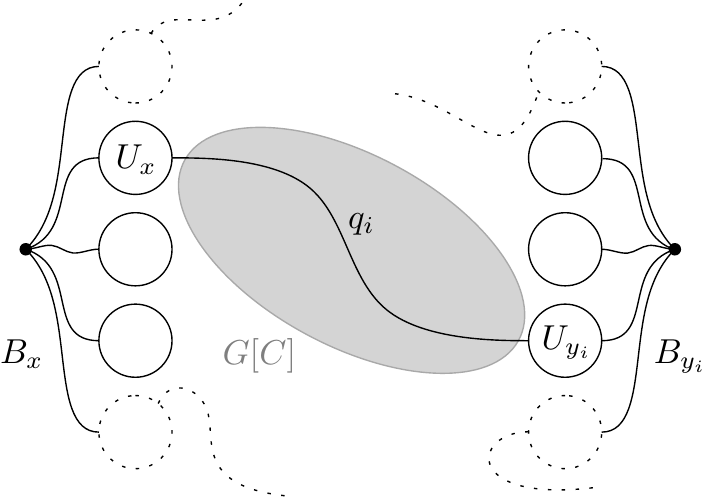}
  \caption{The vertex embedding $\embed{x}$ is connected to
    $\embed{y_i}$ by the path $q_i$ which connects the two branches
    $\branch_x$ and $\branch_{y_i}$. The dotted branches have an
    edge embedding adjacent and can thus not be used to connect
    $\embed{y_i}$ to $\embed{x}$.}
  \label{fig:odd-minor-edge}
\end{figure}

If all branches of a vertex embedding $\embed{z}$ have either too few
neighbors in $C$ or are already adjacent to an edge embedding
(i.e., have already been used to embed some other edge), then we want
to remove the embedding of $z$. This has to be done in a careful
manner in order not to break the invariants.
First, move all branches that
are not used to connect $z$ to a neighbor to
$A$. Note that each such branch $\branch$ satisfies
$|N(\branch, C)| < \beta |\branch|$. Next, move the remaining
branches along with the adjacent edge embeddings to $A'$. Last,
the center of $\embed{z}$ is moved to $A'$ and $z$ is removed from $I$.
Note that at most $2(\deg_H(z) - 1)s$
many vertices are moved to
$A'$: at most $\deg_H(z) - 1$ many branches of size $s$
and as many edge embeddings, each again
of size at most~$s$.
On the other hand at least
\begin{align}
  \big(r(\deg_H(z)) - (\deg_H(z) - 1)\big) \cdot s
  &= \deg_H(z) \cdot 4 s/\beta
\end{align}
many vertices are moved to $A$. Hence the invariant
$|A'| < \beta |A|/2$ is maintained.

The algorithm terminates the first time either $I = V(H)$ or
$|A| \ge n/k$.
This completes the description of the algorithm.

It remains to argue that it cannot happen that $|A| \ge n/k$,
in other words that when the algorithm terminates, all of $H$ is embedded in $G$.
To this end,
observe that the size of $\cup_{B \in \calB} B$ is upper bounded by
\begin{align*}
  s \cdot \left(
  |E(H)| +
  \sum_{v \in V(H)} r(\deg_H(v))
  \right)
  &<
  s \cdot \left(
    |E(H)| +
    (4/\beta + 1) \sum_{v \in V(H)} \deg_H(v)
    \right)
    \\
  &\le s \cdot |E(H)| \cdot \frac{11}{\beta}\\
  &\le s \cdot \frac{\eps n}{k \log n} \cdot \frac{11}{\beta} \\
  &\le \beta n/2k \eqperiod
\end{align*}
Furthermore, while $|A| \le n/k$ we have that
\begin{align*}
  |A'|
  &< \beta |A|/2
  \le \beta n/2k \eqperiod
\end{align*}
Note that this also holds
the first time $|A|$ becomes larger than $n/k$.
This shows, in particular, that the invariant $|C| \ge n(1 - 2/k)$
is maintained throughout the execution of the algorithm.

For the sake of contradiction, suppose that
the algorithm terminates because of $|A| \ge n/k$.
Note that $|N(A)| \le |A'| + \left| \cup_{B \in \calB} B \right| + |N(A, C)| < \beta (|A| + n/k)$.
We do a case distinction, depending on the size of
$A$. In both cases we derive contradiction and thus show
that the algorithm only terminates after having embedded all of $H$ into $G$.
\begin{enumerate}[label=Case \arabic*:, leftmargin=*]
\item $n/k \le |A| \le n/2$.
  By expansion and using $\beta < \alpha/3$ we have
  \[
  \alpha |A|
    \le
    |N(A)|
    <
    \beta (|A| + n/k) < \frac{\alpha}{3}(|A| + n/k) \eqcomma
    \]
  which together with $|A| \ge n/k$ yields the desired contradiction.
\item $|A| > n/2$.
  Note that the first time
  $|A| \ge n/k$, it also holds that
  $|A| \le n(1 + 1/k)/2$ as
  the sets added to $A$ are of size at most
  $|C|/2 \le (n - |A|)/2$.
  Hence we get that
  \begin{align*}
    |N(A)|
    &< \beta \big(n/k + |A|\big)
    < \beta n\big(1/k + (1 + 1/k)/2\big)
    \le \frac{\alpha n}{3(1 + \alpha)} \eqcomma
  \end{align*}
  using that $k \ge 3$.
  Note that $N(A)$ is a balanced separator, separating $A$ from
  $V(G) \setminus A \setminus N(A)$. But this is a contradiction, since \cref{lem:exp-sep}
  states that any balanced separator of $G$ has size at least $\frac{\alpha n}{3(1+\alpha)}$.
\end{enumerate}

\subsection{Crosses in Expanders}
\label{sec:cross proof}

Let us now turn to the proof of \cref{lem:cross}, restated here for convenience.

\CrossLemma*

The proof follows a similar algorithm as the proof of \cref{thm:odd-minor}.
In this case we can in fact more or less use the original argument of
Krivelevich and Nenadov \cite{kn2018MinorsNoCut} without any extensions.

\begin{proof}

  The high-level idea of the proof is as follows.  First, using the
  embedding argument of Krivelevich and Nenadov, we find some number
  $r' > r$ pairwise disjoint sets $B_1, \ldots, B_{r'}$ of vertices of
  $G$ and a final set $C$ disjoint from all $B_i$s such that (i) each
  $B_i$ is a connected subgraph of $G$ on $s$ vertices, (ii) the
  $B_i$s have many neighbors in $C$, and (iii) $G[C]$ is expanding.
  Having these subsets, we can then choose a representative
  $u_i \in N(B_i, C)$ of each $B_i$, take a vertex $v \in C$ of high
  degree (which exists by the max-degree-robustness of $G$), and apply
  \cref{lem:exp-path-disj} to find vertex-disjoint paths connecting
  $N(v)$ to the $u_i$s.  This establishes the existence of a cross
  with $v$ as the center and the $B_i$s together with the respective
  paths as branches.  See \cref{fig:odd-minor-vertex} for an
  illustration.

  Let us proceed with the details.
  In case there is some ambiguity in the
  verbal description there is also a pseudo code description in
  \cref{sec:pseudocode} of what follows.

  \begin{figure}
    \centering
    \includegraphics{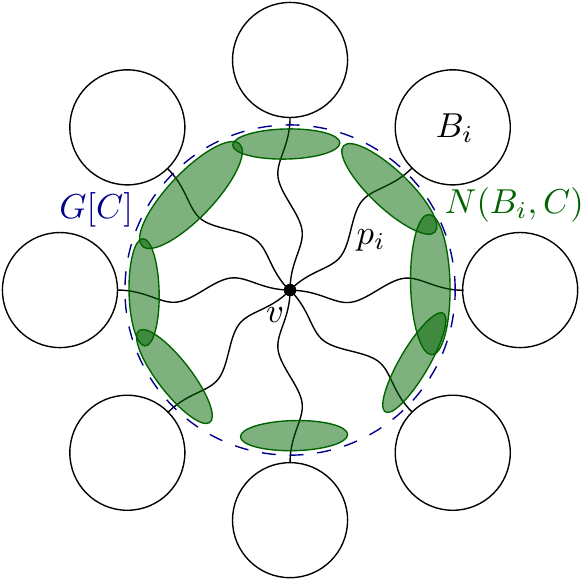}
    \caption{
      A cross with center $v$ and
      branches $\setdescr{V(p_i) \cup B_i}{i \in [r]}$.
      }
    \label{fig:odd-minor-vertex}
  \end{figure}

  Fix $r$, set $r' =  r(1 + 1/\gamma)$ and choose $s \in \mathbb{N}$ maximal such that
  $s \le \frac{\gamma n}{k \cdot r'}$. Note that
  $s \ge 1/\gamma$ and
  if the statement holds for this maximal $s$,
  then it also
  holds for smaller values of $s$,
  as one can always shrink the branches to the appropriate size.

  Let us describe an algorithm to identify the sets
  $\calB = \setdescr{\embed{i} \subseteq V(G)}{i \in [r']}$.
  The algorithm maintains a partition
  $A \disjointunion  \dot\bigcup_{B \in \calB} B \disjointunion C$ of the
  vertices of $G$.
  Initially, all sets except $C = V(G)$ are empty.
  After running the procedure, the set $\calB$
  contains $r'$ pairwise vertex-disjoint sets
  such that for each $B_i \in \calB$ it holds that
  $|B_i| = s$ and the induced subgraph
  $G[B_i]$ is a single connected component. Further,
  for all subfamilies $\mathcal{F} \subseteq \calB$ it
  holds that
  $\left|\bigcup_{F \in \mathcal{F}}N(F, C)\right| \ge \gamma s |\mathcal{F}|$.
  Throughout the execution of the
  algorithm the following invariants are maintained
  \begin{enumerate}[label=$(\roman*)$]
  \item
    $C$ never increases in size and $|C| \ge n(1-2/k)$,
  \item
    $G[C]$ is a $\gamma$-expander (by restoring expansion whenever needed),
  \item
    $N(A, C) < \beta |A|$, and
  \item
    $\left|\dot\bigcup_{B \in \calB} B\right| \le r' \cdot s \le \gamma n/k$.
  \end{enumerate}
  The algorithm terminates if $\calB$ contains $r'$
  vertex sets as described,
  or if the size of $A$ reaches $|A| \ge n/k$.
  The latter case can only occur if there is a small
  balanced separator in $G$.
  But $G$ is a $\beta$-expander, so we know from
  \cref{lem:exp-sep} that there are no small balanced separators
  and hence when the algorithm terminates, $\calB$ must contain $r'$ sets
  as described above.

  Like in the main algorithm used in the proof of \cref{thm:odd-minor},
  we want to ensure that $G[C]$ is a
  $\gamma$-expander throughout the algorithm, which is achieved by removing any subset
  $U \subseteq C$ of size $|U| \le |C|/2$
  with small neighborhood $|N(U, C \setminus U)| < \gamma |U|$
  from $C = C \setminus U$ and
  adding it to $A = A \cup U$.

  Repeat the following while there are less than $r'$ sets in $\calB$.
  Choose a set of vertices $U \subseteq C$ of size $|U| = s$ such
  that $G[U]$ is a single connected component. Remove this set from
  $C = C \setminus U$, add it to $\calB = \calB \cup \set{U}$ and restore expansion
  in $G[C]$.
  After expansion is restored,
  let $\mathcal{F} \subseteq \calB$ be a maximal (possibly empty) family 
  such that
  $\left|\bigcup_{F \in \mathcal{F}} N(F, C) \right| <
  \gamma s|\mathcal{F}|$.
  Remove $\mathcal{F}$ from $\calB = \calB \setminus \mathcal{F}$, and add
  these sets to $A = A \cup_{F \in \mathcal{F}} F$.

  As mentioned before, the
  algorithm terminates once there are either $r'$ sets in $\calB$ or the
  set $A$ is large $|A| \ge n/k$. This completes the description of
  the algorithm.
  Let us argue that the latter cannot happen -- for the sake of
  contradiction, suppose the algorithm terminates because
  $|A| \ge n/k$.
  Note that we have $|N(A)| \le |\cup_{B \in \calB} B| + |N(A, C)| < \gamma (|A| + n/k)$.
  We do a case distincion on the size of $|A|$.
  \begin{enumerate}[label=Case \arabic*:, leftmargin=*]
  \item $n/k \le |A| \le n/2$.
    By expansion, $\beta |A| \le N(A) < \gamma (|A| + n/k) < \frac{\beta}{3} (|A| + n/k)$.
    As $|A| \ge n/k$ this is a contradiction.
  \item $|A| \ge n/2$.
    Note that the first time
    $|A| \ge n/k$, it also holds that
    $|A| \le n(1 + 1/k)/2$ as
    the sets added to $A$ are of size at most
    $|C|/2 \le (n - |A|)/2$.
    Hence we get (using $k \ge 3$) that
    \begin{align*}
      |N(A)| & \le \gamma (n/k + |A|) \le \gamma n = \frac{\beta n}{3(1+\beta)}.
    \end{align*}
  Note that $N(A)$ is a balanced separator, separating $A$ from
  $V(G) \setminus A$. But this is a contradiction, since \cref{lem:exp-sep}
  states that any balanced separator of $G$ has size at least $\frac{\beta n}{3(1+\beta)}$.
  \end{enumerate}

  It remains to obtain an \cross{(r, s)} from the sets $B_i$ and the
  remaining part $C$.  Choose a vertex $v \in C$ of degree at least
  $\deg_{G[C]}(v) \ge r'$.  Such a vertex $v$ exists, as
  $|C| \ge (1 - 2/k)n$ is large (first invariant) and the statement
  assumes that there is a vertex of degree $r'$ in every induced
  subgraph of size at least $(1 - 2/k)n$. Let $T$ be a transversal of
  the family $\setdescr{N(B, C)}{B \in \calB}$. Note that such a
  transversal $T$ exists by Hall's marriage theorem, using that
  $s \ge 1/\beta$ and that every subset of $\calB$ is
  $\beta$-expanding into $C$.
  
  Apply \cref{lem:exp-path-disj} to $G[C]$ and the vertex sets $N(v)$
  and $T$ to conclude that there are pairwise vertex-disjoint paths
  $\setdescr{p_i}{i \in [r]}$ each connecting $N(v)$ to a set
  $N(B_i, C)$, for some $B_i \in \calB$. We let the \cross{(r,s)} have
  center $v$ and branches
  $\setdescr{V(p_i) \cup \embed{i}}{i \in [r]}$. Let us verify that
  this is indeed a valid \cross{(r, s)}.

  Each path $p_i$ connects $N(v)$ to $N(B_i, C)$ and we thus have
  that, as required, each branch intersects $N(v)$ and that the
  branches are connected, where we use that the sets $B_i \in \calB$
  are by definition connected.  We also need to verify that the
  branches are pairwise vertex-disjoint. To this end recall that the
  sets $B_i \in \calB$ are pairwise disjoint and, furthermore, each
  such set is disjoint from $C$. As the pairwise vertex-disjoint paths
  $p_i$ live in $G[C]$, these paths do not intersect
  $\cup_{i \in [r]}B_i$ and we may thus conclude that the branches are
  pairwise vertex-disjoint. Finally, we also need to check that each
  branch is of size $s$: each set $B_i$ is of size $s$ and thus each
  branch is of size at least $s$. Shrinking the branches to the
  appropriate size recovers the statement.
\end{proof}

\subsection{Odd and Even Paths}
\label{sec:odd-even-path}

In this section we prove \cref{lem:odd-even-path}.

\OddEvenPath*

The lemma is a corollary of a more general statement about short paths
in $\alpha$-expanders.
The lemma states that if sets $S, T$,
where $|S| \gtrsim |T|/\alpha$, are
connected by $|T|$ many short vertex-disjoint paths,
then for any large set $U$ there is again a set of short
vertex-disjoint paths that does not only connect every vertex of $T$ to
$S$ but also a vertex from $U$ to $S$.

In order to state the lemma, let us introduce some notation.
For a graph $G$ and vertex sets $S, T \subseteq V(G)$,
denote by $\totallength[G]{T}{S}$ the
minimum total length of connecting all vertices of
$T$ to $S$ by pairwise
vertex-disjoint paths;
\begin{align}
  \totallength[G]{T}{S} &=
  \min_{\setdescr{p_t}{t \in T}} \sum_{t \in T}|p_t|
\end{align}
where $\setdescr{p_t}{t \in T}$
ranges over all sets of pairwise vertex-disjoint paths such that $p_t$
connects $t$ to $S$
(note the paths $\{p_t\}$ are not allowed to intersect even in $S$).
If no such set of paths exists,
the value of the minimum is taken to be $\infty$.
If the graph $G$ is clear from context, we omit the superscript.

A similar lemma (though without the essential upper bound on the path
lengths) has appeared in e.g. \cite{fm2019cycle}.

\begin{lemma}\label{lem:exp-paths}\label{LEM:EXP-PATHS}
  Let $G$ be a $\beta$-expander on
  $n$ vertices and
  $S, T\subseteq V(G)$ satisfy
  $|S| \ge |T|(1 + 2/\beta)$.
  Then every set $U \subseteq V(G)$, of size
  $|U| \ge  (\totallength{T}{S} + |T|)(1 + 2/\beta)$,
  contains a vertex $u \in U$ such that
  $\totallength{T \cup \set{u}}{S} \le
  7(\totallength{T}{S} + |T|)/\beta + 2 \cdiam[\beta/2] \log n $.
\end{lemma}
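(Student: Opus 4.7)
Fix an optimal vertex-disjoint path system $P = \{p_t\}_{t \in T}$ with $\sum_t |p_t| = L := \totallength{T}{S}$. By shortest-path minimality each $p_t$ can be chosen to meet $S$ only at its endpoint, so the vertex set $W := \bigcup_t V(p_t)$ satisfies $|W| \le L + |T|$ and $|W \cap S| = |T|$. A short computation then gives $|U \setminus W| \ge (2/\beta)|W|$ and $|S \setminus W| \ge (2/\beta)|T|$, so both the candidate set $U \setminus W$ and the fresh target $S \setminus W$ are comfortably large.

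Since $\beta \le 1$, the target bound $7(L+|T|)/\beta + 2\cdiam[\beta/2]\log n$ dominates $L + O(\log n / \beta)$, so the plan is to augment $P$ with a single new short path for some $u \in U \setminus W$, possibly at the cost of rerouting one of the old $p_t$'s. The main step is to produce, for some $u \in U \setminus W$, a path $q$ of length $O(\log n / \beta)$ from $u$ to some $s \in S$ that is vertex-disjoint from the interior of $W$. I would apply \cref{lem:exp-ball-remove}, in combination with an analogue of \cref{cor:exp-diam-remove}, to the subgraph $G \setminus (W \setminus S)$ (removing only the interior vertices of the old paths): the condition $|U \setminus W| \ge (2/\beta)|W \setminus S|$ is immediate from the above estimates, and growing balls from $U \setminus W$ and from $S$ in this subgraph together with a pigeonhole argument produces the desired $q$.

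If $s \in S \setminus W$ we simply add $q$ to $P$ to obtain a vertex-disjoint system of total length $L + O(\log n / \beta)$, well within the target bound. The main obstacle is the case $s = s_t \in W \cap S$, an old $S$-endpoint of some $p_t$: we must reassign $s_t$ to $q$ and reroute $p_t$ to a fresh vertex $s' \in S \setminus W$. Controlling the length of this reroute is the key technical difficulty, because the seed $t$ is a single vertex, too small for a direct application of \cref{cor:exp-diam-remove}, so one instead has to iteratively expand a ball around $t$ while working around the obstacles formed by $q$ and the kept $p_{t'}$'s. The generous $1/\beta$ factor in the bound accommodates the multiplicative blow-up of this iterative argument, and the condition $|S \setminus W| \ge (2/\beta)|T|$ guarantees that the reroute eventually reaches a fresh $S$-vertex within length $O((L+|T|)/\beta)$, yielding the first term of the bound.
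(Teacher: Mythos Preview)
Your plan has the right overall shape but contains a genuine gap at exactly the point you flag as ``the key technical difficulty,'' and also a smaller gap in the first step.

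First, the smaller issue.  To produce the path $q$ you grow balls in $G\setminus(W\setminus S)$ from both $U\setminus W$ and $S$.  You verify $|U\setminus W|\ge (2/\beta)|W\setminus S|$, but \cref{cor:exp-diam-remove} also needs $|S|\ge (2/\beta)|W\setminus S|=(2/\beta)L$, and the hypothesis only gives $|S|\ge |T|(1+2/\beta)$.  When $L\gg |T|$ this fails, and the ball from $S$ cannot be grown through the obstacle $W\setminus S$.  This is precisely why the paper's proof splits into the two cases $|S|\ge m$ and $|S|<m$ with $m=(L+|T|)(1+2/\beta)$.

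Second, and more seriously, the rerouting step is not an argument but a hope.  Once $q$ lands on an old endpoint $s_t$, you must reconnect $t$ to a fresh vertex of $S$ while avoiding the other paths and $q$.  The seed $\{t\}$ is a single vertex, so \cref{lem:exp-ball-remove} gives nothing; the obstacle set has size roughly $L+|T|+|q|$, and nothing prevents the new route from $t$ to $S$ from terminating at yet another used endpoint $s_{t'}$, triggering a cascade of reroutes with no evident control on the total length.  Your appeal to $|S\setminus W|\ge(2/\beta)|T|$ does not help, because the relevant obstacle is $W$ (of size $L+|T|$), not $T$.

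The paper avoids both problems by reversing the direction: it grows a set $S=S_0\subseteq S_1\subseteq\cdots\subseteq S_\ell$ from the $S$-side, iteratively absorbing short detours through $V(\calP)$, and proves (Claim~\ref{clm:len-path}) that any vertex of $S_\ell\setminus W_\ell$ can serve as an extra endpoint with all rerouting already paid for inside $S_\ell$.  Only after $S_\ell$ is large enough (or the process stalls, in which case the ball around $S_\ell$ avoids $V(\calP)$ automatically) does it connect to $U$ via \cref{cor:exp-diam-remove}.  The point is that $S$ is large enough to seed this process (via $|S|-|T|\ge(2/\beta)|T|=(2/\beta)|W_\ell|$), whereas a single $t\in T$ is not.
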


\cref{lem:odd-even-path} follows by a single application of
\cref{lem:exp-paths}.

\begin{proof}[Proof of Lemma~\ref{lem:odd-even-path}] 
  Let $C$ denote an odd cycle of length $\ell \ge 1 + 2/\beta$, as
  guaranteed to exist, and denote by $p$ a shortest path connecting
  $u$ to $C$. By \cref{lem:exp-diam}, we know that
  $|p| \le \cdiam[\beta] \log n$.  Apply \cref{lem:exp-paths} to
  $S = C$, $T = \set{u}$, $p_u = p$, and $U = A$. We conclude that
  there is a $v \in A$ and two vertex-disjoint paths $p_u', p_v'$
  connecting $u$ and $v$ to $C$, of total length at most
  $(15\cdiam[\beta/2]/\beta)\log n$.  We can join these paths into a
  path between $u$ and $v$ by walking along $C$ in either of the two
  directions. Since $C$ has odd length this results in one odd and one
  even length path connecting $u$ to $v$, each of length at most
  $\ell + (15\cdiam[\beta/2]/\beta)\log n$, as required.
\end{proof}

\begin{proof}[Proof of Lemma~\ref{lem:exp-paths}]
  Denote by $\calP = \setdescr{p_t}{t \in T}$ a set of
  pairwise vertex-disjoint
  paths of smallest total
  length, where the path $p_t$ connects $t$ to $S$.
  Let $V(\calP) = \cup_{p \in \calP} V(p)$ denote all the vertices in the paths in $\calP$.
  Clearly, $|V(\calP)| = \totallength{T}{S} + |T|$.
  Set
  $m = |V(\calP)|(1 + 2/\beta)$
  and
  $r = \ceil{\frac{\log m}{\log(1 + \beta/2)}}$.
  Note that $n \ge |U| \ge m$ and hence
  $r \le \frac{1}{2} \cdiam[\beta/2] \log n$.

  If $|S| \ge m$, apply
  \cref{cor:exp-diam-remove} to $V(\calP)$,
  $S \setminus  V(\calP)$ and $U \setminus V(\calP)$
  to conclude that
  there is a path $p$ of length $\cdiam[\beta/2] \log n$
  connecting $S \setminus  V(\calP)$ to $U \setminus V(\calP)$ in
  $G \setminus V(\calP)$. The set $\calP \cup \set{p}$ clearly
  satisfies the conclusion of the lemma.

  Otherwise, if $|S| < m$,
  we want to get into a position where
  we can again apply \cref{cor:exp-diam-remove}. To this end, we
  define a sequence of sets of vertices
  $
  S =
  S_0 \subseteq S_1 \subseteq \ldots
  \subseteq S_\ell \subseteq V(G)
  $
  that are in some sense well-connected to $S$. We formalize this
  property after explaining how to obtain these sets.

  The set $S_{i+1}$ is defined in terms of $S_i$ using the following process.
  Let $w^i_t$ be the last vertex on the path $p_t$
  (viewed as a path from $S$ to $t$) that is in $S_i$ and
  $W_i = \setdescr{w^i_t}{t \in T}$.
  Suppose $|S_i| < m$ and there
  is a path of length at most $r$ connecting
  $S_i \setminus W_i$ to $V(\calP) \setminus S_i$
  in the graph
  $G \setminus W_i$ .
  Denote by $q_i$ a
  minimal such path,
  denote by
  $w$
  the endpoint of $q_i$ in $V(\calP) \setminus S_i$,
  and let
  $t_i \in T$ be such that $w \in V(p_{t_i})$.
  Then, define
  $S_{i+1} = S_i \cup q_i \cup p_{t_i}[w^{t_i}_i, w]$.
  Otherwise, if $|S_i| \ge m$ or there is no such~$q_i$,
  set $\ell = i$ and stop the process. There is an illustration
  of this process in \cref{fig:sstar-cases}.

  The following claim formalizes the well-connectedness property of
  $S_\ell$.

  \newcommand{\sstar}{s^\star}
  \begin{claim}\label{clm:len-path}
    For every vertex $\sstar \in S_\ell \setminus W_\ell$ it holds
    that
    $\totallength[{G[S_\ell \cup V(\calP)]}]{T \cup \set{\sstar}}{S}
    \le \totallength[G]{T}{S} + |S_\ell|$
    and furthermore the paths achieving this bound are the same as the
    paths in $\calP$ outside $S_\ell \setminus W_\ell$.
  \end{claim}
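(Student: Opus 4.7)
The plan is to prove the claim by induction on $\ell$, maintaining the slight strengthening that the additional source path $\tilde{p}_\sigma$ lies entirely inside $S_\ell$ as a vertex set. The base case $\ell = 0$ is immediate: since $S_0 = S$, we pair $\mathcal{P}$ itself with the trivial length-$0$ path $\tilde{p}_\sigma = \{\sigma\}$ for any $\sigma \in S \setminus W_0$.

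For the inductive step, the key geometric observation is that because $q_{\ell-1}$ is chosen as a shortest path from $S_{\ell-1} \setminus W_{\ell-1}$ to $V(\mathcal{P}) \setminus S_{\ell-1}$ in $G \setminus W_{\ell-1}$, its interior is necessarily disjoint from $S_{\ell-1} \cup V(\mathcal{P})$: any internal vertex in either set would allow a strictly shorter such path. Writing $v_{\ell-1}$ and $w$ for the endpoints of $q_{\ell-1}$, with $v_{\ell-1} \in S_{\ell-1} \setminus W_{\ell-1}$ and $w \in V(p_{t^*}) \setminus S_{\ell-1}$ for $t^* = t_{\ell-1}$, I would apply the inductive hypothesis with $\sigma = v_{\ell-1}$, obtaining a system $\{\tilde{p}_t\}_{t \in T} \cup \{\tilde{p}_{v_{\ell-1}}\}$ inside $G[S_{\ell-1} \cup V(\mathcal{P})]$ with $\tilde{p}_{v_{\ell-1}} \subseteq S_{\ell-1}$.

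Now I case-split on the location of $s^\star \in S_\ell \setminus W_\ell$. If $s^\star \in S_{\ell-1} \setminus W_{\ell-1}$, I invoke the inductive hypothesis directly on $s^\star$; the agreement condition carries over since $S_{\ell-1} \setminus W_{\ell-1} \subseteq S_\ell \setminus W_\ell$. If $s^\star$ is an interior vertex of $q_{\ell-1}$, I leave $\{\tilde{p}_t\}$ unchanged and take $\hat{p}_{s^\star} = q_{\ell-1}[s^\star, v_{\ell-1}] \cdot \tilde{p}_{v_{\ell-1}}$, which is disjoint from every $\tilde{p}_t$ thanks to the geometric observation above. Otherwise $s^\star$ lies on the new segment $p_{t^*}[w_{\ell-1}^{t^*}, w]$ (including the boundary case $s^\star = w_{\ell-1}^{t^*}$), and I reroute $t^*$ by setting $\hat{p}_{t^*} = p_{t^*}[t^*, w] \cdot q_{\ell-1}^{\mathrm{rev}} \cdot \tilde{p}_{v_{\ell-1}}$ and $\hat{p}_{s^\star} = \tilde{p}_{t^*}[s^\star, s_1]$, where $s_1$ denotes the $S$-endpoint of $\tilde{p}_{t^*}$. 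Disjointness leans again on the interior of $q_{\ell-1}$ avoiding $V(\mathcal{P}) \cup S_{\ell-1}$, while the length bound follows from the telescoping $|p_{t^*}[t^*, w]| + |p_{t^*}[s^\star, w_{\ell-1}^{t^*}]| \le |p_{t^*}[t^*, w_{\ell-1}^{t^*}]|$ together with $|S_\ell| - |S_{\ell-1}| \ge |q_{\ell-1}|$.

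The principal technical obstacle is maintaining the ``agrees with $\mathcal{P}$ outside $S_\ell \setminus W_\ell$'' clause through the rerouting in the last case, which requires tracking that every vertex where $\hat{p}_{t^*}$ differs from $p_{t^*}$ lies in $S_\ell \setminus W_\ell$. This is mechanical but tedious: the symmetric difference is supported on $V(q_{\ell-1}) \cup V(\tilde{p}_{v_{\ell-1}}) \cup V(p_{t^*}[w_{\ell-1}^{t^*}, w])$, all contained in $S_\ell$, and the strengthened inductive hypothesis $\tilde{p}_{v_{\ell-1}} \subseteq S_{\ell-1}$ is precisely what prevents any spurious excursion through $V(\mathcal{P}) \setminus S_{\ell-1}$ that could otherwise break the agreement with $\mathcal{P}$.
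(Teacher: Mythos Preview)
Your proposal is correct and follows essentially the same inductive approach as the paper: apply the inductive hypothesis at the $S_{\ell-1}$-endpoint of $q_{\ell-1}$, then case-split on whether $s^\star$ lies in $S_{\ell-1}\setminus W_{\ell-1}$, on $q_{\ell-1}$, or on the newly absorbed segment of $p_{t^*}$, rerouting in the last case. Your explicit strengthening that the extra path stays inside $S_\ell$, and your observation that the interior of $q_{\ell-1}$ is disjoint from $S_{\ell-1}\cup V(\calP)$ by minimality, are details the paper leaves implicit but which are indeed needed to verify disjointness and the ``agrees with $\calP$ outside $S_\ell\setminus W_\ell$'' clause cleanly.
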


  \begin{proof}
    Proof by induction on $i \in \set{0, \ldots, \ell}$.
    The base case $i=0$ clearly holds -- we have for all
    $\sstar \in S_0 \setminus W_0 \subseteq S$ that
    $\totallength[{G[S_0 \cup V(\calP)]}]{T \cup \set{\sstar}}{S} =
    \totallength[G]{T}{S}$.

    Suppose the statement is true for some
    $i \in \{0, \ldots \ell-1\}$ and let us prove that is then true
    for $i+1$ as well.
    By the inductive hypothesis,
    $\totallength[{G[S_{i} \cup V(\calP)]}]{T \cup \set{s_i}}{S} \le
    \totallength[G]{T}{S} + |S_i|$, and this bound can be achieved by
    a set of paths $\calP'$ which follow $\calP$ outside $S_i \setminus W_i$.

    Fix an arbitrary $\sstar \in S_{i+1} \setminus W_{i+1}$. By the
    induction hypothesis the claim holds for
    $\sstar \in S_{i} \setminus W_i$, so we may assume\footnote{Here we are using that
      $W_{i} =
      (W_{i-1} \setminus \set{w_{i-1}^{t_i}}) \cup \set{w_i^{t_i}}$.
    }
    that either
    $\sstar \in q_i$, or
    $\sstar \in p_{t_i}[w^{t_i}_i, w^{t_i}_{i+1}]$.  If
    $\sstar \in q_i$ (excluding its endpoint~$w^{t_i}_{i+1}$) then we
    simply extend the path in $\calP'$ ending in $s_i$ with the
    subpath of $q_i$ from~$\sstar$ to~$s_i$, increasing
    the total length of $\calP'$ by
    at most $|q_i|$.  On the other hand if
    $\sstar \in p_{t_i}[w^{t_i}_i, w^{t_i}_{i+1}]$ then we reroute the
    path from $t_i$ in $\calP'$ to $s_i$ via $q_i$ and then use the
    now unused part of $p_{t_i}$ to connect $\sstar$ to $S$, again
    increasing the total length of $\calP'$ by at most $|q_i|$. There
    is an illustration of the two cases in \cref{fig:sstar-cases}.

    \begin{figure}
      \begin{center}
        \subfloat{
          \includegraphics[page=2,width=0.3\linewidth,height=\textheight,keepaspectratio]{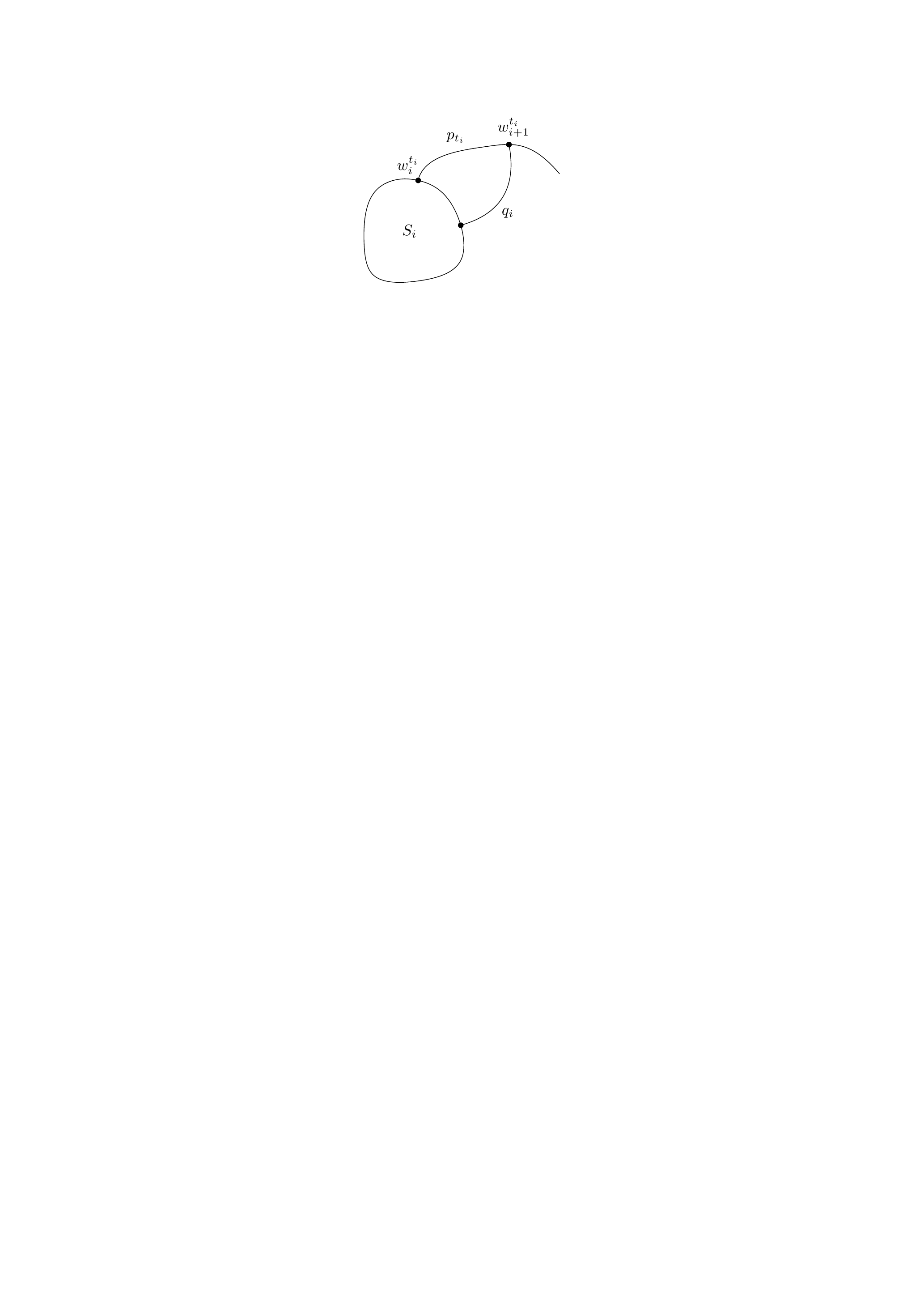}
        }\hfill
        \subfloat{
          \includegraphics[page=3,width=0.3\linewidth,height=\textheight,keepaspectratio]{fig/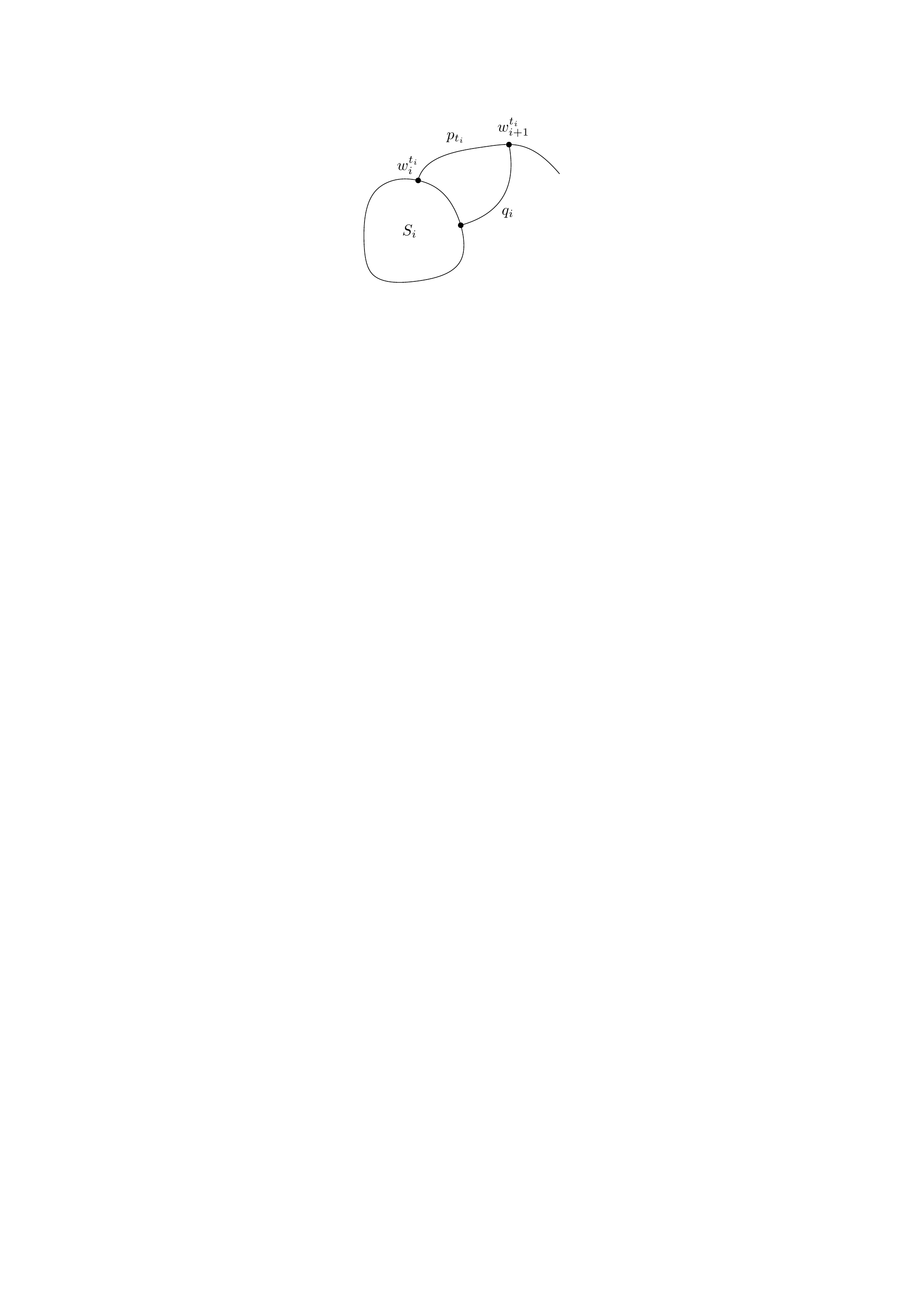}
        }\hfill
        \subfloat{
          \includegraphics[page=4,width=0.3\linewidth,height=\textheight,keepaspectratio]{fig/sstar-cases.pdf}
        }\hfill
        \caption{Given the set $S_i$, the first figure depicts the
          process of obtaining the set $S_{i+1}$.
          The following figures
          indicate how to route the paths, as in the proof of
          Claim~\ref{clm:len-path},
          depending on where $\sstar$ is located.}
        \label{fig:sstar-cases}
      \end{center}
    \end{figure}

    In either case, we can connect $T$ and $\sstar$ to $S$ via vertex-disjoint paths of length at most
    \[
      \totallength[{G[S_{i+1} \cup V(\calP)]}]{T \cup \set{\sstar}}{S} \le
      \totallength[{G}]{T}{S} + |S_i| + |q_i| \le
      \totallength[{G}]{T}{S} + |S_{i+1}|,
    \]
    as desired.
  \end{proof}

  \newcommand{\ustar}{u^{\star}}
  \newcommand{\pstar}{p^{\star}}
  It is easy to see that $|S_\ell| \le 2 m + r$:
  the number of
  vertices added by
  $p_{t_i}[w^{t_i}_i, w^{t_i}_{i+1}]$ is always upper bounded by
  $|V(\calP)| \le m$.
  Suppose there is a path $\pstar$ of length
  $|p^\star| \le \cdiam[\beta/2] \log n + r$ connecting
  some vertex $\sstar \in S_\ell \setminus W_\ell$ to
  $\ustar \in U \setminus V(\calP_\ell)$
  in $G \setminus V(\calP_\ell)$. We can then ``compose'' the paths to
  conclude that
  \begin{align}
    \totallength[G]{T \cup \set{\ustar}}{S}
    &\le
      \totallength[{G[S_\ell]}]{W_\ell \cup \set{\sstar}}{S} +
      \totallength
      [G \setminus (S_\ell \setminus (W_\ell \cup \set{\sstar}))]
      {W_\ell \cup \set{\sstar}}{T \cup \set{\ustar}}\\
    &\le
      |S_\ell| + \totallength[G]{T}{S} + |\pstar|\\
    &\le
      |S_\ell| + \totallength[G]{T}{S} + \cdiam[\beta/2] \log n + r\\
    &\le
      2m + r + \totallength[G]{T}{S} + \cdiam[\beta/2] \log n + r \\
    &\le
      7(\totallength[G]{T}{S} + |T|)/\beta + 2 \cdiam[\beta/2] \log n \eqcomma
  \end{align}
  as claimed in the statement.

  It remains to establish that such a path $\pstar$ exists.
  If $|S_\ell| \ge m$, apply \cref{cor:exp-diam-remove}
  to $V(\calP)$, $S_\ell \setminus W_\ell$ and $U \setminus V(\calP)$
  to conclude that there is a path $\pstar$ of length at most
  $|\pstar| \le \cdiam[\beta/2] \log n$ that connects
  $S_\ell \setminus W_\ell$ to $U \setminus V(\calP)$ in
  $G \setminus V(\calP)$.

  Otherwise, by construction, $S_\ell \setminus W_\ell$ cannot reach
  $V(\calP) \setminus W_\ell$ within $r$ steps in
  $G \setminus W_\ell$.  Hence, to argue that in
  $G \setminus V(\calP)$ the ball of radius $r$ around
  $S_\ell \setminus W_\ell$ is large, we do not need to apply
  \cref{lem:exp-ball-remove} to $V(\calP)$ and
  $S_\ell \setminus W_\ell$ but in fact can apply it to $W_\ell$ and
  $S_\ell \setminus W_\ell$, where we use that
  $|S_\ell \setminus W_\ell| \ge |S| - |T| \ge 2|T|/\beta$.  This
  enables us to grow $S_\ell \setminus W_\ell$ into a set
  $\sgood = B_r^{G \setminus W_\ell}(S_\ell \setminus W_\ell) = B_r^{G
    \setminus V(\calP)}(S_\ell \setminus W_\ell)$ of size at least
  $m$.  Now we are in a position to apply \cref{cor:exp-diam-remove}
  to $V(\calP)$, $\sgood$ and $U \setminus V(\calP)$ to conclude that
  there is a path of length at most $\cdiam[\beta/2] \log n$ that
  connects $\sgood$ to $U \setminus V(\calP)$ in
  $G \setminus V(\calP)$. Taking an additional $r$ steps in
  $G[\sgood]$, one can reach $S_\ell \setminus W_\ell$, as
  required. This concludes the proof of the lemma.
\end{proof}

\section{Concluding Remarks}
\label{sec:conclusion}

We have established average-case lower bounds for refuting the perfect
matching formula and more generally the $\Card(G, \vec{t})$ formula in
random $d$-regular graphs on an odd number of vertices.
Let us conclude by discussing some further loose ends and mention some open problems.

\subsection{Polynomial Calculus Space Lower Bounds}
\label{sec:space}

The space of a PC refutation $\pi$ is the amount of memory needed to
verify $\pi$. The PC space of a formula $\calF$ is then the minimum space
required for any PC refutation $\pi$ of $\calF$. As this is rather tangential to the
rest of the paper we refer to
\cite{FLMNV13TowardsUnderstandingPC} for formal definitions.
For convenience, let us restate our result on PC space.

\SpaceTheorem*

The proof idea is to take the worst-case Tseitin lower bounds from
Filmus et al.~\cite{FLMNV13TowardsUnderstandingPC}
for which PC requires
$\Omega(n)$ space and embed these into a vertex expander of large
enough average degree. The only compication that arises is that these
formulas are defined over multigraphs --
the multigraph $H$ is obtained from
an appropriate\footnote{See the proof of Theorem 8 in
\cite{FLMNV13TowardsUnderstandingPC}.} constant degree graph $G$ by doubling each
edge. An inspection of the proof of \cref{thm:odd-minor} reveals
that $H$ may be a multigraph and we can thus implement our proof
strategy.

\begin{proof}[Proof Sketch]
  Consider the worst-case instance $H$ from Filmus et
  al.~\cite{FLMNV13TowardsUnderstandingPC} on $\eps n /\log n$
  vertices, for some small enough $\eps > 0$.
  Apply \cref{thm:odd-minor} to $H$ and $G$.
  This gives a topological embedding of $H$ in
  $G$, with no control of the parities of the length of the paths.
  Consider a restriction $\rho$ that sets the
  variables outside the embedding of $H$ such that no axiom is falsified
  (see, e.g., \cite{pitassi16frege}). By appropriately substituting
  the variables on each path of the topological embedding we
  obtain that the worst-case instance $\tsf{H}$ is an affine restriction
  of $\tsf{G}$. As an affine restriction
  only reduces
  the amount of space needed to verify a proof, we see that $\tsf{G}$
  requires PC space $\Omega(n/\log n)$.
\end{proof}

\subsection{Paths in Expanders}

The arguments used in the proof of \cref{thm:odd-minor} can be adapted
to make partial progress on a question by Friedman and
Krivelevich~\cite{fm2019cycle}. They asked, given a
positive integer $q$, whether it is possible to guarantee the
existence of a cycle whose length is divisible by $q$
in every $\alpha$-expander.

We can show that for all primes $q$ satisfying
$1/\poly(\alpha) \ll q \ll \sqrt{n/\log n} $, this indeed
holds. In fact, for all $a \in \Z_q$, we can
show that there is a cycle of length $a \bmod q$.

The idea is to embed a cycle $C_{q^2}$ of length $q^2$ into $G$ such
that between any two vertices there are two paths whose length
difference is non-zero modulo $q$. If we can ensure this, as all
$0 \neq b \in \Z_q$ are generators, we can choose one path between all
embedded vertices such that
the length of the cycle is $a \bmod q$ for any $a \in \Z_q$.

In order to obtain paths of different length modulo $q$, let us embed a
cycle $c_e$ (of length $\gg 1/\poly(\alpha)$) for each edge
$e = \set{u, v}$.
We then want to
connect the vertex embeddings $B_u, B_v$
to $c_e$ such that the two
resulting paths are of different length modulo $q$. Note that once a
vertex is connected to the cycle, there are only about $2/q$
vertices in $c_e$ such that both paths are of equal length
modulo $q$. As $q$ is rather large and thus there are few such ``bad''
vertices, when an
edge embedding has to be moved to the sets
$A, A'$, we can ensure that the set $A'$ remains relatively small
compared to $A$.

\subsection{Open Problems}

The main concrete problem left open is to reduce the degree of the
hard graphs: the embedding approach taken in the worst-case to
average-case reduction results in very large degree $d$; while
\cref{thm:pm avg lower bound} does not give an explicit estimate on
$d_0$ one can trace through the proofs and get an estimate of
around $15\,000$.  The main bottleneck that prevents us from reducing
this is the \hyperref[lem:partition]{Partition Lemma} and in
particular the dependence of $d_0$ on $c$ and $\epsilon$ in
\cref{lem:split}.  If this part could be significantly improved or
circumvented we believe that the degree of the graph could be
significantly reduced, although it would still be relatively large (at
least a few hundreds).  It would be interesting to see what happens for
very small degrees, e.g., $4$-regular graph (recall that since $n$
is odd, $d$ must be even) -- is $\pmf{G}$ hard with high probability
even for these graphs?

Another interesting question is the proof complexity of perfect
matching in Polynomial Calculus over $\F_2$ (or any other field of
characteristic $2$).  While PC$_{\F_2}$ can refute the perfect
matching formula on an odd number of vertices for parity reasons, the
situation is less clear when the number of vertices is even. Are there
graphs $G$ that do not admit perfect matchings but PC$_{\F_2}$ requires
exponential size refutations?

We establish
\cref{thm:pm avg lower bound}
for random graphs and not for
all spectral expanders.  Our proof mostly uses the expansion
properties of random graphs, except for two places: in the
\hyperref[lem:partition]{Partition Lemma}
to argue that every subgraph contains a not too short
odd cycle, and for the contiguity argument in
\cref{sec:avg-general}. However, we believe that it should be possible
to circumvent these uses of randomness and establish the lower bounds for
all spectral expanders.

\cref{thm:pm avg lower bound} only gives lower bounds for $\Card(G,
b)$ when $b = \vec{t}$ is a constant vector (and~$G$ is regular).  It
would be nice to characterize more generally for which vectors $b$ the
formula is hard.  In the analogous setting for Tseitin formulas, the
precise charges of the vertices do not matter, as long as the sum of
charges is odd the formula remains hard to refute on a random graph
\cite{BGIP01LinearGaps,gri01xor}. In the $\Card(G, b)$ case however
this is not the case.  For instance, if the vector of target degrees
$b$ violates any of the \emph{inequalities} of the Erdős-Gallai characterization of
degree sequences then SoS (in fact even Sherali-Adams) can easily
refute $\Card(G, b)$.%
\footnote{To see this, note that for any $S \subseteq V(G)$, SoS of
  degree $1$ can derive
  $\sum_{v \in S} b_v = \sum_{v \in S} \sum_{e \ni v} x_e \le |E(S,
  S)| + \sum_{e \in E(S, \overline{S})} x_e \le |S|(|S|-1) + \sum_{v
    \not \in S} \sum_{e \in E(S, v)} x_e$, and also that
  $\sum_{e \in E(S,v)} x_e \le \min(b_v, |S|)$ for every
  $v \not \in S$.  Combining these, Sherali-Adams can derive
  $\sum_{v \in S} b_v \le |S|(|S|-1) + \sum_{v \not \in S} \min(b_v,
  |S|)$ which in particular means it can derive all the inequalities
  of the Erdős-Gallai theorem.}

In the
case when $G$ is the complete graph this in fact gives a complete
characterization of the easy and hard vectors $b$ but for sparse
graphs the situation is less clear.  Is there a nice characterization
of vectors $b$ for which $\Card(G, b)$ is hard for SoS with high
probability over a random $d$-regular $G$?

More broadly, another open problem is to prove SoS lower bounds for
random CSPs that do not support pairwise uniform distributions
(cf.~the brief discussion on CSPs in \cref{sec:related work}).
Viewed this way, our results establish hardness of random monotone
$1$-in-$k$-SAT instances with two occurrences per variable, for some
large constant $k$.  Reducing $k$ corresponds to the aforementioned
problem of reducing the degree, but some other natural questions are
to look at other CSPs such as $1$-in-$k$-SAT with negated literals, or
to understand the hardness as a function of the density of the
instances.

\paragraph{Acknowledgements.}
The authors are grateful to Susanna de Rezende, Johan Håstad, Jakob
Nordström, Dmitry Sokolov and Aleksa Stanković for helpful
discussions.  In particular we would like to thank Jakob Nordström who
suggested to consider the even coloring formula and brought the
problem about Polynomial Calculus space to our attention.  Finally we
would like to thank the anonymous conference and jounral referees for
their very detailed reviews and many helpful remarks.

\newpage
\printbibliography


\appendix

\section{Worst-Case Lower Bounds}
\label{sec:worst-case}

In this section we describe a general reduction from the Tseitin
formula to the Perfect Matching formula as it appeared in
\cite{BGIP01LinearGaps} for Polynomial Calculus. We then observe that this
reduction also works for the SoS and bounded depth Frege proof
systems.

Starting from a graph $\gts$ such that the Tseitin
formula $\tsf{\gts}$ is hard for a proof system \psys, we want to
craft a graph $\gpm$ so that $\pmf{\gpm}$ is hard for
\psys.
To simplify the presentation, let us assume that $\gts$ is
$d$-regular. As we are interested in unsatisfiable instances, i.e.,
when $\gts$ has an odd number of vertices, we may assume that $d$ is even.

The graph $\gpm$ is a ``blow-up'' (or ``lift'') of $\gts$: each vertex
in $\vts$ is lifted to a clique of $d+1$ vertices and each lifted edge
connects a single pair of vertices from the corresponding cliques.  If we denote
the lifted vertices of $v \in \vts$ by
$\lift{v} = \set{\vtex{\star}, \vtex{1}, \ldots, \vtex{d}}$,
we add for each edge $\set{u, v} \in \ets$, where $v$ is the
$i$th neighbor of $u$ and $u$ is the $j$th neighbor of $v$, an
edge $\set{\vtex[u]{i}, \vtex[v]{j}}$.
An
illustration of the construction of $\gpm$ can be found in
\cref{fig:red-ex}.
\begin{figure}
  \begin{center}
    \subfloat[The graph $\gts$.]{%
      \includegraphics[width=0.3\textwidth, page=1]{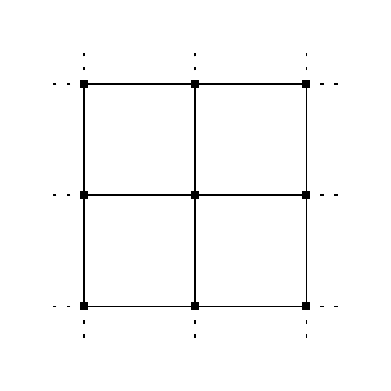}%
    }\hfill%
    \subfloat[The lift of a single vertex.]{%
      \includegraphics[width=0.3\textwidth, page=2]{fig/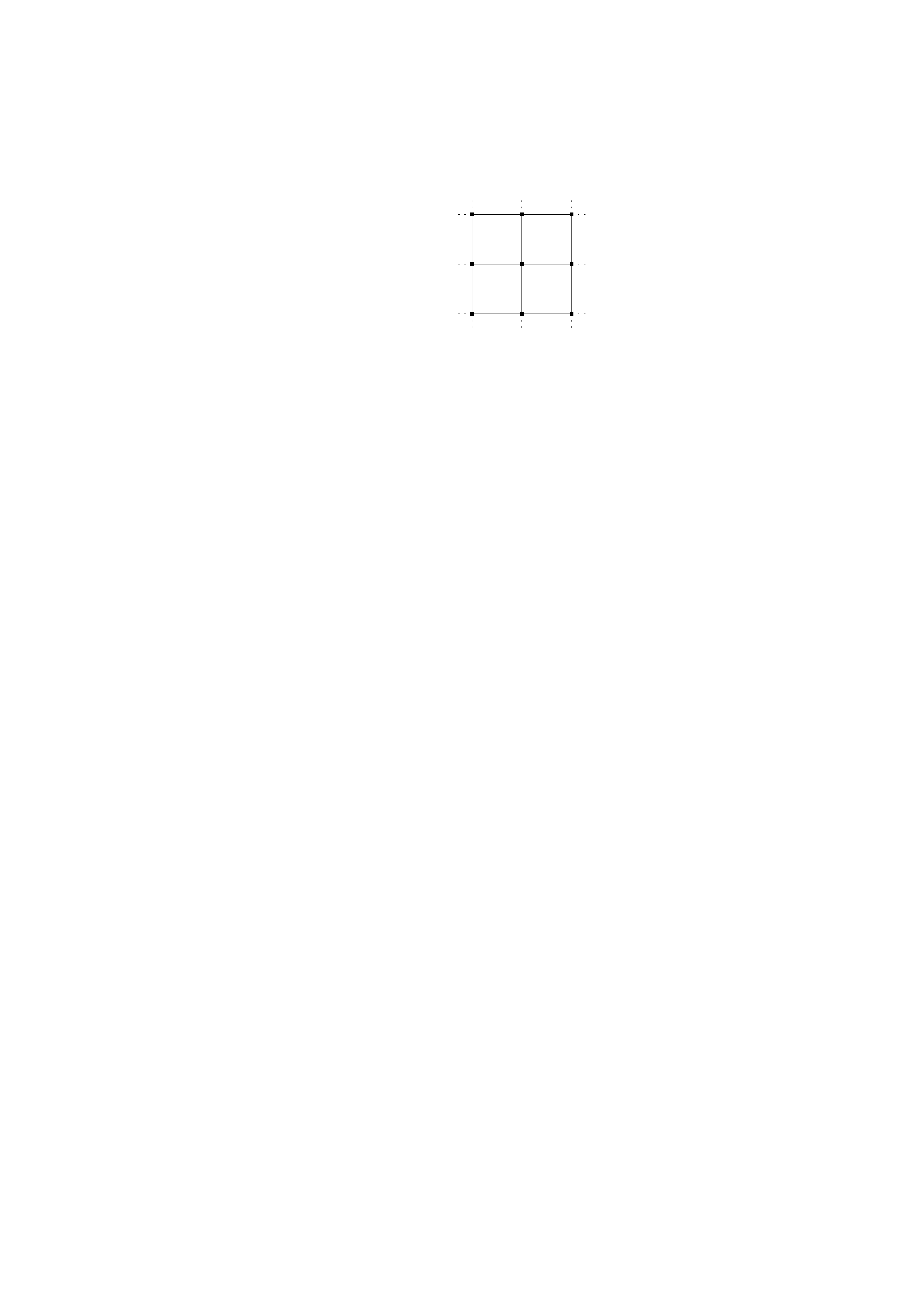}%
    }\hfill%
    \subfloat[The lifted graph $\gpm$.]{%
      \includegraphics[width=0.3\textwidth, page=3]{fig/red_deg_3.pdf}%
    }\hfill%
    \caption{An illustration of the blow-up construction, starting from
      a 4-regular graph.}
    \label{fig:red-ex}
  \end{center}
\end{figure}

For intuition, let us describe how we would obtain a satisfying assignment to
the Perfect Matching Formula from a hypothetical satisfying assignment to the
Tseitin Formula. Set the lifted edges to the same value as they are
set to in the Tseitin Formula. Now observe that each charge is odd and
hence there is an even number of vertices left that are not matched
yet in each $\lift{v}$, for $v \in \vts$. As the vertices in
$\lift{v}$ form a clique, we can select a perfect matching on these unmatched vertices
to obtain a satisfying assignment to the Perfect Matching Formula.

Buss et al.~\cite{BGIP01LinearGaps} showed that Polynomial Calculus
can simulate this reduction.
\begin{theorem}[\cite{BGIP01LinearGaps}]\label{thm:worst-pc}
  There are graphs $G$ on an odd number of vertices $n$
  and maximum degree $\Delta(G) = 5$
  such that Polynomial Calculus over any field of characteristic
  different from $2$ requires degree $\Theta(n)$ to refute $\pmf{G}$.
\end{theorem}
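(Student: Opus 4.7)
The plan is to establish this lower bound by the blow-up reduction from Tseitin formulas to perfect matching formulas outlined in the preceding subsection, following Buss et al.~\cite{BGIP01LinearGaps}. I would begin with a sequence of $4$-regular expanders $\gts$ on an odd number $m$ of vertices for which $\tsf{\gts}$ requires PC$_\F$ degree $\Omega(m)$ over every field $\F$ with $\Char(\F)\neq 2$ (the classical hard Tseitin instances of \cite{BGIP01LinearGaps,AR01nonbin}), and apply the lift construction to obtain $\gpm$ on $n = 5m$ vertices. The graph $\gpm$ is on an odd number of vertices since $m$ is odd, and has maximum degree $d+1 = 5$, because each non-star lifted vertex $\vtex{i}$ has $d = 4$ clique-internal neighbors plus exactly one external neighbor while $\vtex{\star}$ has only $d$ neighbors.

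The heart of the argument is a PC$_\F$ simulation that converts any degree-$D$ refutation of $\pmf{\gpm}$ into a degree-$O(D)$ refutation of $\tsf{\gts}$. To this end I would define a polynomial substitution $\sigma$ mapping each PM variable to a polynomial of constant degree in the Tseitin variables: the external edges are mapped via the natural identification $x_{\{\vtex[u]{i}, \vtex[v]{j}\}} \mapsto y_{\{u,v\}}$, while for each clique $\lift{v}$ the internal edge variables are mapped to explicit polynomials of degree at most $d = 4$ in the four Tseitin variables at $v$, encoding a canonical choice of internal matching of $\lift{v}$ as a function of the external pattern (well-defined whenever this pattern has odd weight, which the Tseitin axiom enforces). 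Applied to a PC$_\F$ refutation $\pi$ of $\pmf{\gpm}$, this substitution yields a sequence of polynomials $\sigma(\pi)$ in the Tseitin variables of degree $O(D)$, starting from $\sigma(\pmf{\gpm})$. Each $\sigma$-image of a matching axiom involves only the $O(1)$ Tseitin variables at a single vertex and vanishes on every odd-parity external assignment, so it is PC-derivable from $\tsax{v}$ and the Boolean axioms in constant degree by implicational completeness of PC on $O(1)$ variables. Prepending these constant-degree derivations to $\sigma(\pi)$ produces a PC$_\F$ refutation of $\tsf{\gts}$ of degree $O(D)$, forcing $D \geq \Omega(m) = \Omega(n)$.

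The main obstacle is the bookkeeping around the substitution: one must verify that each substituted matching axiom is indeed semantically implied by the corresponding Tseitin axiom (together with the Boolean axioms), so that the cliquewise implicational-completeness step applies, and one must check that polynomial substitution of degree-$O(1)$ polynomials interacts cleanly with the PC multiplication rule, blowing degree up only by a constant factor. The use of $\Char(\F) \neq 2$ is inherited from the underlying Tseitin lower bound and is necessary, since over characteristic $2$ both $\tsf{\gts}$ and $\pmf{\gpm}$ are refutable by Gaussian elimination in constant degree. The matching upper bound $O(n)$ is immediate because every unsatisfiable Boolean polynomial system on $N$ variables admits a PC$_\F$ refutation of degree at most $N+1$, so altogether $\RefuteDeg{PC$_\F$}{\pmf{\gpm}} = \Theta(n)$.
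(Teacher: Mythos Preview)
Your proposal is correct and follows exactly the blow-up reduction the paper describes and attributes to Buss et al.~\cite{BGIP01LinearGaps}; the paper itself does not spell out the PC simulation (it simply cites the result and then works out the analogous SoS and Frege versions), and your outline is the natural PC instantiation. The only cosmetic difference is that the paper's SoS version substitutes indicator polynomials even for the lifted external edges, whereas you use the direct map $x_{\{\vtex[u]{i},\vtex[v]{j}\}}\mapsto y_{\{u,v\}}$; both choices localise each substituted matching axiom to the $d=4$ Tseitin variables at a single vertex and are handled identically by the constant-degree implicational-completeness step.
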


What remains is to check that this reduction also gives Perfect
Matching worst-case lower bounds for Sum-of-Squares and bounded depth Frege.
This straightforward verification is carried out in the following
two sections.

\subsection{Sum-of-Squares}

A nice property of the Sum-of-Squares system is that if the variables
for a formula $\calQ$ can be expressed as well-behaved low-degree
polynomials in the variables of another formula $\calP$ for which a
pseudo-expectation exists, then a pseudo-expectation also exists for
$\calQ$. This property is well-known but let us state and quickly
prove the exact version we need.

\begin{claim}\label{claim:sos-reduction}
  Let $\calP \subseteq \R[x_1, \ldots, x_n]$
  and $\calQ \subseteq \R[y_1, \ldots, y_m]$ be two
  systems of polynomial equations.
  Let $\pedef_\calQ$ be a degree $D$ pseudo-expectation for
  $\calQ$.
  Suppose there is a function
  $f: \set{x_1, \ldots, x_m} \to \R[y_1, \ldots y_n]$,
  mapping the $x$ variables to polynomials in $y$ of degree at most $t$.
  Extend $f$ to polynomials by applying the function to each
  variable individually.
  If $f$ satisfies that
  $\pe[\calQ]{f(r \cdot p)} = 0$,
  for all $p \in \calP$ and
  $r \in \R[x_1, \ldots x_m]$ of degree $\deg(r \cdot p) \le D/t$,
  then
  $\pedef_\calQ \circ f$ is a degree $D/t$ pseudo-expectation for
  $\calP$.
\end{claim}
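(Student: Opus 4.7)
The plan is to verify each of the three defining properties of a pseudo-expectation for the operator $\pedef_\calP := \pedef_\calQ \circ f$, after first noting that the natural extension of $f$ to polynomials (apply $f$ variable-wise, then expand) is a ring homomorphism from $\R[x_1,\ldots,x_m]$ to $\R[y_1,\ldots,y_n]$. In particular $f$ is linear in its input and satisfies $f(p \cdot q) = f(p) \cdot f(q)$ and $f(1) = 1$, and $\deg(f(p)) \le t \cdot \deg(p)$ since each variable is replaced by a polynomial of degree at most $t$. Composition with the linear operator $\pedef_\calQ$ thus yields a linear operator $\pedef_\calP$ on $\R[x_1,\ldots,x_m]$, and its degree domain is the set of polynomials $p$ of degree $\le D/t$, since then $\deg(f(p)) \le D$ is within the degree of $\pedef_\calQ$.

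Next I would check normalization: since $f(1) = 1$ we have $\pedef_\calP[1] = \pedef_\calQ[1] = 1$. For the axiom-vanishing property, fix any $p \in \calP$ and any $r \in \R[x_1,\ldots,x_m]$ with $\deg(r) + \deg(p) \le D/t$; then by the hypothesis of the claim, $\pedef_\calP[r \cdot p] = \pedef_\calQ[f(r \cdot p)] = 0$, which is precisely what is required.

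Finally, for non-negativity of squares, let $s \in \R[x_1,\ldots,x_m]$ with $\deg(s) \le D/(2t)$. Because $f$ is a ring homomorphism, $f(s^2) = f(s)^2$, and $\deg(f(s)) \le t \cdot \deg(s) \le D/2$. The pseudo-expectation $\pedef_\calQ$ is non-negative on squares of polynomials of degree at most $D/2$, so $\pedef_\calP[s^2] = \pedef_\calQ[f(s)^2] \ge 0$, completing the verification.

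The argument is essentially bookkeeping and there is no real obstacle; the only place that requires care is the degree accounting, since a single application of $f$ can multiply the degree by a factor of $t$, which is why the resulting pseudo-expectation is guaranteed only up to degree $D/t$ rather than $D$. The only conceptual point worth stressing is that the variable-wise extension of $f$ is multiplicative, which is what makes $f(s^2) = f(s)^2$ hold and thereby transfers the positivity axiom from $\pedef_\calQ$ to $\pedef_\calP$.
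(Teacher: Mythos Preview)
Your proof is correct and follows essentially the same approach as the paper: both verify the three pseudo-expectation axioms directly, using that the variable-wise extension of $f$ is a ring homomorphism (in the paper's words, ``we apply $f$ individually to each variable'' so $f(s^2)=f(s)^2$). Your version is slightly more explicit about linearity and the degree bookkeeping, but the argument is identical in substance.
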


\begin{proof}
  As $f$ only maps variables we have that
  $\pe[\calQ]{f(1)} = \pe[\calQ]{1} = 1$.
  Also, we need to check that $\pe[\calQ]{f(s^2)} \ge 0$ for
  $s \in \R[x_1, \ldots, x_m]$ of degree $\deg(s) \le D/2t$.
  As we apply $f$ individually to each variable, we can write
  $
  \pe[\calQ]{f(s^2)} =
  \pe[\calQ]{\big(\sum_{t \in s} f(t) \big)^2} \ge 0
  $,
  as $\pedef_\calQ$ is a degree $D$ pseudo-expectation.
\end{proof}

In order to apply Claim~\ref{claim:sos-reduction} with $\calQ = \tsf{\gts}$ 
and $\calP = \pmf{\gpm}$, we need to express each
variable from the Perfect Matching formula as a low degree polynomial
in the Tseitin variables.

Let us recall some notation. For a vertex $v \in \vts$, let $Y_v$
be the set of Tseitin variables corresponding to edges incident to
$v$ and denote by $A_v$ all boolean assignments to $Y_v$ that satisfy
the vertex axiom of $v$, i.e., assignments that
set an odd number of edges to true.
For a Tseitin variable $y_e$, where $e \in \ets$,
let $\lift{y_e} \in \epm$ denote the lifted
edge variable.

With this notation at hand, let us define the function $f$ to use in
Claim~\ref{claim:sos-reduction}. 
Variables that correspond to lifted edges, $x_e = \lift{y_{e'}}$ for some
$e' \in \ets$, are set to 1 if and only if $y_{e'}$ is set to~$1$ and the
variables in $Y_v$ are set according to some assignment in $A_v$:
\begin{align}
  f(x_e) =
  \sum_{
    \substack{
      \alpha \in A_v \\
      \alpha(y_{e'}) = 1
    }
  }
  \ind{Y_v = \alpha} \eqperiod
\end{align}
Note that this is a polynomial of degree $\deg(v) = d$ in the $y_e$'s.
For each assignment $\alpha \in A_v$, set the variables
in $\lift{Y_v}$ according to $\alpha$ and fix a matching $m_\alpha$ on the
vertices in $\lift{v}$ not matched by $\alpha$. For any edge
$e \subseteq \lift{v}$, let
\begin{align}
  f(x_e) = \sum_{\substack{\alpha \in A_v\\ e \in m_\alpha}} \ind{Y_v
    = \alpha} \eqperiod
\end{align}
If we apply $f$ individually to each variable, we
claim that for
$i \in \set{1, \ldots, d}$
and $v \in \vts$ the polynomial
$f(\pmax{\vtex{i}})$ is equal to the Tseitin axiom $\tsax{v}$:
\begin{align}
  f(\pmax{\vtex{i}})
  &= \sum_{e \ni \vtex{i}} f(x_e) - 1\\
  &=
  \sum_{\substack{\alpha \in A_v \\ \alpha(y_{e'})=1}}
  \ind{Y_v = \alpha} +
  \sum_{\substack{\alpha \in A_v \\ \alpha(y_{e'})=0}}
  \ind{Y_v = \alpha} - 1 \\
  &= \tsax{v} \eqcomma
\end{align}
using that the $m_\alpha$ are matchings. The axioms
$\pmax{\vtex{\star}}$ are handled similarly:
\begin{align}
    f(\pmax{\vtex{\star}})
  = \sum_{e \ni \vtex{\star}} f(x_e) - 1
  =
  \sum_{\substack{\alpha \in A_v }}
  \ind{Y_v = \alpha} - 1 
  = \tsax{v} \eqperiod
\end{align}

As $\pedef_{\tsf{\gts}}$ maps
all axioms multiplied by a low degree polynomial to 0, the same holds
for $\pedef_{\tsf{\gts}} \circ f$ and we can thus apply
Claim~\ref{claim:sos-reduction}. 

We conclude that if there is a degree $D$ pseudo-expectation
$\pedef_{\tsf{\gts}}$ for the Tseitin Formula $\tsf{\gts}$, then
there is a degree $D/d$ pseudo-expectation $\pedef_{\pmf{\gpm}}$
for the Perfect Matching formula over the lifted graph $\gpm$. Using
Grigoriev's Tseitin lower bound \cite{gri01xor} we obtain the
following Theorem.

\begin{theorem}\label{thm:worst-sos}
  There are graphs $G$ on an odd number of vertices $n$
  and maximum degree $\Delta(G) = 5$ for which SoS
  requires degree $\Theta(n)$ to refute $\pmf{G}$.
\end{theorem}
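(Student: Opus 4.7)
My plan is to combine Grigoriev's Tseitin degree lower bound with the pseudo-expectation transfer packaged in \cref{claim:sos-reduction}, using the blow-up $\gts \mapsto \gpm$ and the map $f$ already set up in the preceding text. First I would pick $\gts$ to be a $d$-regular bounded-degree expander on an odd number $n'$ of vertices, with $d = 4$. By Grigoriev~\cite{gri01xor}, there is a constant $c > 0$ such that $\RefuteDeg{SoS}{\tsf{\gts}} \ge c n'$; equivalently, there exists a degree-$D$ pseudo-expectation $\pedef_{\tsf{\gts}}$ for $\tsf{\gts}$ with $D = \Omega(n')$. Forming the blow-up $\gpm$ as described yields a graph on $n = (d+1) n' = 5 n'$ vertices (odd, since $n'$ is odd) with maximum degree $d+1 = 5$: the non-star vertices $\vtex{i}$ in each lifted clique $\lift{v}$ have $d$ internal neighbors and one external edge, while $\vtex{\star}$ has only the $d$ internal edges.

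Next I would apply \cref{claim:sos-reduction} with $\calQ = \tsf{\gts}$, $\calP = \pmf{\gpm}$ and the map $f$ defined in the main text, which sends each Perfect Matching variable $x_e$ to a polynomial of degree $d$ in the Tseitin variables supported on sums of indicators $\ind{Y_v = \alpha}$. The crucial identities $f(\pmax{\vtex{i}}) = \tsax{v}$ and $f(\pmax{\vtex{\star}}) = \tsax{v}$ have already been checked in the excerpt; since $\pedef_{\tsf{\gts}}$ annihilates $\tsax{v}$ multiplied by any low-degree polynomial, the hypothesis of \cref{claim:sos-reduction} is satisfied on the vertex axioms. This yields a degree-$D/d = \Omega(n'/d) = \Omega(n)$ pseudo-expectation for $\pmf{\gpm}$, and hence $\RefuteDeg{SoS}{\pmf{\gpm}} = \Omega(n)$ as claimed. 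The matching upper bound $O(n)$ is trivial since the formula itself has degree $2$ in the encoding and any SoS refutation of a degree-$2$ system has degree at most $O(n)$.

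The main (minor) obstacle is handling the remaining Boolean and negation axioms $x_e^2 - x_e$ and $1 - x_e - \bar{x}_e$ uniformly under $f$. For an edge $e$ corresponding to a lifted edge, $f(x_e)$ is a sum of indicators $\ind{Y_v = \alpha}$ over disjoint $\alpha \in A_v$, and these indicators are mutually orthogonal and idempotent modulo the Tseitin Boolean axioms, so $f(x_e^2 - x_e)$ lies in the ideal generated by the Tseitin axioms of $v$ and is killed by $\pedef_{\tsf{\gts}}$ after multiplication by any low-degree polynomial; the same verification works for clique edges $e \subseteq \lift{v}$ using the disjointness of the matchings $m_\alpha$. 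Defining $f(\bar{x}_e) = 1 - f(x_e)$ makes the negation axiom map to $0$ identically. With these routine checks in place the proof is complete.
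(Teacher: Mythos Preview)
Your proposal is correct and follows essentially the same approach as the paper: apply \cref{claim:sos-reduction} with the blow-up map $f$ to transfer Grigoriev's Tseitin pseudo-expectation to $\pmf{\gpm}$, instantiating with a $4$-regular expander to get $\Delta(\gpm)=5$. Your additional verification of the Boolean and negation axioms under $f$ (via orthogonality and idempotence of the indicators modulo the Tseitin Boolean axioms) fills in a detail the paper leaves implicit.
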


\subsection{Bounded Depth Frege}

In this section we intend to prove the following theorem.

\begin{theorem}\label{thm:worst-frege}
  There is a constant $c > 0$ such that the following holds.
  Suppose $D \le \frac{c \log n}{\log \log n}$. Then
  there are graphs $G$ on an odd number of vertices $n$
  and maximum degree $\Delta(G) = 5$ such that
  any depth-$D$ Frege refutation of $\pmf{G}$ requires size
  $\exp(\Omega(n^{c/D}))$.
\end{theorem}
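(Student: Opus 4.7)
The plan is to give a Frege analogue of the SoS substitution argument from the previous subsection. Start with a constant-degree graph $\gts$ on $\Theta(n)$ vertices for which H\aa stad's Tseitin lower bound~\cite{Has20} gives $\RefuteSize{\frege[D]}{\tsf{\gts}} \ge \exp(\Omega(n^{\delta/D}))$ for some constant $\delta > 0$ and all $D \le \delta \log n / \log \log n$. Let $G = \gpm$ be its blow-up, so $|\vpm| = \Theta(n)$ and $\Delta(G) = 5$.

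For each variable $x_e$ of $\pmf{\gpm}$ define a small formula $\phi_e$ over the Tseitin variables using the same recipe as the SoS polynomial $f(x_e)$: for a lifted edge $e = \lift{y_{e'}}$ at $v$, set $\phi_e = \bigvee_{\alpha \in A_v,\, \alpha(y_{e'}) = 1} \ind{Y_v = \alpha}$; and for an internal edge $e \subseteq \lift{v}$, set $\phi_e = \bigvee_{\alpha \in A_v,\, e \in m_\alpha} \ind{Y_v = \alpha}$, where $m_\alpha$ is a fixed matching on the vertices of $\lift{v}$ left unmatched by $\alpha$. Since $\gts$ has constant degree, each $\phi_e$ is a DNF of constant size and depth $2$. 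Given any depth-$D$ Frege refutation $\pi$ of $\pmf{\gpm}$ of size $S$, substituting $\phi_e$ for $x_e$ throughout $\pi$ produces a sequence $\pi'$ of depth at most $D + 2$ and size $O(S)$ whose non-axiom steps are all valid Frege inferences, since the rules of \frege are closed under substitution.

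To turn $\pi'$ into a genuine depth-$(D + O(1))$ Frege refutation of $\tsf{\gts}$, we must derive each substituted Perfect Matching axiom from $\tsax{v}$ (together with the Boolean axioms) in constant size and constant depth. Exactly as in the SoS calculation, for every $\alpha \in A_v$ precisely one edge incident to $\vtex{i}$ is assigned value $1$ under the substitution --- either the lifted edge (when $\alpha(y_{e'}) = 1$) or the unique internal edge determined by $m_\alpha$ (when $\alpha(y_{e'}) = 0$) --- so the substituted axiom $\pmax{\vtex{i}}$ is a Boolean consequence of $\tsax{v}$ on the constant-size set of variables $Y_v$. By completeness of bounded-depth Frege for constant-size tautologies, each such implication has a constant-depth, constant-size derivation, and prepending these yields the desired refutation of $\tsf{\gts}$. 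Applying H\aa stad's lower bound then forces $S \ge \exp(\Omega(n^{c/D}))$ after adjusting the constant $c$ to absorb the additive $O(1)$ in the depth.

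The main subtlety is ensuring that the depth overhead of the reduction is only additive rather than multiplicative. This works because each $\phi_e$ has constant depth (depending only on the constant degree of $\gts$), and substituting a depth-$k$ formula for a variable increases the depth of any formula by at most $k$. The only other step that needs real care is the bounded-depth derivation of the substituted Perfect Matching axioms from the Tseitin axioms, but this reduces to verifying a finite collection of Boolean tautologies on $O(1)$ variables, for which constant-size constant-depth Frege proofs exist by standard completeness.
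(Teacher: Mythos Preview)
Your proposal is correct and follows essentially the same approach as the paper: define the propositional analogue of the SoS substitution $f$ as constant-depth DNFs, substitute into a purported depth-$D$ refutation of $\pmf{\gpm}$ to obtain a depth-$(D+O(1))$ sequence, and then verify that each substituted Perfect Matching axiom follows from the corresponding Tseitin axiom by a constant-size constant-depth derivation (since everything lives over the $O(1)$ variables $Y_v$), before invoking H\aa stad's lower bound. The paper carries out the axiom check slightly more explicitly---showing that the ``at-least-one'' clause $\bigvee_{e \ni \vtex{i}} x_e$ maps to a reordering of $\tsax{v}$ and that each ``at-most-one'' clause $\bar{x}_e \vee \bar{x}_{e'}$ maps to an outright tautology---whereas you appeal directly to completeness over $O(1)$ variables, but the arguments are otherwise identical.
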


As in the previous section we use a function $f$, mapping Perfect
Matching variables to low depth formulas in the Tseitin Variables,
to argue that we can transform a refutation of
$\pmf{\gpm}$ into a refutation of the Tseitin formula
$\tsf{\gts}$. Assuming that this can be done, we use
the following recent result of Håstad about the Tseitin
formula over the grid to obtain \cref{thm:worst-frege}.

\begin{theorem}[\cite{Has20}]
  Suppose that $D \le \frac{\log n}{59 \log \log n}$, then
  any depth-$D$ Frege refutation of the
  Tseitin formula on the $n \times n$ grid requires size
  $\exp(\Omega(n^{1/58(D+1)}))$.
\end{theorem}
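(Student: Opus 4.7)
The plan is to combine the lift reduction already used for Polynomial Calculus and Sum-of-Squares in this appendix with Håstad's size-depth lower bound for Tseitin on a grid, showing that a short low-depth Frege refutation of $\pmf{\gpm}$ would substitute to a short low-depth Frege refutation of $\tsf{\gts}$. Choose $n' = \Theta(\sqrt{n})$ odd and let $\gts$ be the $n' \times n'$ grid (degree $4$), so that $\tsf{\gts}$ is unsatisfiable and, by Håstad's theorem, requires depth-$D'$ Frege refutations of size $\exp(\Omega(n'^{1/58(D' + 1)}))$ whenever $D' \le \log n' / 59 \log \log n'$. The lifted graph $\gpm$ then has $|\vpm| = 5 n'^2 = \Theta(n)$ vertices (odd), maximum degree $5$, and is the graph we will take as $G$.

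I would re-use the substitution $f$ from the SoS section verbatim: each perfect matching variable $x_e$ is mapped to the disjunction of indicator formulas $\ind{Y_v = \alpha}$ over those $\alpha \in A_v$ consistent with $x_e = 1$ (using the fixed clique matchings $m_\alpha$ for intra-lift edges). Since $\gts$ has constant degree $d = 4$, each $f(x_e)$ is a depth-$2$ formula of constant size in the Tseitin variables $Y_v$. Given any depth-$D$ Frege refutation $\pi$ of $\pmf{\gpm}$ of size $S$, apply $f$ uniformly to every formula of $\pi$; substitution commutes with the Frege inference rules, so the resulting sequence $f(\pi)$ has depth at most $D + 2$ and size $O(S)$.

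What remains is to splice, in front of every use in $f(\pi)$ of a translated axiom $f(\pmax{\vtex{i}})$, a short derivation of that formula from the Tseitin axiom $\tsax{v}$. The SoS verification already shows that $f(\pmax{\vtex{i}})$ is a tautological consequence of $\tsax{v}$: for each $\alpha \in A_v$, the matching $m_\alpha$ was chosen precisely so that exactly one $f(x_e)$ incident to $\vtex{i}$ evaluates to true. Since both $f(\pmax{\vtex{i}})$ and $\tsax{v}$ depend on only the $d = 4$ variables $Y_v$, implicational completeness of Frege provides a constant-size, constant-depth derivation (just the truth-table proof). Splicing these in yields a depth-$(D + O(1))$ Frege refutation of $\tsf{\gts}$ of size $O(S)$. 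Håstad's theorem then forces $O(S) \ge \exp(\Omega(n'^{1/58(D + O(1))}))$, which, after substituting $n' = \Theta(\sqrt{n})$ and adjusting constants, gives $S \ge \exp(\Omega(n^{c/D}))$ for a suitable $c > 0$, valid whenever $D \le c \log n / \log \log n$.

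The main obstacle is really bookkeeping rather than a conceptual difficulty: one must verify that the propositional encoding of $\pmax{\vtex{i}}$ used in \cref{sec:formulas}, together with the depth-$2$ translation $f$, genuinely produces formulas that can be derived from $\tsax{v}$ in depth and size independent of $n$, and that the constants absorbed into the $O(1)$ depth overhead and into the passage from Håstad's bound to the form stated here do not spoil the range $D \le c \log n / \log \log n$. Both checks are routine once $d$ is fixed to a constant, but the argument would fail if the original Tseitin degree were allowed to grow with $n$, since both the depth of $f(x_e)$ and the size of the splicing derivations depend on $d$.
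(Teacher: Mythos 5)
You have proved the wrong theorem. The statement in question is Håstad's lower bound for bounded-depth Frege refutations of Tseitin on the $n \times n$ grid, which is cited here as an external result (\cite{Has20}) and is not proved in this paper. Your argument instead establishes \cref{thm:worst-frege}---the worst-case perfect matching Frege lower bound---by invoking the Håstad theorem as a black-box ingredient. As a proof of the target statement this is circular: the lift reduction from Tseitin to perfect matching only transfers hardness outward, from $\tsf{\gts}$ to $\pmf{\gpm}$; it says nothing about why Tseitin on the grid is itself hard for bounded-depth Frege.

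Proving Håstad's theorem requires an entirely different kind of argument that no substitution-based reduction can supply: a multi-switching lemma tailored to the grid, showing that a carefully chosen random restriction simplifies every small bounded-depth formula in the alleged refutation while leaving behind a smaller grid Tseitin instance, iterated to reduce depth until a contradiction with the Tseitin parity invariant is forced. Nothing in the blow-up construction or in the substitution $f$ gets at that machinery. That said, read as a proof of \cref{thm:worst-frege}, your write-up essentially matches the paper's own argument---same lift, same substitution, same constant-size splicing of truth-table derivations for the translated axioms---so the error is a misidentification of the target statement rather than a flaw in the technique itself.
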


In the previous section $f$ mapped to polynomials. As we are now
working with formulas we need to translate the polynomials to
formulas. This is straightforward; reusing
notation from the previous section, let
\begin{align}
  f(x_e) =
  \bigvee_{
    \substack{
      \alpha \in A_v \\
      \alpha(y_{e'}) = 1
    }
  }
  \ind{Y_v = \alpha} \eqcomma
\end{align}
if $x_e = \lift{y_{e'}}$ is a lifted edge. Else let
\begin{align}
  f(x_e) = \bigvee_{\substack{\alpha \in A_v\\ e \in m_\alpha}}
  \ind{Y_v = \alpha} \eqperiod
\end{align}

Suppose there is a depth-$D$ Frege refutation $\pi$ of the Perfect
Matching formula $\pmf{\gpm}$. Replace each occurrence of a Perfect
Matching variable $x_e$ by $f(x_e)$ to obtain a depth-$(D+2)$
refutation $\pi'$.  We claim that $\pi'$ can be massaged into a
refutation of the Tseitin formula $\tsf{\gts}$ of size
$O_d(\Size(\pi))$.

To this end we need to argue that $f$ maps Perfect Matching axioms to
Tseitin Axioms or tautologies that are derivable in small size and
depth. Analoguous to SoS observe that for all $v \in \vts$ and
$i \in \set{1, \ldots, d}$,
\begin{align}
  f(\bigvee_{e \ni \vtex{i}} x_e)
  &= \bigvee_{e \ni \vtex{i}} f(x_e)\\
  &=
  \bigvee_{\substack{\alpha \in A_v \\ \alpha(y_{e'})=1}}
  \ind{Y_v = \alpha} \vee
  \bigvee_{\substack{\alpha \in A_v \\ \alpha(y_{e'})=0}}
  \ind{Y_v = \alpha} \eqperiod
\end{align}
Note that that the final formula is equal to the axiom $\tsax{v}$, up
to a reordering of the terms. As the axioms are over $d$ variables,
which is constant in our case, this formula can be derived from
$\tsax{v}$ in constant depth and size. The axiom for the vertex
$\vtex{\star}$ is handled in a similar manner.

Last we
need to show that the axioms $\bar{x}_e \vee \bar{x}_{e'}$, for
edges $e \neq e' \in \epm$ satisfying $e \cap e' \neq \emptyset$,
are mapped to a
tautology derivable in small size and depth. If we let
$\set{\vtex{i}} = e \cap e'$ we can write
\begin{align}
  f(\bar{x}_e \vee \bar{x}_{e'}) =
  \big( \lnot \bigvee_{\beta \in B} \ind{Y_v = \beta} \big)\vee
  \big( \lnot \bigvee_{\gamma \in C} \ind{Y_v = \gamma} \big) \eqcomma
\end{align}
for disjoint subsets $B, C \subseteq A_v$. Observe that this formula
is a tautology and defined on $d$ variables. Thus it is derivable in
constant depth and size dependent on $d$, which is constant in our
case.

\section{Embedding Algorithm}
\label{sec:pseudocode}



\SetKwComment{tcp}{$\triangleright$\hspace*{5pt}}{}%
\SetInd{24pt}{0pt}
\renewcommand{\CommentSty}[1]{\sffamily\fontsize{12}{18}\selectfont\textcolor{@TCSdarkred}{#1}\unskip}%

\begin{algorithm}
  \SetKwFunction{FixExpansion}{FixExpansion}
    \Fn{$\FixExpansion(G, C, A, \beta)$}{
    \While{$G[C]$ is not a $\beta$-expander}
    {
      $U \gets_\text{any}$ subset of $C$ such that $|U| \le |C|/2$ and $|N(U, C \setminus U)| < \beta |U|$ \;
     $C \gets C \setminus U$ \;
     $A \gets A \cup U$ 
    }
    }
  \caption{Restores $\beta$-expansion of $G[C]$.}
  \label{alg:fix-expansion}
\end{algorithm}

\newcommand{\State}{}
\newcommand{\Statex}{\BlankLine}
\newcommand{\Try}{{\bf Try}}
\newcommand{\Catch}[1]{{\bf Catch} #1}

\begin{algorithm}
\setcounter{AlgoLine}{0}
  \SetKwInOut{Require}{Require}
  \Require{Conditions of \cref{lem:cross}.} 
  \BlankLine
  \caption{Finds an \cross{(r, s)} in an $\beta$-expander $G$ as in
  the proof of \cref{lem:cross}.}
  \label{alg:embed-vertex}
  \SetKwFunction{EmbedVertex}{EmbedVertex}

     \Fn{$\EmbedVertex(G, r, s, \beta, k)$}
    {
    
    \State $\gamma \gets \frac{\beta}{3(1 + \beta)}$ \;
    \State $s \gets \max \set{1/\gamma, s}$ \;
    \State $r' \gets (1 + 1/\gamma)r$ \;
    \State $A, \calB \gets \emptyset; C \gets V(G)$ \;
    
    \While{$|\calB| < r'$}
    {
    \State $U \gets_\text{any}$ subset of $C$ 
    such that $|U| = s$ and $G[U]$ is a single connected component \;
    \State $\calB \gets \calB \cup \set{U}$; $C \gets C \setminus U$ \;
    \State $\FixExpansion(G, C, A, \gamma)$ \;
    
    \State $\mathcal{F} \subseteq \calB$ maximal 
    such that $|\cup_{F \in \mathcal{F}} N(F, C)| < \gamma s |\mathcal{F}|$ \;
    \State $\calB \gets \calB \setminus \mathcal{F}$; $A \gets A \cup_{F \in \mathcal{F}}F$ \;
    }
    \Statex
    \State $v \gets_\text{any} C$ such that $\deg_{G[C]}(v) \ge r'$ \;
    \State $F \gets$ a transversal of $\setdescr{N(B, C)}{B \in \calB}$ \;
    \State $\setdescr{p_i}{i \in [r]} \gets$ from \cref{lem:exp-path-disj} applied
    to $G[C]$, $v$ and $F$ \;
    \State \textbf{return} $\set{v} \cup \setdescr{V(p_i) \cup B_i}{i \in [r]}$ 
    \tcp*[r]{Shrink branches appropriately}
    }
\end{algorithm}

\SetKwFor{ForAll}{for all}{do}{}

\begin{algorithm}
\setcounter{AlgoLine}{0}
  \caption{Remove the embedding of vertex $x$.}
  \label{alg:remove-vertex}
  \SetKwFunction{UnEmbedVertex}{UnEmbedVertex}
    \Fn{$\UnEmbedVertex(A, A', B, H, I, x)$}
    {
    \State $(v, \branches) \gets \embed{x}$ \tcp*[r]{$v$ is the center and $\branches$ are the branches of $\embed{x}$}
    \State $B \gets B \setminus \embed{x}$; $I \gets I \setminus x$ \;
    \State $W \gets \emptyset$ \;
    
    \ForAll{$e \in E(H)$ such that $x \in e$ and $e$ is embedded}
    {
    \State let $U \gets \branches$ be the branch adjacent to $\embed{e}$ \;
    \State $\branches \gets \branches \setminus U$ \;
    \State $B \gets B \setminus \embed{e}$ \;
    \State $W \gets W \cup U \cup \embed{e}$ \;
    }
    \State $A \gets A \cup \branches$ 
    \tcp*[r]{First add to $A$, then to $A'$ to maintain the invariant} 
    \State $A' \gets A' \cup W \cup \set{v}$ \;
    }
\end{algorithm}


\begin{algorithm}
\setcounter{AlgoLine}{0}
  \caption{Embeds $H$ in an $\alpha$-expander $G$ as in the proof of
    \cref{thm:odd-minor}.}
  \label{alg:embed-graph}
  \SetKwFunction{EmbedGraph}{EmbedGraph}
    \Fn{$\EmbedGraph(H, G, \alpha)$}
    {
    \State $\beta \gets \alpha/3(1 + \alpha)$ \;
    \State $A, A', B \gets \emptyset; C \gets V(G)$ \;
    \State $I \gets \emptyset$ \;
    
    \While{$I \neq V(H)$}
    {
    \State $x \gets_{\text{any}} V(H) \setminus I$ \;
    \State $B_x \gets \EmbedVertex(G[C], \deg_H(x), s, \beta,k)$ \;
    \State $C \gets C \setminus B_x$; $B \gets B \cup B_x$; $I \gets I \cup x$ \;
    \State $\FixExpansion(G, C, A, \beta)$ \;
    \Statex
    \State $\bFree(K) \gets$ branches of the cross $K$ 
    that are \emph{not} used to connect to a neighbor \;
    
    \ForAll{$\set{x, y} \in E(H)$ such that $y \in I$}
    {
    \Try \;
    \qquad\State $U_z \gets_{\text{any}} \bFree(\embed{z})$ such that $|N(U_z, C)| \ge \beta |U_z|$ 
    for $z \in \set{x, y}$ \;
    \Catch{no such $U_z$ for $z \in \set{x, y}$} \;
    \qquad\State $\UnEmbedVertex(A, A', B, H, I, z)$ ; \textbf{continue} \;
    
    \State $B_{xy} \gets$ odd path from \cref{lem:odd-even-path} 
    applied to $G[C]$, $N(U_x, C)$ and $N(U_y, C)$ \;
    \State $C \gets C \setminus B_{xy}$; $B \gets B \cup B_{xy}$ \;
    \State $\FixExpansion(G, C, A, \beta)$
    }
    }
    \State \textbf{return} $B$ \;
    }
\end{algorithm}

\end{document}